    \newcommand{\href}[2]{#2}
\newtheorem{theorem}{Theorem}[section]
\newtheorem{corollary}[theorem]{Corollary}
\newtheorem{lemma}[theorem]{Lemma}
\newtheorem{proposition}[theorem]{Proposition}
\newtheorem{remark}[theorem]{Remark}
\numberwithin{equation}{section}  %amsmath command: tie counter to section
  \newcounter{mnote}
  \let\oldmarginpar\marginpar
    \renewcommand\marginpar[1]{\-\oldmarginpar[\raggedleft\footnotesize #1]%
    {\raggedright\footnotesize #1}}
\definecolor{myblue}{rgb}{0.2,0.2,0.7}
\definecolor{mygreen}{rgb}{0,0.6,0}
\definecolor{mycyan}{rgb}{0,0.6,0.6}
\definecolor{myred}{rgb}{0.9,0.2,0.2}
\definecolor{mymagenta}{rgb}{0.9,0.2,0.9}
\definecolor{mywhite}{rgb}{1.0,1.0,1.0}
\definecolor{myblack}{rgb}{0.0,0.0,0.0}
\def\mathbi#1{\textbf{\em #1}}
\newcommand{\beq}{\begin{equation}}
\newcommand{\eeq}{\end{equation}}
\newcommand{\beqa}{\begin{eqnarray}}
\newcommand{\eeqa}{\end{eqnarray}}
\newcommand{\leqs}{\leqslant}      % Nice >= 
\newcommand{\geqs}{\geqslant}      % Nice =<
\renewcommand{\div}{{\operatorname{div}}}
\newcommand{\tiwedge}{\mbox{{\tiny $\wedge$}}}
\newcommand{\tivee}{\mbox{{\tiny $\vee$}}}
\newcommand{\cH}{{\mathcal H}}
\newcommand{\cL}{{\mathcal L}}
\newcommand{\cM}{{\mathcal M}}
\newcommand{\cN}{{\mathcal N}}
\newcommand{\cP}{{\mathcal P}}
\newcommand{\cS}{{\mathcal S}}
\newcommand{\cY}{{\mathcal Y}}
\newcommand{\ttk}{{\tt k}}
\newcommand{\bV}{{\bf V}}
\newcommand{\bW}{{\bf W}}
\newcommand{\bX}{{\bf X}}
\newcommand{\bx}{{\bf x}}
\def\mathbi#1{\textbf{\em #1}}
\newcommand{\biW}{\mathbi{W\,}}
\newcommand{\biw}{\mathbi{w}}
\newcommand{\e}{\epsilon}
\begin{document}

\title[ Non-CMC Solutions on Asymptotically Euclidean Manifolds with Boundary ]{ Non-CMC Solutions to the Einstein Constraint Equations on Asymptotically Euclidean Manifolds with Apparent Horizon Boundaries  }

\author[M. Holst]{Michael Holst}
\email{mholst@math.ucsd.edu}

\author[C. Meier]{Caleb Meier}
\email{c1meier@math.ucsd.edu}

\address{Department of Mathematics\\
         University of California San Diego\\ 
         La Jolla CA 92093}

\thanks{MH was supported in part by 
        NSF Awards~1065972, 1217175, and 1262982.}
\thanks{CM was supported in part by NSF Award~1065972.}

\date{\today}
%\date{July 3, 2013}

\keywords{Einstein constraint equations, weak solutions, asymptotically Euclidean,
non-constant mean curvature, conformal method, manifolds with boundary}

\begin{abstract}
In this article we further develop the solution theory for the Einstein 
constraint equations on an $n$-dimensional, asymptotically Euclidean 
manifold $\cM$ with interior boundary $\Sigma$.
Building on recent results for both the asymptotically Euclidean and compact
with boundary settings, we show existence of far-from-CMC and near-CMC 
solutions to the conformal formulation of the Einstein constraints when
nonlinear Robin boundary conditions are imposed on $\Sigma$, similar to 
those analyzed previously by Dain (2004), by Maxwell (2004, 2005), 
and by Holst and Tsogtgerel (2013) as a model of black holes in various CMC 
settings, and by Holst, Meier, and Tsogtgerel (2013) in the setting of 
far-from-CMC solutions on compact manifolds with boundary.
These ``marginally trapped surface'' Robin conditions ensure that the 
expansion scalars along null geodesics perpendicular to the boundary region 
$\Sigma$ are non-positive, which is considered the correct mathematical
model for black holes in the context of the Einstein constraint equations.
Assuming a suitable form of weak cosmic censorship, the results 
presented in this article guarantee the existence of initial data 
that will evolve into a space-time containing an arbitrary number of 
black holes.
A particularly important feature of our results are the minimal restrictions 
we place on the mean curvature, giving both near- and far-from-CMC results 
that are new. 
\end{abstract}

\maketitle

%\clearpage

\vspace*{-0.5cm}
%{\scriptsize
\tableofcontents
%}
\vspace*{-0.5cm}

%%%%%%%%%%%%%%%%%%%%%%%%%%%%%%%%%%%%%%%%%%%%%%%%%%%%%%%%%%%%%%%%%%%%%%%%%%%%%%
\section{Introduction}
\label{sec:intro}
%%%%%%%%%%%%%%%%%%%%%%%%%%%%%%%%%%%%%%%%%%%%%%%%%%%%%%%%%%%%%%%%%%%%%%%%%%%%%%
In this paper we consider the Einstein constraint equations on an $n$-dimensional, asymptotically Euclidean manifold
$\cM$ with boundary $\Sigma$.  Using the recent work in \cite{DIMM13,CHT13,HoTs10a}, we show
that far-from-CMC and near-CMC solutions exist to the conformal formulation of the Einstein constraints
when nonlinear Robin boundary conditions are imposed on $\Sigma$ similar to those developed in
\cite{SD04,dM05a,HoTs10a}.  These ``marginally trapped surface", Robin conditions ensure that
the expansion scalars along null geodesics perpendicular to the boundary region $\Sigma$ are non-positive.  Therefore,
assuming a suitable form of weak cosmic censorship, the results presented here provide a method
to construct initial data that will evolve into a space-time containing an arbitrary number of black holes.
Moreover, this method imposes very few restrictions on the mean curvature. 

We recall that the Einstein constraint equations on a given manifold $\cM$ take the form
\begin{align}
&\hat{R}-\hat{K}^{ab}\hat{K}_{ab}+\hat{K} =\hat{\rho},\label{eq1:27nov13}  \\
&\hat{D}^a\hat{K}-\hat{D}_b\hat{K}^{ab} = -\hat{J}^a, \label{eq2:27nov13}
\end{align}
where \eqref{eq1:27nov13} is the Hamiltonian constraint and \eqref{eq2:27nov13}
is the momentum constraint.  In the above system, $\hat{R}$ and $\hat{D}$ are
the scalar curvature and connection with respect to the metric $\hat{g}_{ab}$, and
$\hat{K}_{ab}$ and $\hat{K}$ are the extrinsic curvature tensor and its trace.
The above underdetermined system
imposes conditions on initial data $(\cM,\hat{g}_{ab},\hat{K}_{ab})$
for the initial value formulation of Einstein's equation.

In order to obtain solutions to \eqref{eq1:27nov13}-\eqref{eq2:27nov13} satisfying the marginally trapped surface conditions,
we impose boundary conditions on $(\hat{g}_{ab},\hat{K}_{ab})$ over $\Sigma$.
Following the discussion in \cite{dM05a} and \cite{HoTs10a}, a marginally trapped surface is one
whose expansion along the incoming and outgoing orthogonal, null geodesics
is non-positive.  On the boundary $\Sigma$, the expansion scalars are
given by 
\begin{align}\label{eq2:26jun13}
\hat{\theta}_{\pm}= \mp (n-1)\hat{H}+\text{tr}_{\hat{g}}\hat{K}-\hat{K}(\hat{\nu},\hat{\nu}),
\end{align}
where $(n-1)\hat{H} = \text{div}_{\hat{g}}\hat{\nu}$ is the mean extrinsic curvature of $\Sigma$ and $\hat{\nu}$ is
the outward pointing, unit normal vector field to $\cM$.  Therefore,
the surface $\Sigma$ is called a marginally trapped surface if 
$\hat{\theta}_{\pm} \leqs 0$.  See \cite{SD04,dM05a,Wald84} for details.

The problem we are interested in is to obtain solutions to the Einstein constraints for which $\theta_{\pm} \leqs 0$.  In order to
formulate this problem as a determined system, we use the conformal method of Lichnerowicz,
Choquet-Bruhat and York and the boundary conditions developed in \cite{HoTs10a}.
Using the conformal method, one can transform \eqref{eq1:27nov13}-\eqref{eq2:27nov13} into
a determined elliptic system by freely specifying conformal data, which consists of
a Riemannian manifold $(\cM,g)$, a transverse traceless tensor $\sigma$, 
a mean curvature function $\tau$, a non-negative energy density function $\rho$, and
a vector field $J$.  The Einstein constraints then become
\begin{align}
-\Delta \phi + c_n R \phi &+ b_n \tau^2\phi^{N-1}-c_n|\sigma + \cL W|^2\phi^{-N-1} - c_n \rho\phi^{-\frac{N}{2}} = 0, \label{eq3:27nov13} \\
&\Delta_{\mathbb{L}} W  + \frac{n-1}{n} \nabla \tau \phi^N+ J = 0 \label{eq4:27nov13},
\end{align}
where $\phi$ is an undetermined positive scalar and $W$ is an undetermined vector field.  In the above equations,
$R$ is the scalar curvature of $g$, $\cL$ is the conformal Killing operator
defined by 
$$
(\cL W)_{ij} = D_iW_j + D_jW_i - \frac2n \nabla^k W_k g_{ij},
$$
$\nabla$ and $\Delta$ are the connection and Laplacian associated with $g$, and $\Delta_{\mathbb{L}} = -\div\circ \cL$
is the vector Laplacian.  The constants $N$, $c_n$ and $b_n$ are dimensional constants given by
$$
N  = \frac{2n}{n-2}, \quad c_n = \frac{n-2}{4(n-1)}, \quad b_n = \frac{n-2}{4n}.
$$

Combining the the conformal method with the boundary conditions on $\theta_{\pm}$ in \eqref{eq2:26jun13},
one obtains the boundary conditions given in \cite{HoTs10a}.  
In particular,
we will be interested in the case when $\hat{\theta}_- = \theta_- \leqs 0$ is freely specified.  In this
case, the boundary conditions in \cite{HoTs10a} are
\begin{align}
\partial_{\nu}\phi+d_nH\phi+\left(d_n \tau - \frac{d_n}{n-1}\theta_- \right)&\phi^{\frac{N}{2}}-\frac{d_n}{n-1}S(\nu,\nu)\phi^{-\frac{N}{2}} = 0 \quad \text{on $\Sigma$}, \label{eq5:11july13}\\
(\cL \biw)(\nu,\cdot) &= \bV\quad \text{on $\Sigma$}, \label{eq4:8aug13}\\
S(\nu,\nu) =  \bV(\nu) +\sigma(\nu,\nu) &= ((n-1)\tau + |\theta_-|)\psi^N \geqs 0 \quad \text{on $\Sigma$}. \label{eq2:20dec13}
\end{align}
In~\eqref{eq5:11july13}, $H$ is the rescaled
extrinsic curvature for the boundary, $\nu = \phi^{\frac{N}{2}-1}\hat{\nu}$ is the rescaled normal vector field, and
$d_n = \frac{n-2}{2}$ is a dimension dependent constant.  The operators $\partial_{\nu}$ and $\cL$ are defined with respect to the specified metric $g$.  In order to guarantee that $\theta_+ \leqs 0$, the scalar function $\psi$
is chosen so that $\phi \leqs \psi$.  In general, we are interested in solving the coupled conformal system \eqref{eq3:27nov13}-\eqref{eq4:27nov13} with the boundary conditions 
\eqref{eq5:11july13}-\eqref{eq2:20dec13}.  We will refer to the boundary conditions \eqref{eq5:11july13}-\eqref{eq2:20dec13} 
with the added condition that $\phi \leqs \psi$ on $\cM$ 
as  {\bf marginally trapped surface boundary conditions}, or more simply as {\bf marginally trapped surface conditions}.

Our problem can now be expressed as a nonlinear, elliptic system of equations with Robin boundary conditions
that is of the form
\begin{equation}\label{eq1:19dec13}
\begin{aligned}
-\Delta \phi + c_n R \phi + b_n \tau^2\phi^{N-1}-c_n|\sigma + \cL W|^2\phi^{-N-1} - &c_n \rho\phi^{-\frac{N}{2}} = 0 ~~~~\text{on $\cM$}, \\
\partial_{\nu}\phi+d_nH\phi+\left(d_n \tau - \frac{d_n}{n-1}\theta_- \right)\phi^{\frac{N}{2}}-\frac{d_n}{n-1}S(&\nu,\nu)\phi^{-\frac{N}{2}} = 0 ~~~~ \text{on $\Sigma$}, \\
\Delta_{\mathbb{L}} W  + \frac{n-1}{n} \nabla \tau \phi^N+ J &= {\bf 0}  ~~~~~\text{on $\cM$},\\
(\cL \biw)(\nu,\cdot) &= \bV~~~~\text{on $\Sigma$}.
\end{aligned}
\end{equation}
One solves \eqref{eq1:19dec13} for $(\phi,W)$ and then constructs a solution to the constraints from
\begin{equation}\label{eq5:27nov13}
\begin{aligned}
&\hat{g}_{ab} =  \phi^{\frac{4}{n-2}}g_{ab}, \quad \hat{\rho} = \phi^{-\frac32 N +1}\rho, \quad \hat{J} = \phi^{-N}J,\\
&\hat{K}^{ab} = \phi^{2\frac{(n+2)}{(n-2)}}(\sigma + \cL W)^{ab} + \frac{\tau}{n} \phi^{\frac{4}{n-2}} g^{ab}. 
\end{aligned}
\end{equation}
If $\phi \leqs \psi$, the expansion scalars associated with $(\hat{g},\hat{K})$ will satisfy $\theta_{\pm} \leqs 0$.
In this case, $(\hat{g},\hat{K})$ will be a solution to the coupled system which satisfies the marginally trapped
surface conditions.

Boundary value problems similar to \eqref{eq1:19dec13} were first studied in the constant mean curvature or CMC case.
In \cite{SD04} and \cite{dM05a}, Dain and Maxwell proved the existence of apparent horizon
solutions in this setting, with slight variations on the boundary condition \eqref{eq5:11july13}.  Then in \cite{HoTs10a},
Holst and Tsogtgerel assembled a general collection of boundary conditions leading to marginally trapped surfaces that
 included the conditions of Maxwell and Dain, and then proved the existence of solutions to the Lichnerowicz problem on compact manifolds with 
 boundary with simplifications of these condtitions.  
 It is important to note that the conditions in \cite{HoTs10a} imply an additional coupling between $W$ and $\phi$ on the boundary, so even
in the constant mean curvature case, the equations do not decouple. 
Holst and Tsogtgerel intentionally ignored this coupling in order to develop results for the Lichnerowicz equation alone as the first step in a program for the coupled system, and therefore did not construct solutions to the constraints satisfying the marginally trapped surface conditions.
Their work then provided the mathematical framework for \cite{CHT13}, where Holst, Meier, and Tsogtgerel showed that non-CMC solutions to the constraints exist satisfying the marginally trapped surface boundary conditions.

{\bf\em Outline of the Paper.}
The remainder of the paper is organized as follows.
In Section~\ref{sec:notation}, we introduce some basic notation and terminology in order to allow us to give a fairly complete overview of the main results in Section~\ref{sec:main}.
We then develop some further notation and some basic supporting results in Section~\ref{sec:sobolev}.
The criticial barrier (sub- and supersolution) constructions needed for our main results are then given in Section~\ref{sec:barriers}.
The Schauder-based fixed-point framework is outlined in Section~\ref{sec:proof}, followed by a proof of our main far-from-CMC result.
A separate near-CMC result is then given in Section~\ref{sec:implicit}, based on the Implicit Function Theorem rather than a fixed-point argument.
Some supporting results we need that supplement existing literature on this problem are given in Appendix~\ref{sec:app}.

%%%%%%%%%%%%%%%%%%%%%%%%%%%%%%%%%%%%%%%%%%%%%%%%%%%%%%%%%%%%%%%%%%%%%%%%%%%%%%
\section{Asymptotically Euclidean Manifolds and Harmonic Functions}
\label{sec:notation}
%%%%%%%%%%%%%%%%%%%%%%%%%%%%%%%%%%%%%%%%%%%%%%%%%%%%%%%%%%%%%%%%%%%%%%%%%%%%%%

In this section, we introduce some basic notation and terminology in order to
give an overview of the main results in Section~\ref{sec:main}.
We will develop some further notation and some basic supporting
results in Section~\ref{sec:sobolev} before giving the proofs of the
main results in Sections~\ref{sec:barriers}--\ref{sec:implicit}.

%%%%%%%%%%%%%%%%%%%%%%%%%%%%%%%%%%%%%%%
{\bf\em Asymptotically Euclidean Manifolds.}
%%%%%%%%%%%%%%%%%%%%%%%%%%%%%%%%%%%%%%%
An $n$-dimensional, asymptotically Euclidean manifold $(\cM,g)$ is a non-compact Riemannian manifold, 
possibly containing a boundary, that can be decomposed into a compact
set $K$ and a finite number of ends $E_1,\cdots, E_k$. Each $E_j$ is diffeomorphic
to the exterior of a ball in $\mathbb{R}^n$, and on each end
the metric $g$ tends towards the Euclidean metric $g_E$.

To formalize this definition, we recall the definition of the weighted Sobolev space $W^{k,p}_{\delta}(\cM)$ of scalar functions.
(See \cite{Ba86} for an in depth discussion.)
For $k \in \mathbb{N}$, $p \geqs 1$, a given function $u \in W^{k,p}_{\delta}$ if
\begin{align}\label{aeweight}
\|u\|_{W^{k,p}_{\delta}} = \sum_{|\beta| \leqs k} \|r^{\delta -\frac{n}{p}+|\beta|}\partial^{\beta} u \|_{L^p} < \infty.
\end{align}
In the above norm, partial derivatives are taken with respective to a fixed coordinate chart and
$r$ is a smooth positive function that agrees with $|x|$ on each end $E_j$. 
For example, we may take $r(x) = \sqrt{1+D(x,p_0)^2}$, where $D(x,p_0)$ denotes the distance
from $x$ to an arbitrary fixed point $p_0 \in K$.  We will also consider the space of weighted, continuous
functions $C^k_{\delta}(\cM)$, whose norm is
given by
$$
\|u\|_{C^k_{\delta}} = \sum_{|\alpha|\leqs k}\sup_{x\in \cM}(r^{-\delta+|\alpha|}|\partial^{\alpha}u|).
$$
The weighted Sobolev spaces and continuous spaces are related by the continuous embedding
$W^{k,p}_{\delta}\hookrightarrow C^{0}_{\delta}$, which holds if $k > n/p$.  
 
If $T^{a_1a_2,\cdots a_r}_{b_1b_2,\cdots b_s}$ is an $(r,s)$-tensor, we may define the point
value of $T$ by
$$
|T| = (T^{a_1a_2,\cdots a_r}_{b_1b_2,\cdots b_s}T_{a_1a_2,\cdots a_r}^{b_1b_2,\cdots b_s})^{\frac{1}{2}}.
$$
The above norms can then be applied to $|T|$, which allows one to
consider weighted spaces $W^{k,p}_{\delta}(T^r_s\cM)$ of $(r,s)$-tensors.  
In particular, we let $\bW^{k,p}_{\delta} = W^{k,p}_{\delta}(T\cM)$ denote the weighted Sobolev space of
vector fields on $\cM$.

We say that {\em $g$ tends towards $g_{E}$ and is $W^{k,p}_{\delta}$-asymptotically Euclidean} if, for some $\delta < 0$, 
\begin{align}\label{AEmetric}
g-g_{E} \in W^{k,p}_{\delta}.
\end{align}
We note that if $g - A_i^{N-2}g_E \in W^{k,p}_{\delta}(E_i)$ for each
$E_i$, $g$ is also $W^{k,p}_{\delta}$-asymptotically Euclidean
given that it will satisfy \eqref{AEmetric} with an appropriate change
of coordinates (cf \cite{DIMM13}). 
Using these weighted spaces, we define an asymptotically
Euclidean data set.  As in \cite{DIMM13}, we say the data set $(\cM,g,K,\rho,J)$ is asymptotically Euclidean if
for some $\delta < 0$, $g-g_{E} \in W^{k,p}_{\delta}$, $K \in W^{k-1,p}_{\delta-1}$, and
$\rho, J \in W^{k-2,p}_{\delta-2}$.

In the event that $(\cM,g)$ has a boundary $\Sigma$, we
consider the Sobolev spaces $W^{k,p}(\Sigma)$ for $k\in \mathbb{N}$ and $p>1$.
These Banach spaces consist of the set of all
functions $u$ such that
\begin{align}
\|u\|_{k,p;\Sigma} = \sum_{l \le k}\|\nabla^l u\|_{p;\Sigma} < \infty,
\end{align} 
where the connection $\nabla$ and integration are
with respect to the boundary metric induced by $g$.  This
definition can be extended to obtain the fractional order Sobolev spaces
$W^{s,p}(\Sigma)$ with $s \in \mathbb{R}$.
See \cite{HNT07b} for more details, including general results concerning multiplication properties of these spaces.

%%%%%%%%%%%%%%%%%%%%%%%%%%%%%%%%%%%%%%%%%%%%
{\bf\em Asymptotic Limits and Harmonic Functions.}
%%%%%%%%%%%%%%%%%%%%%%%%%%%%%%%%%%%%%%%%%%%%
We will seek solutions $(\phi,W)$ to the conformal equations where $\phi$ has fairly general asymptotic behavior.
The following framework for representing this behavior is a generalization of the approach developed in \cite{DIMM13}, suitable for our needs here.

Given constants $A_1,\cdots, A_k$, we seek solutions
such that $\phi \to A_i$ on each end $E_i$.  Let 
%\mnote{Is this a vector subspace?  I think it is and it is isomorphic to $\mathbb{R}^n$.}
$\mathcal{H}$ denote the space of smooth, harmonic
functions with zero Neumann boundary conditions on $\Sigma$.  By Proposition~\ref{prop1:23dec13}, there exists a unique
$\omega\in \mathcal{H}$ such that $\omega \to A_i$ on $E_i$.  Therefore $\mathcal{H} \cong \mathbb{R}^k$ and  
if $\gamma < 0$, $\phi-\omega \in W^{2,p}_{\gamma}$ implies that $\phi \to A_i$
on each end.  So $\omega$ encodes the asymptotic behavior of $\phi$.

Because we can represent the asymptotic behavior of our solution $\phi$
by an element in $\omega \in \mathcal{H}$, we will seek solutions of the form
$\phi = \omega + u$, where $u \in W^{2,p}_{\gamma}$.
Therefore we define the space
$$
\mathcal{H}+W^{k,p}_{\delta} =\{\omega + u~|~\omega \in \mathcal{H},~u \in W^{k,p}_{\delta}\}.
$$
If $C^0$ denotes the space of continuous functions on $\cM$, 
we note that if $k > n/p$ there exists a compact embedding 
\begin{align}\label{embed}
\mathcal{H}+W^{k,p}_{\delta} \hookrightarrow C^0,
\end{align}
given that $\mathbb{R}^k\oplus W^{k,p}_{\delta} \hookrightarrow \mathbb{R}^k\oplus C^0$ compactly
and $\mathcal{H}+C^0 \subset C^0$.  

We will also need a way to compare the asymptotic limits of two functions $f,g$.
We say $f$ is {\bf asymptotically bounded below} by $g$ if 
$$\lim_{|x|\to \infty}f \geqs \lim_{|x|\to \infty}g \quad \text{ on each $E_i$},$$
and $f$ is {\bf asymptotically bounded above} by $g$ if $g$ is asymptotically bounded below by $f$.  
Finally, given $g\leqs h$ we say that $f$ is {\bf asymptotically
bounded} by $g$ and $h$ if $f$ is asymptotically bounded below by $g$ and asymptotically bounded above by $h$. 
%We now state our main results.  

%%%%%%%%%%%%%%%%%%%%%%%%%%%%%%%%%%%%%%%%%%%%
{\bf\em Yamabe Invariant on Asymptotically Euclidean Manifolds with Boundary.}
%%%%%%%%%%%%%%%%%%%%%%%%%%%%%%%%%%%%%%%%%%%%
To finish the discussion of notation needed for stating our main results, let us recall the definition of the Yamabe invariant on asymptotically Euclidean manifolds $\cM$ with boundary $\Sigma$.
Define the following functional for compactly supported functions $f\in C^{\infty}_c$: 
%\mnote{Move definition of positive Yamabe before statement of Theorems. -cm}
\begin{align}\label{eq1:2dec13}
Q_g(f) = \frac{ \int_{\cM} |\nabla f|^2 + c_n Rf^2~dV + \int_{\Sigma} d_n Hf^2~dA}{\|f\|^2_{L^{\frac{2n}{n-2}}}}.
\end{align}
Then as in \cite{dM05a}, the Yamabe invariant on $\cM$ is
\begin{align}\label{eq2:2dec13}
\cY_g = \inf_{f\in C^{\infty}_c(\cM),~~f\ne 0} Q_g (f).
\end{align}

%%%%%%%%%%%%%%%%%%%%%%%%%%%%%%%%%%%%%%%%%%%%%%%%%%%%%%%%%%%%%%%%%%%%%%%%%%%%%%
\section{Overview of the Main Results}
\label{sec:main}
%%%%%%%%%%%%%%%%%%%%%%%%%%%%%%%%%%%%%%%%%%%%%%%%%%%%%%%%%%%%%%%%%%%%%%%%%%%%%%

The main results for this paper concern the existence of far-from-CMC and near-CMC solutions to the conformal formulation 
of the Einstein constraint equations on an asymptotically Euclidean, $n$-dimensional manifold $\cM$
with compact boundary $\Sigma$.
%\mnote{STILL a Good place for a graphic, showing interior boundaries of holes as one motivation for this framework.  -mjh}
We assume that the boundary consists of $m$ distinct components
\begin{equation}\label{eq6:11july13}
\Sigma =  \cup_{1}^m \Sigma_i , \quad \Sigma_i \cap \Sigma_j = \emptyset.
\end{equation}
Here, each component $\Sigma_i$ represents a marginally trapped surface, and
$\cM$ is an embedded submanifold of some manifold $\cN$. 
We view $\cM$ as the result of excising trapped regions $C_i$ with boundary $\Sigma_i$ from $\mathcal{N}$.
Therefore, the following theorems provide conditions under which we may
obtain solutions to the Einstein constraints outside of the singular trapped regions $C_i$ with
minimal assumptions on the mean curvature $\tau$.

Our first Theorem is a far-from-CMC result in that it places no restrictions on the mean curvature function $\tau$.
However, to compensate for this assumption we require smallness assumptions on the other data.

\begin{theorem}\label{thm1:2dec13}{\bf (Far-From-CMC)}
Suppose that $(\cM,g)$ is asymptotically Euclidean of class $W^{2,p}_{\gamma}$ with $p>n$
and $2-n < \gamma < 0$.  Assume that $2-n < \delta< \gamma/2$, and the
data satisfies:
\begin{itemize}
\item[$\bullet$] $ g\in \cY^+$,
\item[$\bullet$] $\tau \in W^{1,p}_{\delta-1}$,
\item[$\bullet$] $ \sigma \in  W_{\delta-1}^{1,2p}$ with $\|\sigma\|_{L^{\infty}_{\delta-1}}$ sufficiently small,
\item[$\bullet$] $ \rho \in L^{\infty}_{\gamma-2}$ with $\|\rho\|_{L^{\infty}_{\delta-2}}$ sufficiently small,
\item[$\bullet$] $ J \in {\bf L}^{p}_{\delta-2}$ with $\|J\|_{L^{p}_{\delta-2}}$ sufficiently small,
\item[$\bullet$] $ \theta_- \in W^{1-\frac1p,p}(\Sigma)$, ~~~ $\theta_- < 0$,
\item[$\bullet$] $ \bV \in \bW^{1,p}$, ~~~$\bV|_{\Sigma} =\left(((n-1)\tau + |\theta_-|/2)\psi^N-\sigma(\nu,\nu)\right) \nu$,
\item[$\bullet$] $ ((n-1)\tau + |\theta_-|/2) > 0$ and $\|(n-1)\tau + |\theta_-|/2\|_{W^{1-\frac1p,p}(\Sigma)}$ sufficiently small.
\end{itemize}
Then on each end $E_i$ there exists an interval $\mathcal{I}_i\subset (0,\infty)$ such that if $A_i \in \mathcal{I}_i$ are
freely specified constants and $\omega$ is the associated harmonic function, 
there exists a solution $(\phi,W)$ to the conformal equations with boundary conditions \eqref{eq5:11july13}-\eqref{eq2:20dec13}
such that $\phi -\omega \in W^{2,p}_{\gamma}$ and $W \in W^{2,p}_{\delta}$.  Moreover, the function $\psi$ can be chosen 
so that $(\phi,W)$ satisfies the marginally trapped surface boundary conditions.
\end{theorem}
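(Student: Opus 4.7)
The plan is to establish existence by a Schauder fixed-point argument on the Hamiltonian constraint, after pre-solving the momentum constraint at each iterate, in the spirit of the strategies in \cite{CHT13} and \cite{DIMM13}, but adapted to the asymptotically Euclidean setting with apparent horizon boundary. Fix the harmonic function $\omega \in \mathcal{H}$ corresponding to chosen asymptotic constants $A_i$; the intervals $\mathcal{I}_i$ are determined a posteriori by the smallness requirements of the barrier construction.

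\textbf{Momentum solve.} For $\phi$ in an order interval $[\phi_-,\phi_+]\subset \mathcal{H}+W^{2,p}_{\gamma}$ (with $\phi_\pm$ to be constructed below as global sub- and supersolutions), I would solve the linear vector equation
$$
\Delta_{\mathbb{L}} W = -\tfrac{n-1}{n}\nabla\tau\,\phi^N - J \quad\text{in } \cM, \qquad (\cL W)(\nu,\cdot) = \bV \quad\text{on }\Sigma,
$$
where $\bV$ is fixed once and for all by substituting $\psi=\phi_+$ in the formula from the hypothesis. Using the linear theory for $\Delta_{\mathbb{L}}$ on asymptotically Euclidean manifolds with Robin boundary (supplied by Appendix~\ref{sec:app}), one obtains a unique $W(\phi)\in \bW^{2,p}_{\delta}$ together with an estimate of the schematic shape
$$
\|W(\phi)\|_{\bW^{2,p}_{\delta}} \leqs C\left(\|\tau\|_{W^{1,p}_{\delta-1}}\|\phi\|_{C^0}^N + \|J\|_{{\bf L}^p_{\delta-2}} + \|\bV\|_{\bW^{1,p}}\right).
$$

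\textbf{Barriers.} I would construct a global subsolution $\phi_-$ and supersolution $\phi_+$ of the form $\phi_\pm = \alpha_\pm\omega + u_\pm$ with $u_\pm\in W^{2,p}_{\gamma}$ and $\alpha_\pm>0$, exploiting the Yamabe positivity $g\in\cY^+$ to produce a coercive weak formulation. For the supersolution, the smallness of $\|\sigma\|_{L^\infty_{\delta-1}}$, $\|\rho\|_{L^\infty_{\delta-2}}$, $\|J\|_{{\bf L}^p_{\delta-2}}$ and $\|(n-1)\tau + |\theta_-|/2\|_{W^{1-\frac1p,p}(\Sigma)}$ is used to dominate the positive source terms in the Lichnerowicz equation, with the $|\cL W|^2$ contribution controlled via the momentum estimate at $\phi=\phi_+$. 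The boundary supersolution inequality on $\Sigma$ forces the choice $\psi=\phi_+$ in $\bV$, which is exactly why the hypothesis prescribes $\bV|_\Sigma$ with the supersolution built in. For the subsolution, the choice $|\theta_-|/2$ (rather than $|\theta_-|$) in the formula for $\bV|_\Sigma$ produces a positive residual of size $|\theta_-|/2$ in the boundary Robin condition, providing exactly the slack needed to verify the subsolution inequality on $\Sigma$; a uniformly small positive $\alpha_-\omega$ handles the interior inequality.

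\textbf{Fixed point and conclusion.} Define $T:[\phi_-,\phi_+]\to \mathcal{H}+W^{2,p}_{\gamma}$ by $T(\phi)=\phi'$, where $\phi'$ solves the semilinear Hamiltonian boundary-value problem with $W=W(\phi)$ frozen. Monotone iteration within the barrier interval (again using Yamabe positivity) yields a unique $\phi'\in [\phi_-,\phi_+]$, so $T$ sends $[\phi_-,\phi_+]$ into itself. Continuity of $T$ follows from standard elliptic estimates together with continuity of $\phi\mapsto W(\phi)$; compactness of $T$ follows from the compact embedding $\mathcal{H}+W^{2,p}_{\gamma}\hookrightarrow C^0$ of \eqref{embed}. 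Schauder's theorem then produces a fixed point $\phi$, and $(\phi,W(\phi))$ is the desired solution; the inequality $\phi\leqs\psi=\phi_+$ comes for free from the barrier construction, hence $\theta_+\leqs 0$.

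\textbf{Main obstacle.} The hard part is the simultaneous construction of sub- and supersolutions, because the marginally trapped surface conditions \eqref{eq5:11july13}--\eqref{eq2:20dec13} couple $\phi$ to $W$ on $\Sigma$ through $S(\nu,\nu)=\bV(\nu)+\sigma(\nu,\nu)$ and dictate $\psi=\phi_+$ in $\bV$. One must arrange the interior supersolution inequality despite the a priori indefinite sign of $c_n R$, the subsolution inequality on $\Sigma$ where $|\theta_-|$ contributes unfavorably through the $-d_n\theta_-/(n-1)>0$ term in the Robin condition, and the asymptotic compatibility $\phi_\pm\to A_i$ on each end $E_i$. The intervals $\mathcal{I}_i$ and the smallness hypotheses on the free data are calibrated precisely so that all three inequalities close at once, yielding a nonempty barrier interval and thus a nonempty $\mathcal{I}_i$.
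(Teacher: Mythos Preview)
Your proposal is correct and follows the paper's approach closely: a Schauder fixed-point argument on the composite map $\mathcal{G}=i\circ T\circ\mathcal{S}$, with global barriers of the form $\phi_\pm=\alpha_\pm(u+\omega)$ built using $g\in\cY^+$ (Theorems~\ref{thm1:17oct13}, \ref{thm1:24jan13}, \ref{thm2:17oct13}), and the marginally trapped surface condition enforced by taking $\psi=\phi_+$.

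One correction on the role of the $|\theta_-|/2$ choice: the residual it produces is what closes the \emph{supersolution} boundary inequality, not the subsolution one. In Theorem~\ref{thm1:24jan13}, substituting $S(\nu,\nu)=((n-1)\tau+|\theta_-|/2)\phi_+^N$ into \eqref{eq5:11july13} leaves $\beta\lambda+\tfrac{d_n}{2(n-1)}|\theta_-|\,\phi_+^{N/2}>0$, which is why no smallness on $(2(n-1)\tau+|\theta_-|)\psi^N$ is needed in that theorem. For the subsolution (Theorem~\ref{thm2:17oct13}), the boundary inequality closes simply by dropping the favorable term $-\tfrac{d_n}{n-1}S(\nu,\nu)\phi_-^{-N/2}\leqs 0$ and taking $\alpha$ small so that $\alpha^{N/2}(u+\omega)^{N/2}-\alpha(u+\omega)\leqs 0$; the $|\theta_-|/2$ trick plays no role there. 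Also, continuity of $T$ is not entirely standard: the paper establishes it (Proposition~\ref{prop1:18dec13}) via conformal covariance of the Lichnerowicz problem and the Implicit Function Theorem, rather than direct elliptic estimates.
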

\begin{proof}
The proof is given in Section~\ref{sec:proof}.
\end{proof}

The following Theorem complements Theorem~\ref{thm1:2dec13} by showing
that smallness assumptions on $\tau$ replace the need for smallness assumptions
on $\sigma$ and $\rho$.  Given that the proof relies on the Implicit Function Theorem,
solutions will be unique in this case.

\begin{theorem}\label{thm2:2dec13}{\bf  (Near-CMC with $g\in \cY^+$)}
Suppose that $(\cM,g)$ is asymptotically Euclidean of class $W^{2,p}_{\gamma}$ with $p>n$
and $2-n < \gamma < 0$.  Assume that $2-n < \delta< \gamma/2$, and the
data satisfies:
\begin{itemize}
\item[$\bullet$] $g\in \cY^+$,
\item[$\bullet$] $\|\tau\|_{W^{1,p}_{\delta-1}}$ is sufficiently small, and~~~$\tau \geqs 0$ on $\Sigma$,
\item[$\bullet$] $ \sigma \in  W_{\gamma-1}^{1,2p},$
\item[$\bullet$] $ \rho \in L^{p}_{\gamma-2},$
\item[$\bullet$] $ \|J\|_{{\bf L}^{p}_{\delta-2}}$ is sufficiently small,
\item[$\bullet$] $ \theta_- = 0$,
\item[$\bullet$] $ \bV \in \bW^{1,p}$, ~~~$\bV|_{\Sigma} =\left(((n-1)\tau)\phi^N-\sigma(\nu,\nu)\right) \nu$.
\end{itemize}
Then if $A_i \in (0,\infty)$ are freely specified constants on each end $E_i$ and $\omega$ is the associated harmonic
function, there exists 
a unique solution $(\phi,W)$ to the conformal equations with marginally trapped surface boundary conditions
such that $\phi-\omega \in W^{2,p}_{\gamma}$ and $W \in W^{2,p}_{\delta}$. 
\end{theorem}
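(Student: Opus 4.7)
I would prove Theorem~\ref{thm2:2dec13} via the Implicit Function Theorem, treating $(\tau,J)$ as small parameters about the base value $(\tau,J)=(0,0)$. Fix the asymptotic constants $A_i>0$ and the associated harmonic function $\omega\in\mathcal{H}$ produced by Proposition~\ref{prop1:23dec13}; write $\phi=\phi_0+u$ and $W=W_0+Z$ with $(u,Z)\in W^{2,p}_\gamma\times \bW^{2,p}_\delta$, so that $\phi-\omega\in W^{2,p}_\gamma$ automatically once we have a base $\phi_0-\omega\in W^{2,p}_\gamma$. Define
\[
F : W^{2,p}_\gamma\times \bW^{2,p}_\delta\times W^{1,p}_{\delta-1}\times \mathbf{L}^p_{\delta-2} \longrightarrow L^p_{\gamma-2}\times \mathbf{L}^p_{\delta-2}\times W^{1-\frac1p,p}(\Sigma)\times \bW^{1-\frac1p,p}(\Sigma)
\]
by collecting the two interior equations and two boundary conditions of~\eqref{eq1:19dec13}, after eliminating $\bV$ via $\bV|_\Sigma=((n-1)\tau(\phi_0+u)^N-\sigma(\nu,\nu))\nu$. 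A zero of $F(\cdot,\cdot,\tau,J)$ is precisely the solution sought by the theorem, and smoothness of $F$ near $(0,0,0,0)$ follows from the weighted multiplication results of Section~\ref{sec:sobolev}, given the positivity of $\phi_0$.

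\textbf{Base solution at $(\tau,J)=(0,0)$.} A crucial algebraic simplification at $\theta_-=0$ is that the scalar Robin condition $\partial_\nu\phi+d_nH\phi+d_n\tau\phi^{N/2}-\frac{d_n}{n-1}S(\nu,\nu)\phi^{-N/2}=0$ with $S(\nu,\nu)=(n-1)\tau\phi^N$ collapses to $\partial_\nu\phi+d_nH\phi=0$ for \emph{every} $\tau$, since the $\tau$-terms cancel exactly. At $\tau=0$ the vector boundary data $\bV|_\Sigma=-\sigma(\nu,\nu)\nu$ also becomes $\phi$-independent, so the two constraints decouple. I would first solve the linear vector problem
\[
\Delta_{\mathbb{L}}W=0 \text{ on } \cM, \qquad (\cL W)(\nu,\cdot)=-\sigma(\nu,\nu)\nu \text{ on } \Sigma,
\]
for a unique $W_0\in\bW^{2,p}_\delta$, using the Fredholm theory for $\Delta_{\mathbb{L}}$ on AE manifolds with boundary cited in Appendix~\ref{sec:app}. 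With $W_0$ in hand, the reduced Lichnerowicz problem
\[
-\Delta\phi+c_nR\phi-c_n|\sigma+\cL W_0|^2\phi^{-N-1}-c_n\rho\phi^{-N/2}=0, \quad \partial_\nu\phi+d_nH\phi=0 \text{ on }\Sigma,
\]
is solved by the sub/supersolution machinery of Section~\ref{sec:barriers}: the hypothesis $g\in\cY^+$ provides a global positive supersolution above $\omega$, a small positive multiple of $\omega$ gives a subsolution, and monotone iteration produces $\phi_0\in\omega+W^{2,p}_\gamma$ with $\phi_0>0$.

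\textbf{Invertibility of the linearization.} Let $L:=D_{(u,Z)}F(0,0,0,0)$. Because $\tau=0$ kills both the $\nabla\tau\phi^N$ source and the $\phi$-derivatives of $\bV$ and $S(\nu,\nu)$, the operator $L$ is block lower triangular: the vector block $Z\mapsto(\Delta_{\mathbb{L}}Z,(\cL Z)(\nu,\cdot))$ is invertible by the same theory used to build $W_0$, and the scalar block reduces to the linear Robin problem
\[
(-\Delta+V)u = f + 2c_n\phi_0^{-N-1}\langle\sigma+\cL W_0,\cL Z\rangle, \qquad \partial_\nu u+d_nHu=g,
\]
for $u\in W^{2,p}_\gamma$, where
\[
V = c_nR+c_n(N+1)|\sigma+\cL W_0|^2\phi_0^{-N-2}+\tfrac{c_nN}{2}\rho\phi_0^{-N/2-1}\geqs c_nR.
\]
Weighted elliptic theory on AE manifolds with boundary gives Fredholm index zero $W^{2,p}_\gamma\to L^p_{\gamma-2}$, and injectivity follows from the energy identity
\[
\int_\cM\bigl(|\nabla u|^2+Vu^2\bigr)\,dV+d_n\int_\Sigma Hu^2\,dA=0,
\]
since $V\geqs c_nR$ combined with the positivity of the Yamabe invariant~\eqref{eq1:2dec13}--\eqref{eq2:2dec13} forces $u\equiv 0$.

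\textbf{Conclusion and main obstacle.} The IFT then produces a unique $C^1$ branch $(u(\tau,J),Z(\tau,J))$ in a neighbourhood of the origin; the smallness hypotheses on $\|\tau\|_{W^{1,p}_{\delta-1}}$ and $\|J\|_{\mathbf{L}^p_{\delta-2}}$ place the data of the theorem inside this neighbourhood, and $(\phi,W)=(\phi_0+u,W_0+Z)$ is the desired solution with $\phi-\omega\in W^{2,p}_\gamma$. Positivity persists by continuity from $\phi_0>0$, and with $\theta_-=0$ and $\tau\geqs 0$ on $\Sigma$ the prescription $\bV|_\Sigma=((n-1)\tau\phi^N-\sigma(\nu,\nu))\nu$ yields $S(\nu,\nu)\geqs 0$; setting $\psi=\phi$ enforces $\phi\leqs\psi$ trivially, so the marginally trapped surface conditions hold. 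I expect the main obstacle to be rigorously justifying invertibility of the linearized scalar operator on the weighted space: one must justify the energy identity at the full decay range $2-n<\gamma<0$ (the $L^{2n/(n-2)}$ integrability needed to apply $\cY_g$ directly only holds unconditionally when $\gamma<-(n-2)/2$, so a density or bootstrap argument is required for the full range) and one must confirm Fredholm index zero on $W^{2,p}_\gamma$ with this particular Robin boundary condition. Both subtleties depend essentially on the hypotheses $g\in\cY^+$ and $\theta_-=0$, which is precisely why neither can be dispensed with in this argument.
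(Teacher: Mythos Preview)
Your proposal is correct and follows essentially the same route as the paper: an Implicit Function Theorem argument in $(\tau,J)$ about the base point $(0,0)$, exploiting the exact cancellation that collapses the scalar boundary condition to $\partial_\nu\phi+d_nH\phi=0$ when $\theta_-=0$ and $\psi=\phi$, solving the decoupled base problem, and checking that the block--triangular linearization is invertible via Yamabe positivity. The only substantive difference is packaging: the paper obtains the base solution $(\phi_0,W_0)$ by citing \cite{yCBjIjY00} rather than building it from the barriers in Section~\ref{sec:barriers}, and it establishes invertibility of the linearized scalar block by invoking Proposition~\ref{prop1:14oct13} (Maxwell's Fredholm and isomorphism results under $\cY_g>0$) directly rather than through the energy identity you write down---so the decay/bootstrap issue you flag as the ``main obstacle'' is absorbed into that cited result and need not be redone by hand.
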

\begin{proof}
The proof follows from Corollary~\ref{cor1:19feb14} given in Section~\ref{sec:implicit}.
\end{proof}

Our final Theorem states that we may replace the assumption that $g\in \cY^+$ with the
assumption that $R$ and $H$ are bounded from below in terms of $\tau $ and $|\theta_-|$.
In this case smallness assumptions are imposed on $\|\nabla\tau\|_{L^p_{\delta-1}}$ and 
$(2(n-1)\tau+|\theta_-|)$ on $\Sigma$.

\begin{theorem}\label{thm1:12feb14}{\bf  (Near-CMC with bounded $R$ and $H$)}
Suppose that $(\cM,g)$ is asymptotically Euclidean of class $W^{2,p}_{\gamma}$ with $p>n$
and $2-n < \gamma < 0$.  Assume that $2-n < \delta< \gamma/2$, and the
data satisfies:
\begin{itemize}
\item[$\bullet$] $\|\nabla\tau\|_{L^{p}_{\delta-2}}$ is sufficiently small,
\item[$\bullet$] $ \sigma \in  W_{\gamma-1}^{1,2p},$
\item[$\bullet$] $ \rho \in L^{p}_{\gamma-2},$
\item[$\bullet$] $ J \in { \bf L}^{p}_{\delta-2},$
\item[$\bullet$] $ \theta_- \in W^{1-\frac1p,p}(\Sigma)$, ~~~ $\theta_- < 0$,
\item[$\bullet$] $ \bV \in \bW^{1,p}$, ~~~$\bV|_{\Sigma} =\left(((n-1)\tau + |\theta_-|/2)\psi^N-\sigma(\nu,\nu)\right) \nu$,
\item[$\bullet$] $ (2(n-1)\tau + |\theta_-|) > 0$  is sufficiently small on $\Sigma$.
\end{itemize}
Let $A_i \in [1,\infty)$ be freely specified constants on each end $E_i$ and let $\omega$ be the associated harmonic
function.  Then if 
\begin{itemize}
\item $-c_nR \leqs b_n\tau^2$ on $\{x\in\cM:R(x) < 0\}$,

\item $-H \leqs (\tau+|\theta_-|/(n-1))$ on $\{x\in \Sigma:H(x) < 0\}$,

\end{itemize}
there exists a solution $(\phi,W)$ to the conformal equations with boundary conditions \eqref{eq5:11july13}-\eqref{eq2:20dec13}
such that $\phi -\omega \in W^{2,p}_{\gamma}$ and $W \in W^{2,p}_{\delta}$.  Moreover, the function $\psi$ can be chosen 
so that $(\phi,W)$ satisfies the marginally trapped surface boundary conditions.
\end{theorem}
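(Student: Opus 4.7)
The plan is to invoke the Schauder-based fixed-point framework of Section~\ref{sec:proof}, following the outline used for Theorem~\ref{thm1:2dec13}, but with two substantive modifications. First, the smallness assumptions on $\sigma$, $\rho$, $J$ from that theorem are traded for the near-CMC hypothesis that $\|\nabla\tau\|_{L^p_{\delta-2}}$ is small. Second, the Yamabe-positive hypothesis $g\in\cY^+$ is replaced by the pointwise bounds on $R$ and $H$, which will be used to absorb the bad signs of the curvature terms into the $\tau^2$- and $(\tau+|\theta_-|/(n-1))$-terms on the supersolution side. The output of the fixed-point map will be a pair $(\phi,W)\in(\mathcal{H}+W^{2,p}_\gamma)\times \bW^{2,p}_\delta$ with $\phi$ asymptotic to the harmonic function $\omega$ determined by the $A_i$.

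First I would obtain a priori control on $W$ in $\bW^{2,p}_\delta$ for each $\phi\in[\phi_-,\phi_+]$. Solving the momentum constraint with the Robin boundary condition $(\cL W)(\nu,\cdot)=\bV$ via the vector Laplacian isomorphism on asymptotically Euclidean manifolds (Section~\ref{sec:sobolev} and the appendix) gives an estimate of the shape
$$\|W\|_{\bW^{2,p}_\delta}\leqs C\bigl(\|\nabla\tau\|_{L^p_{\delta-2}}\|\phi\|_{L^\infty}^N+\|J\|_{L^p_{\delta-2}}+\|\bV\|_{\bW^{1,p}}\bigr),$$
so the $\phi$-dependent piece is small whenever $\|\nabla\tau\|$ is small and $\phi$ is bounded. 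This is what takes over the role that smallness of $\sigma,\rho,J$ played in Theorem~\ref{thm1:2dec13}.

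Next I would construct global sub- and supersolutions for the Hamiltonian constraint coupled to this $W$. Since each $A_i\geqs 1$ and $\omega$ is harmonic with zero Neumann data on $\Sigma$, the maximum principle forces $\omega\geqs 1$ on $\cM$; I then set $\phi_+=\beta\omega$ with $\beta\geqs 1$ to be chosen. On $\{R\geqs 0\}$ the terms $c_nR\phi_++b_n\tau^2\phi_+^{N-1}$ are manifestly non-negative, while on $\{R<0\}$ the hypothesis $-c_nR\leqs b_n\tau^2$ combined with $\phi_+\geqs 1$ yields
$$c_nR\phi_++b_n\tau^2\phi_+^{N-1}=\phi_+\bigl(c_nR+b_n\tau^2\phi_+^{N-2}\bigr)\geqs \phi_+(c_nR+b_n\tau^2)\geqs 0.$$
An analogous absorption on $\Sigma$ using $-H\leqs \tau+|\theta_-|/(n-1)$ and $\phi_+\geqs 1$ controls the Robin boundary operator. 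Taking $\beta$ large then renders $\phi_+$ a supersolution against the fixed $|\sigma+\cL W|^2$ and $\rho$ terms. For the subsolution, a sufficiently small constant $\phi_-=\alpha\in(0,1)$ works: the $\alpha^{-N-1}$ and $\alpha^{-N/2}$ contributions dominate the bounded positive terms in the bulk, and the boundary subsolution inequality is handled by the smallness of $2(n-1)\tau+|\theta_-|$ on $\Sigma$.

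Finally I would run the Schauder argument exactly as in Section~\ref{sec:proof}: assign to each $\phi\in[\phi_-,\phi_+]$ the unique $W(\phi)$ solving the momentum constraint, then solve the Lichnerowicz equation with the nonlinear Robin data to produce a new $\phi'\in[\phi_-,\phi_+]$; continuity and compactness of this composition follow from the embedding \eqref{embed} and the a priori $W$-estimate, and $\psi$ may be taken equal to $\phi_+$ so that the marginally trapped surface conditions hold. The hard part will be ensuring that the barrier inequalities are valid \emph{uniformly} in $W$ as $\phi$ ranges over $[\phi_-,\phi_+]$; this is precisely where the near-CMC smallness of $\nabla\tau$ (which keeps $\|\cL W\|_{L^\infty}$ below the threshold on which $\beta$ was chosen) and the smallness of $2(n-1)\tau+|\theta_-|$ on $\Sigma$ (which keeps the boundary supersolution inequality intact after the absorption step) become decisive.
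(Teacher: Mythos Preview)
Your overall strategy matches the paper: run the Schauder fixed-point framework of Section~\ref{sec:proof} with global barriers adapted to the new hypotheses, using the near-CMC smallness of $\nabla\tau$ to control $\cL W$ and the pointwise bounds on $R,H$ to absorb the curvature terms. The difficulty is in the specific barrier constructions you propose; as stated, neither $\phi_+=\beta\omega$ nor $\phi_-=\alpha$ works in general.

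For the supersolution: since $\omega$ is harmonic, $-\Delta(\beta\omega)=0$, so the bulk inequality reduces to
\[
c_nR\phi_+ + b_n\tau^2\phi_+^{N-1}\ \geqs\ c_n|\sigma+\cL W|^2\phi_+^{-N-1}+c_n\rho\,\phi_+^{-N/2}.
\]
Your absorption argument only yields $c_nR\phi_++b_n\tau^2\phi_+^{N-1}\geqs 0$; it does not produce a strictly positive lower bound. At any point where $R=0$ and $\tau=0$ (and both decay to zero at infinity, so there is no reason this cannot occur), the left side vanishes while the right side need not, so the inequality fails no matter how large $\beta$ is. The same obstruction appears for the constant subsolution $\phi_-=\alpha$: at a point where $R>0$ but $\sigma+\cL W=0$ and $\rho=0$, the bulk subsolution inequality reads $c_nR\alpha\leqs 0$, which is false. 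Nothing in the hypotheses rules out such points.

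The paper resolves this by letting the Laplacian carry the load: the supersolution is not a harmonic multiple but the solution $\phi_A$ of an auxiliary problem (Lemma~\ref{lem1:14feb14}, Theorem~\ref{thm1:15feb14})
\[
-\Delta\phi_A=c_nA\,\phi_A^{-N-1}+c_n\rho\,\phi_A^{-N/2},\qquad \partial_\nu\phi_A=\tfrac{d_n}{n-1}A\,\phi_A^{-N/2},
\]
with $A=Cr^{2\delta-2}$ chosen so that $A\geqs|\sigma+\cL W|^2$ on $\cM$ and $A\geqs S(\nu,\nu)$ on $\Sigma$; this choice is where the smallness of $\|\nabla\tau\|_{L^p_{\delta-2}}$ and of $(2(n-1)\tau+|\theta_-|)$ on $\Sigma$ actually enters. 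The curvature bounds $-c_nR\leqs b_n\tau^2$ and $-H\leqs\tau+|\theta_-|/(n-1)$, together with $\phi_A\geqs 1$, then handle the remaining terms exactly as you suggested. Likewise, the subsolution is $\phi_-=\alpha u$ with $\alpha\in(0,1)$ and $u$ solving $-\Delta u+c_nRu+b_n\tau^2u^{N-1}=0$ with the analogous Robin condition (Theorem~\ref{thm1:18feb14}); the $R$ and $\tau^2$ terms are absorbed by the equation for $u$, leaving only manifestly nonpositive contributions. The remainder of your outline---the momentum estimate, the continuity of $T$, the choice $\psi=\phi_+$ to enforce the marginally trapped surface condition, and the Schauder fixed point---is essentially what the paper does.
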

\begin{proof}
The proof is given in Section~\ref{sec:proof}.
\end{proof}

\begin{remark}
The conditions that $-c_nR \leqs b_n\tau^2$ on $\{x\in\cM:R(x) < 0\}$ and ${\|\tau+|\theta_-|/(n-1)\|_{1-\frac1p,p;\Sigma}}$ and ${\|\nabla\tau\|_{L^p_{\delta-1}}}$ be sufficiently small place restrictions on the metric $g$.  
Namely, this method might not be applicable for
metrics $g$ which have large, negative scalar curvature.  Similarly, the condition that
$-H \leqs (\tau+|\theta_-|/(n-1))$ on $\{x\in \Sigma:H(x) < 0\}$ and $\tau+|\theta_-|/(n-1)$ be small
on $\Sigma$ imposes conditions on the boundary.  It is possible that these boundedness conditions
on $R$ and $H$ relate to the positive Yamabe condition, however this relationship is not well understood (cf. \cite{yCBjIjY00}).  
\end{remark}

\begin{remark}
In Theorems~\ref{thm1:2dec13}-\ref{thm1:12feb14} we assume the existence of a vector field $\bV \in \bW^{1,p}$ which satisfies
$$V|_{\Sigma} =\left(((n-1)\tau + |\theta_-|/2)\psi^N-\sigma(\nu,\nu)\right) \nu.$$  In the proof
of Proposition~\ref{prop1:20dec13} we explicitly construct a vector field satisfying these assumptions.
\end{remark}

Theorem \ref{thm2:2dec13} follows from a variation of the Implicit Function Theorem
argument developed in \cite{yCBjIjY00}, where one perturbs $\tau$ and $J$ from zero
to obtain a small neighborhood of solutions about a known solution to the decoupled conformal equations.  
The proofs of Theorems~\ref{thm1:2dec13} and \ref{thm1:12feb14} follow from a variation of the
Schauder fixed point argument developed in \cite{HNT07a,HNT07b} for compact manifolds.  This approach
was adapted to asymptotically Euclidean manifolds in \cite{DIMM13}, and we use a variation of
that argument.  We briefly outline this method of proof below.

For a fixed $W$, we let $\mathcal{N}(\phi,W)$ denote the Lichnerowicz operator on the left of \eqref{eq3:27nov13} with boundary
operator on the left of \eqref{eq5:11july13}.  In this notation, a solution to the coupled system \eqref{eq1:19dec13} satisfies
$\mathcal{N}(\phi,\mathcal{S}(\phi)) = 0$, where $\mathcal{S}(\phi)$ denotes the solution to the momentum constraint \eqref{eq4:27nov13} with boundary conditions \eqref{eq4:8aug13}
for a given $\phi$.

A rough outline of the Schauder fixed point argument is as follows.
If $C_+^0$ denotes the spaces of positive, continuous functions,
for a given $\phi \in \mathcal{C}^0_+$ and $\psi \in W^{2,p}(\Sigma)$ let
$W=\mathcal{S}(\phi)$ denote the momentum constraint solution map
with boundary condition \eqref{eq4:8aug13}-\eqref{eq2:20dec13}.
Similarly, for a given $W \in W^{2,p}_{\delta}$ and sub-and supersolutions $\phi_- \leqs \phi_+$, 
Theorems~\ref{thm1:23jan14} and \ref{uniq} in Appendix~\ref{sec:app} imply that for a given
$\omega \in \cH$ which is asymptotically bounded by $\phi_-$ and $\phi_+$, there exists
a unique solution to $\mathcal{N}(\phi,W) = 0$ such that $\phi - \omega \in W^{2,p}_{\gamma}$. 
Therefore we let $\phi = T(W)$ denote the Hamiltonian constraint solution map
with boundary conditions \eqref{eq5:11july13}.  
If $\it{i}$ is the compact
embedding $\mathcal{H}\oplus W^{2,p}_{\gamma}\hookrightarrow C^0$ defined
in \eqref{embed},
then we set 
\begin{align}\label{fixpt}
\mathcal{G}(\phi) = \it{i}(T(\mathcal{S}(\phi))).
\end{align}
A solution to the coupled system with the specified boundary conditions will be a fixed point of this map.  In order to apply the Schauder fixed point
argument in \cite{HNT07b}, we must show that this map is compact and invariant on a certain subset of $C^0_+$.

The primary difficulty in applying this fixed point argument is in constructing the closed, bounded, and convex subset
of $C^0$ on which the map $\mathcal{G}(\phi)$ is invariant.  The construction of this set requires global
sub- and supersolutions $\phi_-$ and $\phi_+$ of the Hamiltonian constraint, and once these are obtained
the process is fairly straightforward.  See \cite{HNT07b,CHT13,DIMM13}.  In Section~\ref{sec:barriers}, we
will construct global sub-and super-solutions for the Hamiltonian constraint and in Section~\ref{sec:proof} we use this
framework to prove Theorems~\ref{thm1:2dec13} and \ref{thm1:12feb14}.  Then in Section~\ref{sec:implicit} we use 
the Implicit Function Theorem to obtain the near-CMC results in Theorem~\ref{thm2:2dec13}.

%%%%%%%%%%%%%%%%%%%%%%%%%%%%%%%%%%%%%%%%%%%%%%%%%%%%%%%%%%%%%%%%%%%%%%%%%%%%%%
\section{Properties of Linear Operators on Weighted Sobolev Spaces}
\label{sec:sobolev}
%%%%%%%%%%%%%%%%%%%%%%%%%%%%%%%%%%%%%%%%%%%%%%%%%%%%%%%%%%%%%%%%%%%%%%%%%%%%%%

Using the definition of Yamabe invariant given in~\ref{eq1:2dec13}, we compile some useful facts from \cite{dM05a} about the operators
$$\cP_1 = (-\Delta + c_n R, \partial_{\nu} + d_nH) \quad \text{ and} \quad \cP_2 = (-\Delta_{\mathbb{L}}, B),$$
where $BW = \cL W(\nu,\cdot)$. 
In the following proposition, we summarize the properties of both $\cP_1$ and $\cP_2$.  We write
$L^p_{\delta-2}\times W^{1-\frac1p,p}$ to indicate $L^p_{\delta-2}(\cM)\times W^{1-\frac1p,p}(\Sigma)$
in the case of $\cP_1$ and $L^p_{\delta-2}(T\cM)\times W^{1-\frac1p,p}(T\Sigma)$ in the case of $\cP_2$.

\begin{proposition}\label{prop1:14oct13}
Suppose $(\cM,g)$ is asymptotically Euclidean of class $W^{k,p}_{\gamma}$ with $k \geqs 2$, $k > n/p$, and $2-n < \delta < 0$.
Then $\cP_i: W^{2,p}_{\delta} \to L^p_{\delta-2}\times W^{1-\frac1p,p}$ is Fredholm with index zero.  Moreover, if $\cY_g > 0$, then $\cP_1$ is an 
isomorphism and if $p>n$ or $\cM$ possesses no conformal Killing fields, then $\cP_2$ is an isomorphism.  Finally, if $\cP_i$ is
an isomorphism and $\cP_i v = (f,g) \in L^p_{\delta-2}\times W^{1-\frac1p,p}$, then there exists $C  > 0$ such that the following estimate is satisfied:
\begin{align}\label{eq3:2dec13}
\|v\|_{W^{2,p}_{\delta}} \leqs C\left( \|f \|_{L^p_{\delta-2}} + \| g \|_{W^{1-\frac1p,p}}\right).
\end{align}
\end{proposition}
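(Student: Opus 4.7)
The plan is to follow the strategy of Maxwell \cite{dM05a}, adapting the arguments from the compact-with-boundary setting to the asymptotically Euclidean-with-boundary setting. Since the metric $g$ is $W^{k,p}_\gamma$-asymptotically Euclidean with $\gamma<0$, the lower order coefficients $c_n R$ and $d_n H$ (in the boundary operator) decay at infinity, so on each end $E_i$ the operator $\cP_i$ is a compact perturbation (in the weighted sense) of the constant coefficient operator $(-\Delta, \partial_\nu)$ on $\R^n$ with a compact inner boundary. The restriction $2-n<\delta<0$ is precisely the range for which the flat model operator acts as an isomorphism between the corresponding Bartnik spaces. This is exactly the setting where the parametrix construction of McOwen/Bartnik/Lockhart--McOwen applies; splicing the interior parametrix with boundary elliptic regularity for the Robin boundary problem on a neighborhood of the compact $\Sigma$ produces a bounded approximate inverse modulo compact operators, giving the Fredholm property. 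The index is zero by a homotopy argument: deform the coefficients continuously (keeping ellipticity, the Shapiro--Lopatinskii condition for the boundary operator, and the asymptotic behavior) to the pure Laplacian with Neumann condition, which has index zero by Proposition~\ref{prop1:23dec13} or by direct computation.

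For injectivity of $\cP_1$ under $\cY_g>0$, I would take $v\in W^{2,p}_\delta$ with $\cP_1 v=0$, multiply by $v$, and integrate by parts. Because $\delta<0$ and $p>n/k$, the decay of $v$ and $\nabla v$ is fast enough to kill all boundary terms at infinity, yielding
\begin{equation*}
\int_\cM \bigl(|\nabla v|^2 + c_n R v^2\bigr)\, dV + \int_\Sigma d_n H v^2\, dA \;=\; 0.
\end{equation*}
A density argument (approximating $v$ by compactly supported functions in $W^{1,2}$ using cutoffs, which is permissible in the AE weighted setting for $\delta$ in our range) combined with the definition \eqref{eq2:2dec13} forces $\|v\|_{L^{2n/(n-2)}}=0$, hence $v\equiv 0$. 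Since $\cP_1$ is Fredholm of index zero, injectivity gives surjectivity, and $\cP_1$ is a continuous bijection; the estimate \eqref{eq3:2dec13} is then immediate from the bounded inverse theorem.

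For injectivity of $\cP_2$, suppose $W\in W^{2,p}_\delta$ lies in the kernel. Pairing $\Delta_{\mathbb L} W=-\div(\cL W)$ with $W$ and integrating by parts on an exhaustion of $\cM$, the boundary condition $\cL W(\nu,\cdot)=0$ on $\Sigma$ makes the inner boundary contribution vanish, while the outer boundary contribution vanishes by the decay of $W$ and $\cL W$ (using $p>n/k$ and $\delta<0$). This yields $\int_\cM |\cL W|^2\, dV = 0$, so $W$ is a conformal Killing field on $\cM$. If $\cM$ has no nontrivial conformal Killing fields, we are done; if instead $p>n$, then $W\in W^{2,p}_\delta\hookrightarrow C^0_\delta$, so $W\to 0$ at infinity. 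A conformal Killing field vanishing at infinity on an AE manifold must vanish identically (standard unique continuation / explicit analysis on each end using that the space of conformal Killing fields on $\R^n$ is finite dimensional and none decay). Fredholm index zero again promotes injectivity to bijectivity, and the estimate follows as before.

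The main obstacle is the first step: verifying that the Fredholm theory on weighted Sobolev spaces over AE manifolds, which is classical in the boundaryless case, extends cleanly to incorporate the nontrivial Robin-type boundary conditions on the compact interior boundary $\Sigma$. Ellipticity of the boundary conditions in the sense of Shapiro--Lopatinskii for both $(\partial_\nu+d_nH)$ and $B=\cL(\cdot)(\nu,\cdot)$ must be checked so that the local boundary regularity patches match the global weighted estimates on the ends; this is done in \cite{dM05a} in the compact setting and the AE extension is essentially a matter of combining those boundary estimates with the standard weighted interior theory, but the bookkeeping is delicate. Once the Fredholm and index-zero statements are in hand, the isomorphism and estimate parts are straightforward.
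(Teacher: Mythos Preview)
Your proposal is correct and follows exactly the route the paper takes: the paper's proof consists entirely of the sentence ``See Proposition 1, Proposition 3, Proposition 6 and Theorem 3 in \cite{dM05a},'' so you have in fact reconstructed the outline of Maxwell's arguments that the authors simply cite. Your sketch of the Fredholm theory via weighted parametrices on the ends patched with boundary elliptic regularity near $\Sigma$, the integration-by-parts injectivity arguments using $\cY_g>0$ for $\cP_1$ and the conformal-Killing-field analysis for $\cP_2$, and the bounded inverse theorem for the estimate, are precisely what those cited results contain.
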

\begin{proof}
See Proposition 1, Proposition 3, Proposition 6 and Theorem 3 in \cite{dM05a}.
\end{proof}

With Proposition~\ref{prop1:14oct13} in hand, we can now prove the following important estimate in the case when $k=2$.
This result is based on a similar estimate in \cite{Dilt13b}.

\begin{proposition}\label{prop1:17oct13}
Suppose $(\cM,g)$ is asymptotically Euclidean of class
$W^{2,p}_{\gamma}$ with $n < p $,
and let $r$ be the function defined in
\eqref{aeweight}.
Then for a given $\phi \in L_+^{\infty}$, if $W \in W^{2,p}_{\delta}$ is
solves the momentum constraint \eqref{eq4:27nov13} with boundary conditions
\eqref{eq4:8aug13}, where
$2-n \leqs \delta<0$, there exists
$C>0$ such that the following estimate holds:
\begin{align}\label{eq2:14oct13}
\|\cL W\|_{\infty} \leqs Cr^{\delta-1}\left(\|\nabla \tau\|_{L^p_{\delta-2}}\|\phi\|^N_{\infty}+\|J\|_{L^p_{\delta-2}}+ \|\bV\|_{W^{1-\frac1p,p}(T\Sigma)} \right)
\end{align}
\end{proposition}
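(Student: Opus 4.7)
The plan is to combine the linear elliptic isomorphism estimate from Proposition~\ref{prop1:14oct13} applied to $\cP_2 = (-\Delta_{\mathbb{L}}, B)$ with the weighted Sobolev embedding $W^{1,p}_{\delta-1}\hookrightarrow C^0_{\delta-1}$ available when $p>n$, so as to pass from a global $W^{2,p}_\delta$ control of $W$ to the desired weighted pointwise bound on $\cL W$.

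First, because $p>n$ and $2-n\leqs\delta<0$, Proposition~\ref{prop1:14oct13} gives that $\cP_2$ is an isomorphism and \eqref{eq3:2dec13} applied to $W$ yields
$$\|W\|_{W^{2,p}_\delta} \leqs C\left(\big\|\tfrac{n-1}{n}\nabla\tau\,\phi^N + J\big\|_{L^p_{\delta-2}} + \|\bV\|_{W^{1-\frac1p,p}(T\Sigma)}\right).$$
Next I would use $\phi\in L^\infty_+$ together with the elementary weighted multiplication bound $\|f g\|_{L^p_{\delta-2}}\leqs \|f\|_\infty\|g\|_{L^p_{\delta-2}}$, which is immediate from \eqref{aeweight}, to estimate $\|\nabla\tau\,\phi^N\|_{L^p_{\delta-2}} \leqs \|\phi\|_\infty^N\|\nabla\tau\|_{L^p_{\delta-2}}$. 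This produces a $W^{2,p}_\delta$ bound on $W$ in terms of exactly the data combination appearing on the right-hand side of \eqref{eq2:14oct13}.

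Second, since $\cL$ is a first-order differential operator with coefficients controlled by the asymptotically Euclidean metric, $\cL W\in W^{1,p}_{\delta-1}(T^2_0\cM)$ with the bound $\|\cL W\|_{W^{1,p}_{\delta-1}}\leqs C\|W\|_{W^{2,p}_\delta}$. The hypothesis $p>n$ gives $1>n/p$, so the weighted Sobolev embedding $W^{1,p}_{\delta-1}\hookrightarrow C^0_{\delta-1}$ recorded in Section~\ref{sec:notation} applies and yields, for every $x\in\cM$,
$$|\cL W(x)|\leqs r(x)^{\delta-1}\|\cL W\|_{C^0_{\delta-1}}\leqs Cr(x)^{\delta-1}\|W\|_{W^{2,p}_\delta}.$$
Chaining with the previous estimate produces \eqref{eq2:14oct13}.

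I do not expect a serious obstacle. The only point worth being careful about is that the weighted Sobolev embedding is stated in Section~\ref{sec:notation} for scalar functions; its extension to the tensor space $W^{1,p}_{\delta-1}(T^2_0\cM)$ is automatic because the weighted tensor norms are defined by applying the scalar weighted norms to the pointwise magnitude $|T|$, so no new analytic input is required beyond the scalar version.
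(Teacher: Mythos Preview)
Your proposal is correct and follows essentially the same approach as the paper: apply the isomorphism estimate \eqref{eq3:2dec13} for $\cP_2$ to bound $\|W\|_{W^{2,p}_\delta}$, pull out $\|\phi\|_\infty^N$ from $\|\nabla\tau\,\phi^N\|_{L^p_{\delta-2}}$, use that $\cL$ is first order to pass to $W^{1,p}_{\delta-1}$, and then invoke the embedding $W^{1,p}_{\delta-1}\hookrightarrow C^0_{\delta-1}$ together with the definition of the $C^0_{\delta-1}$ norm. Your remark about extending the scalar embedding to tensors is a reasonable clarification that the paper leaves implicit.
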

\begin{proof}
By Proposition~\ref{prop1:14oct13} we have
\begin{align}\label{eq4:14oct13}
\|W\|_{W^{2,p}_{\delta}} \leqs c\left(\|\nabla \tau\phi^N\|_{L^p_{\delta-2}}+ \|J\|_{L^{p}_{\delta-2}} + \|\bV\|_{W^{1-\frac1p,p}(T\Sigma)}\right).
\end{align}
The continuous embedding $W^{1,p}_{\delta-1} \hookrightarrow C^{0}_{\delta-1}$ implies that
$$
\|\cL W\|_{C^0_{\delta-1}} \leqs C_1\|LW\|_{W^{1,p}_{\delta-1}}\leqs C_2\|W\|_{W^{2,p}_{\delta}},
$$
and combining this with estimate \eqref{eq4:14oct13} we have
\begin{align}
 \|\cL W\|_{C^0_{\delta-1}}\leqs C\left(\|\phi \|^N_{\infty}\|\nabla \tau\|_{L^{q}_{\delta-2}}+ \|J\|_{L^{p}_{\delta-2}}+ \|\bV\|_{W^{1-\frac1p,p}(T\Sigma)} \right).
\end{align}
The above estimate and the definition of the $C^0_{\delta-1}$ norm imply the result.
\end{proof}

Propositions~\ref{prop1:14oct13} and \ref{prop1:17oct13} will be essential in determining our global barriers.  
In particular, Proposition~\ref{prop1:17oct13} is our primary tool to control the point-wise values of the solution of the
momentum constraint $W= \mathcal{S}(\phi)$ in terms of $\phi$.  This will be vital when we construct our global supersolution
in the next section.

%%%%%%%%%%%%%%%%%%%%%%%%%%%%%%%%%%%%%%%%%%%%%%%%%%%%%%%%%%%%%%%%%%%%%%%%%%%%%%
\section{Barriers for the Hamiltonian constraint}
\label{sec:barriers}
%%%%%%%%%%%%%%%%%%%%%%%%%%%%%%%%%%%%%%%%%%%%%%%%%%%%%%%%%%%%%%%%%%%%%%%%%%%%%%

A critical component of fixed-point arguments for nonlinear elliptic equations are the development of {\em a priori} estimates, and/or sub- and supersolutions.
These so-called barriers are an essential component for building the $\mathcal{G}$-invariant set necessary for our fixed point argument, where we
recall that $\mathcal{G}$ is the nonlinear fixed point operator defined in \eqref{fixpt}.
Therefore, in this section we will develop several global sub-and supersolution constructions for the Hamiltonian constraint equation \eqref{eq3:27nov13}. 

If $\gamma$ is the trace operator associated with $\Sigma$, we define the 
operators
\begin{align}
A_L(\phi) = &\left( \begin{array}{c} -\Delta \phi +c_na_R \phi \\ \gamma (\partial_{\nu}\phi) + d_nH(\gamma\phi)  \end{array}\right)\nonumber,\\
F(\phi,W) = &\left( \begin{array}{c} b_n \tau^2\phi^{N-1}-c_n |\sigma + \cL W|^2 \phi^{-N-1}-c_n\rho\phi^{-\frac{N}{2}} \\ \left(d_n\gamma\tau- \frac{d_n}{n-1}\theta_-\right)(\gamma(\phi))^{\frac{N}{2}}-\frac{d_n}{n-1}S(\nu,\nu)(\gamma(\phi))^{-\frac{N}{2}}\end{array}\right). \nonumber
\end{align}
The Hamiltonian constraint with boundary conditions
\eqref{eq5:11july13}-\eqref{eq2:20dec13} can be written succinctly as 
\begin{eqnarray}
\cN(\phi,W) = A_L(\phi) + F(\phi,W) & = & 0. 
\label{E:ham-abstract} 
\end{eqnarray}
Using this notation, we recall that for a given vector field $W$, if the functions $\phi_-$ and $\phi_+$
satisfy
$$
\cN(\phi_-,W) \leqs 0 \quad \text{and} \quad \cN(\phi_+,W) \geqs 0,
$$
then $\phi_-$ is called a subsolution and $\phi_+$ is a supersolution.

As in \cite{HNT07b}, to obtain a fixed point of the coupled conformal equations we
require a slightly more restrictive class of sub- and supersolutions.  If $W=W(\phi)$ denotes the solution of the
momentum constraint for a given $\phi$ and
$$
\cN(\phi_-,W(\phi)) \leqs 0\quad \text{for all }\quad \phi \geqs \phi_-,
$$
then $\phi_-$ is a {\bf global subsolution}.  Similarly, $\phi_+$ is a { \bf global supersolution} if
$$
\cN(\phi_+,W(\phi)) \geqs 0 \quad \text{for all }\quad \phi \leqs \phi_+.
$$

In the following discussion we will require that when the vector field $ W \in \biW^{2,p}_{\delta}$ is given by the solution of the momentum constraint equation \eqref{eq4:27nov13} with the source term $\phi\in L^{\infty}$,
\begin{equation}\label{CS-aLw-bound}
|\cL W|^2 \leqs 
r^{2\delta -2}(\ttk_1 \, \|\phi\|_{\infty}^{2N} + \ttk_2),
\end{equation}
with some positive constants $\ttk_1$ and $\ttk_2$.
The following proposition justifies this bound.
\begin{proposition}\label{prop1:5dec13}
Let the assumptions of Proposition~\ref{prop1:17oct13} hold with $p \in (n,\frac{2\alpha+1}{2})$, $\alpha > n$, and $s \in (1+\frac{(n-1)}{p}-\frac{(n-1)}{\alpha}, 1+\frac{(n-1)}{p})$. 
Suppose that $W$ satisfies the momentum constraint \eqref{eq4:27nov13} with boundary conditions
\eqref{eq4:8aug13}-\eqref{eq2:20dec13}, where $\bV(\nu) = ((n-1) \tau + |\theta_-|/2)\psi^N~~$.
Then $W$ satisfies the bound \eqref{CS-aLw-bound} with
\begin{align}\label{e:k1k2}
&\ttk_1 = 2 C_1^2 \|\nabla \tau\|_{L^p_{\delta-2}}^2,\\
&\ttk_2 = 2C_2^2\left(\|J\|_{L^p_{\delta-2}}+\|\sigma(\nu,\nu)\|_{1-\frac{1}{p},p;\Sigma} \right. \nonumber \\
&\qquad + \left. \|((n-1)\tau+|\theta_-|/2)\|_{1-\frac{1}{p},p;\Sigma}\|\psi\|^{N-1}_{\infty}\|\psi\|_{s,p;\Sigma}\right)^2. \nonumber
\end{align}
\end{proposition}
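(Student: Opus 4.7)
The plan is to reduce everything to the pointwise estimate provided by Proposition~\ref{prop1:17oct13}, and then to unfold the boundary data norm $\|\bV\|_{W^{1-\frac1p,p}(T\Sigma)}$ using the explicit form of $\bV|_\Sigma$ together with a Sobolev multiplication inequality on $\Sigma$.

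First I would apply Proposition~\ref{prop1:17oct13}. The estimate there is really a bound in the weighted norm $C^0_{\delta-1}$, hence it is a pointwise estimate of the form
\begin{equation*}
|\cL W(x)| \;\leqs\; C\, r(x)^{\delta-1}\Bigl(\|\nabla\tau\|_{L^p_{\delta-2}}\|\phi\|_\infty^N + \|J\|_{L^p_{\delta-2}} + \|\bV\|_{W^{1-\frac1p,p}(T\Sigma)}\Bigr),
\end{equation*}
valid for every $x\in\cM$. Squaring and using $(a+b)^2\leqs 2a^2+2b^2$ isolates the $\|\phi\|_\infty^{2N}$ part, which immediately gives the $\ttk_1$ from \eqref{e:k1k2} with the constant $2C_1^2\|\nabla\tau\|_{L^p_{\delta-2}}^2$. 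What remains is to show that the rest of the right-hand side is bounded by $\ttk_2$.

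The only nontrivial step is to estimate $\|\bV\|_{W^{1-\frac1p,p}(T\Sigma)}$. Using the boundary data prescription $\bV|_\Sigma=\bigl(((n-1)\tau+|\theta_-|/2)\psi^N-\sigma(\nu,\nu)\bigr)\nu$, linearity of the trace/Sobolev norm and the triangle inequality reduce matters to bounding the scalar product $((n-1)\tau+|\theta_-|/2)\psi^N$ in $W^{1-\frac1p,p}(\Sigma)$, since $\sigma(\nu,\nu)$ already appears in the stated form. For this I would invoke the Sobolev multiplication theorem on the $(n-1)$-dimensional submanifold $\Sigma$: with the parameter range $p>n$, $\alpha>n$, and $s\in\bigl(1+\tfrac{n-1}{p}-\tfrac{n-1}{\alpha},\,1+\tfrac{n-1}{p}\bigr)$, the pairing $W^{1-\frac1p,p}\cdot W^{s,p}\hookrightarrow W^{1-\frac1p,p}$ is continuous on $\Sigma$ (the strict upper bound on $s$ keeps the first factor as the low-regularity factor, and the lower bound supplies the compensating Sobolev room on the $\alpha$-scale). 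Applying this gives
\begin{equation*}
\|((n-1)\tau+|\theta_-|/2)\,\psi^N\|_{1-\frac1p,p;\Sigma}
\;\leqs\; C\,\|(n-1)\tau+|\theta_-|/2\|_{1-\frac1p,p;\Sigma}\,\|\psi^N\|_{s,p;\Sigma}.
\end{equation*}

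To finish, I would peel off one power of $\psi$ from $\psi^N$ using a Moser-type (or iterated product) estimate, $\|\psi^N\|_{s,p;\Sigma}\leqs C\|\psi\|_\infty^{N-1}\|\psi\|_{s,p;\Sigma}$; the condition $s<1+\tfrac{n-1}{p}$ ensures $W^{s,p}(\Sigma)\hookrightarrow L^\infty(\Sigma)$ so one may iterate the multiplication theorem $N-1$ times. Combining with $\|\sigma(\nu,\nu)\|_{1-\frac1p,p;\Sigma}$ and $\|J\|_{L^p_{\delta-2}}$ via $(a+b+c)^2\leqs 2(\cdots)^2$ produces the claimed $\ttk_2$, and absorbing all constants into $C_1,C_2$ yields \eqref{CS-aLw-bound}.

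The main obstacle is the multiplication/composition step on $\Sigma$: getting the exponents right so that one ends up with a factor $\|\psi\|_\infty^{N-1}\|\psi\|_{s,p;\Sigma}$ rather than the less useful $\|\psi\|_{s,p;\Sigma}^N$. This is exactly why the hypothesis introduces the auxiliary scales $\alpha$ and $s$: they select a Sobolev multiplication regime on the $(n-1)$-dimensional boundary where $W^{s,p}(\Sigma)$ is simultaneously an algebra and continuously embedded into $L^\infty(\Sigma)$, so that Moser iteration of the product estimate delivers exactly the form appearing in \eqref{e:k1k2}. Everything else is bookkeeping of constants.
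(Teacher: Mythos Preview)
Your proposal is correct and follows essentially the same route as the paper: apply Proposition~\ref{prop1:17oct13}, square, split off the $\|\phi\|_\infty^{2N}$ term, and then estimate $\|\bV\|_{1-\frac1p,p;\Sigma}$ via a Sobolev product/composition inequality on $\Sigma$. The paper packages the multiplication step as a single citation of Lemma~A.21 from \cite{HNT07b}, which directly yields
\[
\|f\,\psi^N\|_{1-\frac1p,p;\Sigma}\leqs C\|f\|_{1-\frac1p,p;\Sigma}\bigl(\|\psi^N\|_\infty+N\|\psi^{N-1}\|_\infty\|\psi\|_{s,p;\Sigma}\bigr),
\]
whereas you split this into a product estimate $W^{1-\frac1p,p}\cdot W^{s,p}\hookrightarrow W^{1-\frac1p,p}$ followed by a Moser-type bound on $\|\psi^N\|_{s,p;\Sigma}$; these are equivalent formulations. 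One small slip: it is the \emph{lower} bound on $s$ (which forces $s>(n-1)/p$), not the upper bound $s<1+\tfrac{n-1}{p}$, that guarantees $W^{s,p}(\Sigma)\hookrightarrow L^\infty(\Sigma)$.
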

\begin{proof}
Proposition~\ref{prop1:17oct13} implies that
\begin{align}\label{eq1:10dec13}
& \|\cL W\|_{\infty}^2 \leqs C^2r^{2\delta-2}\left(\|\phi\|^{N}_{\infty}\|\nabla \tau\|_{L^p_{\delta-2}} +\|J\|_{L^p_{\delta-2}} + \|\bV\|_{1-\frac{1}{p},p;\Sigma}\right)^2. 
\end{align}
If $\bV(\nu) = ((n-1) \tau + |\theta_-|/2)\psi^N-\sigma(\nu,\nu)$ as in Theorem~\ref{thm1:2dec13}, then $ \bV|_{\Sigma} = \bX +{\bf Y}$, where
$\bX(\nu) = \bV(\nu) $ and ${\bf Y}(\nu) = {\bf 0}$.  In practice, we will assume that ${\bf Y} = {\bf 0}$ (cf. Proposition~\ref{prop1:20dec13}), and 
we have that
\begin{align}
\|\bV\|_{1-\frac{1}{p},p;\Sigma} \leqs C\left(\|((n-1)\tau+|\theta_-|/2)\psi^N\|_{1-\frac{1}{p},p;\Sigma} + \|\sigma(\nu,\nu)\|_{1-\frac{1}{p},p;\Sigma}  \right).
\end{align}
We now apply Lemma A.21 from \cite{HNT07b} with $\sigma = 1-\frac1p$, $p =q$, and $s \in (1+\frac{(n-1)}{p}-\frac{(n-1)}{\alpha}, 1+\frac{(n-1)}{p})$.
This gives us that
%\mnote{Margin problem.  -mjh}
%\mnote{Fixed. -cm}
\begin{align*}
\|((n-1)\tau+|&\theta_-|/2)\psi^N\|_{1-\frac{1}{p},p;\Sigma} \\
& \leqs C\|((n-1)\tau+|\theta_-|/2)\|_{1-\frac{1}{p},p;\Sigma}\left( \|\psi^N\|_{\infty} + N\|\psi^{N-1}\|_{\infty}\|\psi\|_{s,p;\Sigma} \right).
\end{align*}
The embedding $W^{s,p}(\Sigma) \hookrightarrow L^{\infty}(\Sigma)$ implies that
$$
\|((n-1)\tau+|\theta_-|/2)\psi^N\|_{1-\frac{1}{p},p;\Sigma} \leqs C\|((n-1)\tau+|\theta_-|/2)\|_{1-\frac{1}{p},p;\Sigma}\|\psi\|^{N-1}_{\infty}\|\psi\|_{s,p;\Sigma},
$$
which combined with \eqref{eq1:10dec13} implies the result.
\end{proof}

In the following discussion, we let $\omega \in \cH$ be 
such that $\omega \to A_j>0$ on each end $E_j$.  
Additionally, given any scalar function $u\in L^{\infty}$,
we use the notation
\[
u^{\tiwedge}:= \mbox{ess~sup}\, u,\qquad
u^{\tivee}:= \mbox{ess~inf}\, u.
\]

We are now ready to construct our global barriers.  The following Theorem provides conditions under
which we can construct a global supersolution for $\mathcal{N}$ given that $W$ satisfies
the boundary conditions~\eqref{eq4:8aug13}-\eqref{eq2:20dec13}. For this
particular construction, if we want to freely specify $\psi \in W^{1-\frac1p,p}(\Sigma)$ we are required
to assume that $(2(n-1)\tau+|\theta_-|)$ is sufficiently small on $\Sigma$. 
%\mnote{Transition sentences between Theorems and Lemmas could use work.  Most of them begin with "In following...". -cm}

\begin{theorem}{\bf (Far-From-CMC Global Supersolution )}\label{thm1:17oct13}
Suppose that $(\cM,g)$ is asymptotically Euclidean of class $W^{2,p}_{\gamma}$, with $n < p$
and $ \gamma \in (2-n, 0)$, and that $2-n <\delta <\gamma/2$.  Additionally assume that $\cY_g > 0$, $\tau \in W^{1,p}_{\delta-1}$, and that $\sigma \in L^{\infty}_{\delta-1}\cap W^{1,2p}_{\delta-1}$, $J \in L^p_{\delta-2}$ and 
$\rho \in L^{\infty}_{\delta-2}$ are sufficiently small.  Also assume that for $0<\psi \in W^{2,p}_{\delta}$, 
$(2(n-1)\tau+|\theta_-|)\psi^N > 0$ is sufficiently small on $\Sigma$.  Then there exists a global
super-solution $\phi_+>0$ to the Hamiltonian constraint with boundary condition \eqref{eq5:11july13}-\eqref{eq2:20dec13} such that $\phi_+ - \beta \omega \in W^{2,p}_{\gamma}$ for some $\beta > 0$ sufficiently small.
\end{theorem}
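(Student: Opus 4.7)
The plan is to construct the global supersolution in the form $\phi_+ = \beta\omega + v$, where $\beta > 0$ is a small parameter to be chosen and $v \in W^{2,p}_\gamma$ is obtained by inverting the linear operator $\mathcal{P}_1 = (-\Delta + c_n R,\, \partial_\nu + d_n H)$. Recall that the positive harmonic function $\omega \in \cH$ with $\omega \to A_j > 0$ on each end and $\partial_\nu\omega|_\Sigma = 0$ satisfies $\omega \geqs \omega_{\min} > 0$ globally, by the maximum principle together with Hopf's lemma at the Neumann boundary. Consequently, $\phi_+$ will inherit a positive lower bound once $\|v\|_\infty$ is suitably small.

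The first step is to invoke Proposition~\ref{prop1:14oct13}: since $g \in \cY^+$, the operator $\mathcal{P}_1 \colon W^{2,p}_\gamma \to L^p_{\gamma-2} \times W^{1-\frac1p,p}(\Sigma)$ is an isomorphism with continuity estimate \eqref{eq3:2dec13}. Next, for any $0 < \phi \leqs \phi_+$ with $W = \mathcal{S}(\phi)$ the associated momentum-constraint solution, Proposition~\ref{prop1:5dec13} yields the pointwise bound
\[
|\cL W|^2 \leqs r^{2\delta-2}\bigl(\ttk_1\|\phi_+\|_\infty^{2N} + \ttk_2\bigr),
\]
where $\ttk_1, \ttk_2$ are controlled by the (small) data norms of $\nabla\tau$, $J$, $\sigma(\nu,\nu)$, and $((n-1)\tau+|\theta_-|/2)\psi^N$. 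Combined with $|\sigma|^2 \leqs \|\sigma\|_{L^\infty_{\delta-1}}^2 \, r^{2\delta-2}$, the obstruction $c_n|\sigma+\cL W|^2\phi_+^{-N-1}$ is dominated pointwise once an a priori lower bound on $\phi_+$ is fixed; the condition $\delta < \gamma/2$ ensures $r^{2\delta-2} \in L^p_{\gamma-2}$.

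I would then set $v := \mathcal{P}_1^{-1}(F,G)$ for non-negative $F \in L^p_{\gamma-2}$ and $G \in W^{1-\frac1p,p}(\Sigma)$ chosen to majorize the negative contributions at $\phi_+$. Since $\omega$ is harmonic with vanishing Neumann data, we have $-\Delta\phi_+ + c_n R\phi_+ = \mathcal{P}_1 v \text{ (interior)} + c_nR\beta\omega$, so $F$ must dominate $c_n R_-\beta\omega$ together with the bounds on $c_n|\sigma+\cL W|^2\phi_+^{-N-1}$ and $c_n\rho\phi_+^{-N/2}$ expressed in terms of a priori estimates $m \asymp \beta\omega_{\min}$ and $M \asymp \beta\|\omega\|_\infty$ for $\phi_+$. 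On $\Sigma$, $G$ must dominate $d_n H_-\beta\omega + \tfrac{d_n}{n-1}S(\nu,\nu)\, m^{-N/2}$, and the smallness of $(2(n-1)\tau+|\theta_-|)\psi^N$ on $\Sigma$ keeps $\|G\|_{W^{1-\frac1p,p}(\Sigma)}$ small. Using \eqref{eq3:2dec13} and the embedding $W^{2,p}_\gamma \hookrightarrow L^\infty$, the joint smallness of $\beta$ and of the data norms forces $\|v\|_\infty < \beta\omega_{\min}/2$. This confirms $\phi_+ \geqs \beta\omega_{\min}/2 > 0$ and closes the a priori bounds; by construction $\mathcal{N}(\phi_+, \mathcal{S}(\phi)) \geqs 0$ for every $0 < \phi \leqs \phi_+$, so $\phi_+$ is a global supersolution with $\phi_+ - \beta\omega = v \in W^{2,p}_\gamma$.

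The principal obstacle is the self-referential nature of the construction: the data $(F,G)$ involve the a priori bounds $m$ and $M$ on $\phi_+$, which themselves depend on $v = \mathcal{P}_1^{-1}(F,G)$. I would resolve this by fixing $m = \beta\omega_{\min}/2$ and $M = 2\beta\|\omega\|_\infty$ in advance, defining $F$ and $G$ with these values, solving the linear problem, and then verifying a posteriori that the resulting $v$ satisfies $\|v\|_\infty \leqs \beta\omega_{\min}/2$, which is permitted jointly by the smallness of $\beta$ and of the data norms. A secondary technical issue is the weight bookkeeping showing that every piece of $F$ lands in $L^p_{\gamma-2}$, which is exactly where the strict inequality $\delta < \gamma/2$ is used to accommodate the worst decay rate $r^{2\delta-2}$.
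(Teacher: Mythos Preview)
Your construction has a genuine gap: the self-consistency loop in the final paragraph does not close without an extra hypothesis on $R_-$ and $H_-$ that the theorem does not make. Because you insist that the nonnegative source $F$ majorize $c_nR_-\,\beta\omega$ pointwise (and similarly $G$ majorize $d_nH_-\,\beta\omega$), you have $\|F\|_{L^p_{\gamma-2}} \geqs \beta\,c_n\|R_-\omega\|_{L^p_{\gamma-2}}$ and $\|G\|_{W^{1-1/p,p}(\Sigma)} \geqs \beta\,d_n\|H_-\omega\|_{W^{1-1/p,p}(\Sigma)}$. Your a~posteriori check uses the upper bound $\|v\|_\infty \leqs C(\|F\|+\|G\|)$ from \eqref{eq3:2dec13} together with Sobolev embedding; for this to yield $\|v\|_\infty < \beta\omega_{\min}/2$ you therefore need, after dividing by $\beta$, the $\beta$-independent inequality
\[
C\bigl(c_n\|R_-\omega\|_{L^p_{\gamma-2}} + d_n\|H_-\omega\|_{W^{1-1/p,p}(\Sigma)}\bigr) \;<\; \tfrac{1}{2}\,\omega_{\min}.
\]
This is a constraint on the background geometry alone, and the hypothesis $\cY_g>0$ does not imply it. No amount of smallness in $\sigma,\rho,J,\tau$ or in $\beta$ can rescue this term.

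The paper avoids the issue by absorbing $R\omega$ and $H\omega$ into the linear problem rather than into the source: it solves $\mathcal{P}_1 u = (\Lambda - c_nR\omega,\ \lambda - d_nH\omega)$ for fixed positive $\Lambda\in L^p_{\gamma-2}$, $\lambda\in W^{1-1/p,p}(\Sigma)$, and sets $\phi_+=\beta(u+\omega)$. Then $\mathcal{P}_1\phi_+=(\beta\Lambda,\beta\lambda)$ has strictly positive right-hand side, and positivity of $u+\omega$ follows from the maximum principle for $\mathcal{P}_1$ (available since $\cY_g>0$), not from any smallness of $u$. Crucially $u$ is independent of $\beta$, so the ratio $\ttk_3=(\sup(u+\omega)/\inf(u+\omega))^{2N}$ is a fixed constant; the dangerous $\ttk_1$-contribution $r^{2\delta-2}\ttk_1\|\phi\|_\infty^{2N}\phi_+^{-N-1}$ then scales like $\beta^{N-1}$ and is beaten by $\beta\Lambda$ for small $\beta$, after which smallness of $\sigma,\rho,J$ and of $(2(n-1)\tau+|\theta_-|)\psi^N$ on $\Sigma$ disposes of the remaining terms. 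Your remark that $\delta<\gamma/2$ is precisely what places $r^{2\delta-2}$ in $L^p_{\gamma-2}$ is correct and matches the paper.
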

\begin{proof}
Let $\Lambda \in L^p_{\gamma-2}$ be a positive function that agrees with
$r^{\gamma -2}$ outside of a compact set and let $\lambda \in W^{1-\frac1p,p} $ be a positive
function  on $\Sigma$. Then by
Proposition~\ref{prop1:14oct13} and Proposition 3.2 in \cite{DIMM13} there exists solution $u \in W^{2,p}_{\gamma}$ solving
\begin{align}\label{eq12:17oct13}
-\Delta u +c_nR u = \Lambda - c_n \omega R,  \\
\partial_{\nu}u + d_nH u = \lambda -d_n\omega H. \nonumber
\end{align}
Let $\phi_+ = \beta(u+\omega)$, where $\beta>0$ will be determined.  By the
maximum principles~\ref{wmaxprinc} and \ref{smaxprinc} we have that $\phi_+ > 0$.
We recall from Proposition~\ref{prop1:5dec13} that we may bound $a_{\cL W}$ in terms
of the source function $\phi$.
Using this bound and the fact that $a_W = c_n|\sigma +\cL W|^2 \leqs 2|\sigma|^2 + 2|\cL W|^2$, we obtain the bound
\begin{align}\label{eq1:23jan14}
a^{\tiwedge}_{W} \leqs r^{2\delta -2}(K_1\|\phi\|^{2N}_{\infty}+K_2),
\end{align}
where
$$
K_1 = C_1\ttk_1 \quad \text{and} \quad K_2 = 2r^{2-2\delta}(\sigma^2)^{\tiwedge} + C_2\ttk_2,
$$
and $\ttk_1$ and $\ttk_2$ are the same constants in \eqref{e:k1k2}.  We let $W(\phi)$ denote a solution
to the momentum constraint for a given $\phi < \phi_+$ and define $\ttk_3 = (\frac{\sup \phi_+}{\inf \phi_+})^{2N}$.
Applying the Hamiltonian constraint \eqref{E:ham-abstract}
to $\phi_+$ and using the fact that $S(\nu,\nu) = ((n-1)\tau +|\theta_-|/2)\psi^N$, we obtain
%\mnote{ We need to fix this margin problem.  -mjh }
%\mnote{Proposed fix.  -cm}
\begin{align}
\cN(&\phi_+,W(\phi)) \nonumber \\
&= \left(\begin{array}{c}-\Delta \phi_+ +c_nR\phi_+ +b_n\tau^2\phi_+^{N-1} -a_{W}\phi_+^{-N-1}-c_n\rho\phi_+^{-N/2} \\ \partial_{\nu} \phi_+ + d_nH\phi_+ + (d_n\tau -\frac{d_n}{n-1}\theta_-)\phi_+^{\frac{N}2} - \frac{d_n}{n-1}(((n-1)\tau+|\theta_-|/2)\psi^N)\phi_+^{-\frac{N}2} \end{array}\right) \nonumber\\
                 &  \geqs \left( \begin{array}{c} \beta\Lambda  +\beta^{N-1} c_n\tau^2(u+\omega)^{N-1} -r^{2\delta-2}(K_1 (\phi^{\tiwedge})^{2N}+K_2)\phi_+^{-N-1}-c_n\rho\phi_+^{-N/2} \\ \beta \lambda + (d_n\tau -\frac{d_n}{n-1}\theta_-)\phi_+^{\frac{N}2} - \frac{d_n}{n-1}(((n-1)\tau+|\theta_-|/2)\psi^N)\phi_+^{-\frac{N}2} \end{array} \right)         \nonumber \\
                   &  \geqs \left( \begin{array}{c} \beta\Lambda -r^{2\delta-2}K_1\ttk_3 \beta^{N-1}(u+\omega)^{N-1}-K_2r^{2\delta-2}\phi_+^{-N-1}-c_n\rho\phi_+^{-N/2} \\ \beta \lambda + (d_n\tau -\frac{d_n}{n-1}\theta_-)\phi_+^{\frac{N}2} - \frac{d_n}{n-1}(((n-1)\tau+|\theta_-|/2)\psi^N)\phi_+^{-\frac{N}2} \end{array} \right).         \nonumber
 \end{align}
As in Theorem 4.1 in \cite{DIMM13}, the decay rate on $\Lambda$ ensures that we can choose $\beta$ sufficiently small
so that
$$
\frac{\beta\Lambda}{2} -r^{2\delta-2}K_1\ttk_3 \beta^{N-1}(u+\omega)^{N-1}>0.
$$
The smallness assumptions on 
$\sigma, \rho$, $J$ on $\cM$ and the smallness assumptions on $(2(n-1)\tau +|\theta_-|)\psi^N$ on $\Sigma$
imply that we can ensure that the first equation in the above array is nonnegative. 
For this fixed $\beta$, we observe that 
$$
\beta\lambda + (d_n\tau -\frac{d_n}{n-1}\theta_-)\phi_+^{\frac{N}2}>0.
$$
Therefore the smallness assumption on $(2(n-1)\tau +|\theta_-|)\psi^N$ implies
that the boundary equation can be made nonnegative as well.
Therefore, $\phi_+ = \beta(u+\omega)$ will be a global super-solution of the
Hamiltonian constraint with boundary condition \eqref{eq5:11july13}-\eqref{eq2:20dec13} if $\beta > 0$ is sufficiently small and
the conformal data satisfies the above assumptions.
\end{proof}

In the following Theorem,  we show that when $\psi = \phi_+ = \beta(u+\omega)$, where $u$ satisfies
\eqref{eq12:17oct13}, $\phi_+$ will be a global super-solution to the Hamiltonian constraint with boundary
conditions \eqref{eq5:11july13}-\eqref{eq2:20dec13} provided that $\beta> 0$ is chosen
sufficiently small and our data $\sigma, \rho, J$ is sufficiently small.  The significance of this result 
is that the supersolution acts as an {\em a priori} upper bound for the fixed point solution $\phi$, and therefore
$\phi \leqs \phi_+ = \psi$.  This implies that the
resulting fixed point $\phi$ will satisfy the marginally trapped surface conditions, which is why we refer
to the following supersolution construction as a {\bf marginally trapped surface supersolution}.

\begin{theorem}{\bf (Marginally Trapped Surface Supersolution for $g \in \cY^+$)}\label{thm1:24jan13}
Let the assumptions of Theorem~\ref{thm1:17oct13} hold with the exception of the
smallness assumption on $(2(n-1)\tau + |\theta_-|)\psi^N$, and let $u$ satisfy equation \eqref{eq12:17oct13}.
Then there exists a $\beta>0$ such that if $\phi_+ = \beta (u+\omega)$ and $\psi = \beta (u+\omega)$ on $\Sigma$, $\phi_+$ will
be a global supersolution to the Hamiltonian constraint with boundary condition \eqref{eq5:11july13}-\eqref{eq2:20dec13}
that also imposes the marginally trapped surface condition.
\end{theorem}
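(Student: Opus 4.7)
The approach mirrors the proof of Theorem~\ref{thm1:17oct13}, exploiting one crucial algebraic simplification at the boundary enabled by the choice $\psi = \phi_+$ on $\Sigma$. I would take $u \in W^{2,p}_{\gamma}$ to be the solution of the auxiliary linear Robin problem \eqref{eq12:17oct13} for positive source data $(\Lambda,\lambda)$ supplied by Proposition~\ref{prop1:14oct13}, and set $\phi_+ = \beta(u+\omega)$ with $\beta > 0$ to be fixed later. The weak and strong maximum principles guarantee $\phi_+ > 0$, exactly as in the proof of Theorem~\ref{thm1:17oct13}. The new feature is the coupling on the boundary: taking $\psi = \beta(u+\omega) = \phi_+|_{\Sigma}$ forces $S(\nu,\nu) = ((n-1)\tau+|\theta_-|/2)\phi_+^{N}$.

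The main step is a boundary identity. Since $\omega \in \cH$ has zero Neumann data and $u$ satisfies $\partial_\nu u + d_n H u = \lambda - d_n \omega H$, one first observes that $\partial_\nu \phi_+ + d_n H \phi_+ = \beta\lambda$ on $\Sigma$. Inserting this together with the $\phi_+$-form of $S(\nu,\nu)$ into the boundary component of $\cN(\phi_+,W)$ produces an exact cancellation of the two $d_n\tau\,\phi_+^{N/2}$ contributions. Using $\theta_- < 0$ so that $-\theta_- = |\theta_-|$, the boundary expression reduces to
\[
\beta\lambda + \frac{d_n}{2(n-1)}|\theta_-|\,\phi_+^{N/2} \geqs 0,
\]
so the boundary component of $\cN(\phi_+,W(\phi)) \geqs 0$ holds unconditionally. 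This removes precisely the smallness hypothesis on $(2(n-1)\tau + |\theta_-|)\psi^{N}$ that had to be imposed in Theorem~\ref{thm1:17oct13}.

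The interior inequality is then handled exactly as in the proof of Theorem~\ref{thm1:17oct13}. The choice of $\psi$ feeds into the interior equation only through the solution $W$ of the momentum constraint, and hence through the constant $K_2$ of Proposition~\ref{prop1:5dec13}; substituting $\psi = \beta(u+\omega)$ into the product estimate for $\|\psi\|_{\infty}^{N-1}\|\psi\|_{s,p;\Sigma}$ shows that this $\psi$-dependent piece of $K_2$ is $O(\beta^{N})$, and is therefore dominated by the contributions from $J$, $\sigma$, and $\rho$, which are assumed small. Combining the identity $-\Delta\phi_+ + c_n R\phi_+ = \beta\Lambda$ with the bound \eqref{CS-aLw-bound} and discarding the nonnegative term $b_n\tau^2\phi_+^{N-1}$ reduces the interior problem to verifying the same scaling balance between $\beta\Lambda$ and the negative terms $r^{2\delta-2}K_1\ttk_3\beta^{N-1}(u+\omega)^{N-1}$, $K_2\,r^{2\delta-2}\phi_+^{-N-1}$, and $c_n\rho\phi_+^{-N/2}$ that already appears in Theorem~\ref{thm1:17oct13}; for sufficiently small $\beta$ and sufficiently small $\sigma,\rho,J$ the balance is achieved and $\phi_+$ is a global supersolution. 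The marginally trapped surface condition $\phi \leqs \psi$ on $\Sigma$ then follows automatically from $\phi \leqs \phi_+$ together with $\psi = \phi_+|_{\Sigma}$. The hardest part is the boundary cancellation: one must track signs carefully to confirm that coupling $\psi$ to $\phi_+$ produces exactly the combination required to annihilate the $d_n\tau$ contributions rather than leaving a sign-indefinite remainder that would still require a smallness assumption to control.
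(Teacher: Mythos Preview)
Your proposal is correct and follows essentially the same route as the paper: both exploit the boundary cancellation arising from the choice $\psi=\phi_+$, which collapses the boundary component of $\cN(\phi_+,W)$ to $\beta\lambda + \tfrac{d_n}{2(n-1)}|\theta_-|\,\phi_+^{N/2}\geqs 0$, and then treat the interior exactly as in Theorem~\ref{thm1:17oct13} after noting that the $\psi$-dependent contribution to $K_2$ carries a positive power of $\beta$ and can be absorbed by $\tfrac{1}{2}\beta\Lambda$ for small $\beta$. One small imprecision: the paper splits $\ttk_2$ via $(a+b)^2\leqs 2a^2+2b^2$ to isolate the $\psi$-piece as $O(\beta^{2N})$ rather than your $O(\beta^{N})$, but either power suffices for the scaling argument.
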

\begin{proof}
As in the proof of Theorem~\ref{thm1:17oct13},
we let $a_W = c_n|\sigma +\cL W|^2 \leqs 2|\sigma|^2 + 2|\cL W|^2$
and apply the estimate \eqref{eq1:23jan14}, where the constants $\ttk_1, \ttk_2, K_1$ and
$K_2$ are the same as in the previous proof.
We apply the Hamiltonian constraint \eqref{E:ham-abstract}
to $\phi_+$ and use the fact that $S(\nu,\nu) = ((n-1)\tau +|\theta_-|/2)\phi_+^N$ to obtain
%\mnote{Margin problem. -mjh}
%\mnote{Proposed fix. -cm}
\begin{align}
\cN(&\phi_+,W(\phi)) \nonumber \\
&= \left(\begin{array}{c}-\Delta \phi_+ +c_nR\phi_+ +b_n\tau^2\phi_+^{N-1} -a_{W}\phi_+^{-N-1}-c_n\rho\phi_+^{-N/2} \\ \partial_{\nu} \phi_+ + d_nH\phi_+ + (d_n\tau -\frac{d_n}{n-1}\theta_-)\phi_+^{\frac{N}2} - \frac{d_n}{n-1}((n-1)\tau+|\theta_-|/2)\phi_+^{\frac{N}2} \end{array}\right)\nonumber \\
                 &  \geqs \left( \begin{array}{c} \beta\Lambda  +\beta^{N-1} b_n\tau^2(u+\omega)^{N-1} -r^{2\delta-2}(K_1 (\phi^{\tiwedge})^{2N}+K_2)\phi_+^{-N-1}-c_n\rho\phi_+^{-N/2} \\ \beta \lambda +\left(\frac{d_n}{2(n-1)}|\theta_-|\right)\phi_+^{\frac{N}2} \end{array} \right).         \nonumber
 \end{align}
We now observe that 
\begin{align}\label{eq1:13dec13}
\ttk_2 &\leqs  4C^2\left(\|J\|_{L^p_{\delta-2}}+\|\sigma(\nu,\nu)\|_{1-\frac{1}{p},p;\Sigma}\right)^2\\
&\qquad +4C\left(\|((n-1)\tau+|\theta_-|/2)\|_{1-\frac{1}{p},p;\Sigma}\|\psi\|^{N-1}_{\infty}\|\psi\|_{s,p;\Sigma}\right)^2 \nonumber \\
&=C_1(\sigma,J)+\beta^{2N}C_2(\tau,\theta_-,u)  ,  \nonumber
\end{align}
and we have that
\begin{align}
K_2 \leqs 2r^{2-2\delta}(\sigma^2)^{\tiwedge} + C_1(\sigma,J)+\beta^{2N}C_2(\tau,\theta_-,u) = C_3(\sigma,J)+\beta^{2N}C_2(\tau,\theta_-,u). \nonumber
\end{align}
If $\ttk_3$ is as in the proof of Theorem~\ref{thm1:17oct13} and if we choose $\beta > 0$ sufficiently small so that
$$
\frac{\beta\Lambda}{2} - \left(\ttk_3K_1(u+\omega)^{-N-1}+C_2(\tau,\theta_-,u)\right)\beta^{N-1}r^{2\delta-2} > 0~~~\text{on}~~~\cM,
$$
then we can ensure $\phi_+$ will be a super-solution by imposing smallness assumptions on $\sigma, \rho$ and $J$ as in the proof of Theorem~\ref{thm1:17oct13}. 
Given that the second equation is positive for any choice of $\beta > 0$, $\phi_+ = \beta(u+\omega)$ will be a global super-solution of the
Hamiltonian constraint with boundary condition \eqref{eq5:11july13} for $\beta > 0$ sufficiently small and conformal data satisfying the assumptions of the Theorem.
\end{proof}

The following Lemma provides us with a method of constructing a global supersolution in the event that $g$
is not in the positive Yamabe class.  However, we require that the scalar curvature $R$ and boundary mean
curvature $H$ be bounded by functions of $\tau$ and $\theta_-$  on the sets where they are negative. 

\begin{lemma}\label{lem1:14feb14}
Suppose that $(\cM,g)$ is asymptotically Euclidean of class $W^{2,p}_{\gamma}$ with
$p>n$ and $\gamma\in (2-n,0)$.  Assume that $2-n< \delta < \gamma/2 $ and
that $\tau \in W^{1,p}_{\delta-1}$, $\rho \in L^p_{\gamma-2}, \sigma \in W^{1,2p}_{\gamma-1}, J \in {\bf L}^p_{\delta-2}, \theta_- \in W^{1-\frac1p,p}(\Sigma)$,
and $(2(n-1)\tau+|\theta_-|) >0$ on $\Sigma$.  Additionally assume that
\begin{itemize}
\item $-c_nR \leqs b_n\tau^2 $ on $\{x\in \cM:R(x) < 0\}$,

\item $-H \leqs (\tau +|\theta_-|/(n-1))$ on $\{x\in \Sigma: H(x) < 0\}$,

\end{itemize}
and that $A \in L^p_{\gamma-2}$ is nonnegative. 
Then for $\omega \in \cH$ with asymptotic limits $A_i \in [1,\infty)$,
 there exists a solution $\phi_A$ to the equation
\begin{align}\label{eq1:14feb14}
-&\Delta \phi_A = c_nA\phi_A^{-N-1}+c_n\rho\phi_A^{-\frac{N}{2}}, \\
&\partial_{\nu} \phi_A = \frac{d_n}{n-1}A\phi_A^{-\frac{N}{2}}, \nonumber
\end{align}
such that $\phi_A -\omega \in W^{2,p}_{\gamma}$.  If $W_{\phi}$ denotes the solution to the momentum constraint
for $\phi \leqs \phi_A$,
$A \geqs |\sigma+\cL W_{\phi}|^2$ on $\cM$, and $A \geqs S(\nu,\nu) = (\sigma(\nu,\nu)+\cL W_{\phi}(\nu,\nu))$ on $\Sigma$,
then $\phi_A$ will be a global supersolution of the Hamiltonian constraint with boundary conditions \eqref{eq5:11july13}-\eqref{eq2:20dec13}.
\end{lemma}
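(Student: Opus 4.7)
The statement splits naturally into two independent parts: (a) constructing a positive $\phi_A$ solving the auxiliary Neumann problem \eqref{eq1:14feb14} with $\phi_A - \omega \in W^{2,p}_{\gamma}$, and (b) verifying that this $\phi_A$ is in fact a global supersolution for the full Hamiltonian constraint with boundary conditions \eqref{eq5:11july13}--\eqref{eq2:20dec13}. The plan is to handle these in order.

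For part (a), I would bracket the sought solution between a subsolution and a supersolution and then invoke Theorem~\ref{thm1:23jan14}. Since $\omega$ is harmonic with zero Neumann data and each asymptotic limit $A_i \geqs 1$, the strong minimum principle (with the Hopf boundary lemma precluding a boundary minimum) forces $\omega \geqs 1$ on $\cM$; the identities $-\Delta \omega = 0$ and $\partial_\nu \omega = 0$ then make $\phi_- := \omega$ a subsolution of \eqref{eq1:14feb14} because the right-hand sides are nonnegative. For the supersolution I would solve the linear Neumann problem
\begin{equation*}
-\Delta v = c_n A + c_n \rho \ \text{on }\cM, \qquad \partial_{\nu} v = \tfrac{d_n}{n-1} A \ \text{on }\Sigma,
\end{equation*}
for $v \in W^{2,p}_{\gamma}$ using weighted elliptic theory for the pure Laplacian on asymptotically Euclidean manifolds (which does \emph{not} require $\cY_g > 0$). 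Nonnegativity of the data forces $v \geqs 0$ by the maximum principle, so $\phi_+ := \omega + v \geqs \omega \geqs 1$, and the inequalities $\phi_+^{-N-1}, \phi_+^{-N/2} \leqs 1$ then verify the supersolution property for \eqref{eq1:14feb14} by direct algebraic check. Theorem~\ref{thm1:23jan14} delivers a solution $\phi_A$ with $\omega \leqs \phi_A \leqs \phi_+$ and $\phi_A - \omega \in W^{2,p}_{\gamma}$.

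For part (b), I substitute the identities in \eqref{eq1:14feb14} into the operator $\cN(\phi_A, W(\phi))$ from \eqref{E:ham-abstract}. The $c_n \rho \phi_A^{-N/2}$ contributions cancel in the interior, leaving
\begin{equation*}
c_n\bigl(A - |\sigma + \cL W|^2\bigr)\phi_A^{-N-1} + \phi_A\bigl(c_n R + b_n \tau^2 \phi_A^{N-2}\bigr);
\end{equation*}
the first summand is $\geqs 0$ by hypothesis, while the parenthesized factor is $\geqs 0$ on $\{R<0\}$ since $\phi_A \geqs 1$, $N-2 > 0$, and $-c_n R \leqs b_n \tau^2$. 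The analogous boundary substitution, using $S(\nu,\nu)$ and $\theta_- \leqs 0$, gives
\begin{equation*}
\tfrac{d_n}{n-1}\bigl(A - S(\nu,\nu)\bigr)\phi_A^{-N/2} + d_n \phi_A\Bigl[H + \bigl(\tau + \tfrac{|\theta_-|}{n-1}\bigr)\phi_A^{N/2-1}\Bigr].
\end{equation*}
The first summand is $\geqs 0$ by hypothesis; for the bracket, the assumption $2(n-1)\tau + |\theta_-| > 0$ on $\Sigma$ implies $\tau + |\theta_-|/(n-1) > 0$ there, and $\phi_A^{N/2-1} \geqs 1$ (using $\phi_A \geqs 1$ and $N/2 - 1 > 0$) combined with $-H \leqs \tau + |\theta_-|/(n-1)$ on $\{H<0\}$ forces the bracket to be nonnegative.

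The main obstacle is the first step: guaranteeing solvability of the pure-Laplace Neumann problem in $W^{2,p}_{\gamma}$ \emph{without} invoking $\cY_g > 0$, which is precisely what is \emph{not} assumed in this lemma. I expect this to follow by adapting the weighted Fredholm framework underlying Proposition~\ref{prop1:14oct13} (and the references in \cite{dM05a,DIMM13}) directly to $(-\Delta,\partial_\nu)$ on $\cM$ for $\gamma \in (2-n,0)$, with the potential constant kernel being killed by the weight; it will also likely require some mild regularity of $A$ along $\Sigma$ to make sense of the Neumann data. Once this linear ingredient is in place, the remainder of the argument is algebraic manipulation and maximum-principle bookkeeping in the spirit of the supersolution computations in Theorems~\ref{thm1:17oct13} and \ref{thm1:24jan13}.
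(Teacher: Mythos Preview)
Your argument is correct and follows essentially the same two-step strategy as the paper: build barriers for \eqref{eq1:14feb14}, invoke Theorem~\ref{thm1:23jan14}, then substitute into $\cN(\phi_A,W)$ and use $\phi_A\geqs 1$ together with the curvature bounds. The only differences are in the specific barriers chosen. The paper takes the constant $\phi_-=1$ as subsolution and $\phi_+=\beta(u+1)$ (with $u$ solving the same linear Neumann problem you write down for $v$) as supersolution, then lets $\beta\to\infty$ to accommodate arbitrary asymptotic limits $A_i\in[1,\infty)$. Your choice $\phi_-=\omega$, $\phi_+=\omega+v$ builds the correct asymptotics in from the start and avoids the auxiliary parameter $\beta$; it is a slight streamlining. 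Note also that $\omega\geqs 1$ follows immediately from Proposition~\ref{prop1:23dec13} (which gives $\min A_j\leqs\omega\leqs\max A_j$), so you do not need the Hopf lemma.

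Regarding your stated obstacle: solvability of the pure Neumann problem $(-\Delta,\partial_\nu)$ in $W^{2,p}_\gamma$ for $\gamma\in(2-n,0)$ is used equally by the paper (it appears in the construction of $u$ in \eqref{eq1:15feb14} and already in Proposition~\ref{prop1:23dec13}), and is taken as a standard consequence of the weighted Fredholm theory in \cite{dM05a,Ba86} that does not require $\cY_g>0$. So this is not a gap specific to your route.
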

\begin{proof}
We note that $\phi_- =1 $  is a subsolution to \eqref{eq1:14feb14}. 
We obtain a supersolution by letting $\phi_+= \beta(u+1)$, where $\beta\gg1$ is sufficiently large and
$u$ is the solution to 
\begin{align}\label{eq1:15feb14}
-&\Delta u = c_nA+c_n\rho \\
&\partial_{\nu}u = \frac{d_n}{n-1}A. \nonumber
\end{align}
The maximum principle implies that $\phi_+> 0$ and for $\beta \gg1$, we have $\phi_- \leqs \phi_+$.
We may then apply Theorem~\ref{thm1:23jan14} to obtain a solution $\phi_A \geqs 1$ which tends to freely
specified $A_i \in [1,\beta]$ on each end $E_i$.  As $\beta$ can be arbitrarily large, the asymptotic limits $A_i$
can be freely specified numbers in $[1,\infty)$.  

Now we compute $\mathcal{N}(\phi_A,W)$ for $\phi \leqs \phi_A$, where $W = W(\phi)$ depends on $\phi$.  We obtain
\begin{align*}
\mathcal{N}(\phi_A,W) &= \left( \begin{array}{c} -\Delta\phi_A +c_nR\phi_A+b_n\tau^2\phi_A^{N-1}-c_n|\sigma+\cL W|^2\phi_A^{-N-1}-c_n\rho\phi_A^{-\frac{N}{2}}\\
\partial_{\nu}\phi_A + d_nH\phi_A +\left(d_n\tau+\frac{d_n}{n-1}|\theta_-|\right)\phi_A^{\frac{N}{2}}-\frac{d_n}{n-1}S(\nu,\nu)\phi_A^{-\frac{N}{2}} \end{array}\right)\\
&\geqs  \left( \begin{array}{c} c_nR\phi_A+b_n\tau^2\phi_A^{N-1}+c_n(A-|\sigma+\cL W|^2)\phi_A^{-N-1}\\
d_nH\phi_A +\left(d_n\tau+\frac{d_n}{n-1}|\theta_-|\right)\phi_A^{\frac{N}{2}}+\frac{d_n}{n-1}(A-S(\nu,\nu))\phi_A^{-\frac{N}{2}} \end{array}\right) \geqs 0,
\end{align*} 
where the last is expression is nonnegative given the assumptions on $A$, the fact that $\phi_A \geqs 1$, and 
the assumption that $-c_nR \leqs b_n\tau^2$ and $-H \leqs (\tau+|\theta_-|/(n-1))$ on the sets where
$R$ and $H$ are negative.
\end{proof}

The previous Lemma tells us that $\phi_A$ will be a global supersolution to the Hamiltonian constraint
with boundary conditions \eqref{eq5:11july13} provided that we can find a function $A$ which bounds
the terms in $\cN(\phi,W(\phi))$ that depend on $W(\phi)$.  We construct such a function in the following
theorem and show that when $\psi =  \phi_A$ the function $\phi_A$ is a marginally trapped surface
supersolution.

\begin{theorem}{\bf (Global Supersolution for bounded $R$ and $H$)}\label{thm1:15feb14}
Let the assumptions of Lemma~\ref{lem1:14feb14} hold and suppose that
$\bV$ satisfies the conditions of Theorem~\ref{thm1:12feb14}.  Additionally assume
that $\Sigma$ is compact.
Then there
exists an $\e >0$ such that if
\begin{align}\label{eq2:15feb14}
\|\nabla\tau\|_{L^p_{\delta-2}} < \e \quad \text{and} \quad \|\tau + |\theta_-|\|_{W^{1-\frac{1}{p},p}(\Sigma)} < \e,
\end{align}
and $A = Cr^{2\delta-2}$ for some constant $C>0$, the solution $\phi_A$ to \eqref{eq1:14feb14} will be a 
global supersolution to the Hamiltonian constraint with boundary conditions \eqref{eq5:11july13}-\eqref{eq2:20dec13}.
Moreover, if $\phi_A = \psi$ the marginally trapped surface condition will hold.
\end{theorem}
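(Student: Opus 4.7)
The proof reduces to verifying the hypotheses of Lemma~\ref{lem1:14feb14}: I must exhibit a constant $C>0$ such that with $A := Cr^{2\delta-2}$ the pointwise bounds $A \geq |\sigma+\cL W_\phi|^2$ on $\cM$ and $A \geq \sigma(\nu,\nu)+\cL W_\phi(\nu,\nu)$ on $\Sigma$ hold simultaneously for every $\phi$ with $1 \leq \phi \leq \phi_A$, where $\phi_A$ is the solution of \eqref{eq1:14feb14} corresponding to this $A$ and $W_\phi$ is the momentum constraint solution with source $\phi$. Once these two inequalities are established, Lemma~\ref{lem1:14feb14} delivers $\phi_A$ as a global supersolution, and choosing $\psi = \phi_A$ on $\Sigma$ promotes the a priori bound $\phi \leq \phi_A$ to the marginally trapped surface condition $\phi \leq \psi$.

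I would first split $|\sigma+\cL W_\phi|^2 \leq 2|\sigma|^2 + 2|\cL W_\phi|^2$. The embedding $W^{1,2p}_{\gamma-1}\hookrightarrow C^0_{\gamma-1}$ (valid since $p>n$, hence $2p>n$) and the weight comparison yield a bound $|\sigma|^2 \leq C_\sigma r^{2\delta-2}$. Proposition~\ref{prop1:5dec13}, applied with $\psi = \phi_A$, then provides
\[
|\cL W_\phi|^2 \leq r^{2\delta-2}\bigl(\ttk_1\|\phi_A\|_\infty^{2N} + \ttk_2\bigr),
\]
with $\ttk_1, \ttk_2$ as in \eqref{e:k1k2}. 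Under the smallness condition \eqref{eq2:15feb14}, $\ttk_1 \leq 2C_1^2 \e^2$, the $\tau$-and-$\theta_-$-dependent portion of $\ttk_2$ scales like $\e^2\|\phi_A\|_\infty^{2(N-1)}\|\phi_A\|^2_{s,p;\Sigma}$, and the remaining piece of $\ttk_2$ depends only on $\|J\|_{L^p_{\delta-2}}$ and $\|\sigma(\nu,\nu)\|_{1-1/p,p;\Sigma}$ and therefore contributes only a fixed constant $K_\star$.

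The principal obstacle is that $\phi_A$ itself depends on $C$ through \eqref{eq1:14feb14}, so the desired inequality
\[
C \geq 2C_\sigma + 2\ttk_1 \|\phi_A\|_\infty^{2N} + 2\ttk_2
\]
is self-referential. To close the loop I invoke the linear theory of Proposition~\ref{prop1:14oct13} together with the embedding $\cH + W^{2,p}_\gamma \hookrightarrow C^0$: writing $\phi_A = \omega + v$ with $v \in W^{2,p}_\gamma$, the estimate \eqref{eq3:2dec13} applied to \eqref{eq1:14feb14} produces a bound of the form $\|\phi_A\|_\infty \leq M_0 + M_1 C$, where $M_0, M_1$ depend on the fixed data but not on $C$. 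Substituting yields
\[
C \geq K_0 + 2C_1^2\e^2 (M_0+M_1 C)^{2N} + O\bigl(\e^2 C^{2(N-1)}\bigr),
\]
with $K_0$ a fixed constant; the high-degree term in $C$ is pre-multiplied by $\e^2$. Choosing $\e$ small enough, an elementary intermediate-value argument on $C \in [0,\infty)$ produces an admissible $C > 0$. The boundary inequality $A \geq \sigma(\nu,\nu)+\cL W_\phi(\nu,\nu)$ is then immediate from compactness of $\Sigma$ (which bounds $r$ uniformly above and below there) together with the trace theorem and Proposition~\ref{prop1:17oct13} applied to $W_\phi$, after possibly a further finite enlargement of $C$. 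With both inequalities verified, Lemma~\ref{lem1:14feb14} completes the proof.
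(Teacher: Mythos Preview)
Your strategy and the paper's coincide: both verify the two pointwise hypotheses of Lemma~\ref{lem1:14feb14} via the estimate of Proposition~\ref{prop1:5dec13}. The difference is in how the circular dependence of $\phi_A$ on $C$ is resolved.

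The paper's route is simpler than yours. It first fixes $C$ larger than the $\psi$-independent piece of the right-hand side, namely $C>2\bigl(\|\sigma\|^2_{L^\infty_{\delta-1}}+\ttk_2'\bigr)$ where $\ttk_2'$ is the $(J,\sigma)$ contribution to $\ttk_2$. With $C$ fixed, $\phi_A$ is a determined function with some finite $L^\infty$ and trace norms; \emph{then} $\e$ is chosen small enough that the remaining $\ttk_1\|\phi_A\|_\infty^{2N}$ term and the $\psi$-dependent part of $\ttk_2$ fit under the already-chosen $C$. No quantitative bound on $\|\phi_A\|_\infty$ as a function of $C$ is ever needed.

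Two places in your argument need repair. First, you invoke the linear estimate~\eqref{eq3:2dec13} directly on the nonlinear equation~\eqref{eq1:14feb14}; the fix is to note $\phi_A\geqs 1$, so the right side is dominated by $c_nA+c_n\rho$, or simply to recall that the supersolution $\beta(u+1)$ from the proof of Lemma~\ref{lem1:14feb14} already furnishes the bound via the linear problem~\eqref{eq1:15feb14}. Second, your treatment of the boundary inequality is overcomplicated and ends in a genuine gap. The boundary condition $\cL W(\nu,\cdot)=\bV$ with $\bV(\nu)=((n-1)\tau+|\theta_-|/2)\psi^N-\sigma(\nu,\nu)$ gives $S(\nu,\nu)=((n-1)\tau+|\theta_-|/2)\psi^N$ \emph{exactly}, so no trace estimate on $\cL W_\phi$ is required and the boundary bound is controlled by the same $\e$-smallness as the interior. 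Your proposed ``further finite enlargement of $C$'' at the end is not safe: enlarging $C$ changes $\phi_A$ and may destroy the interior inequality you just established (the term $\e^2\|\phi_A\|_\infty^{2N}$ grows like $\e^2 C^{2N}$). Both inequalities must be arranged together, which the paper achieves cleanly by fixing $C$ first and choosing $\e$ last.
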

\begin{proof}
Lemma~\ref{lem1:14feb14} implies that $\phi_+ = \phi_A$ will be a supersolution for $\phi \leqs \phi_A$ provided that we can choose $A \geqs |\sigma+\cL W|^2$ on $\cM$
and $A \geqs (\sigma(\nu,\nu)+\cL W(\nu,\nu))$ on $\Sigma$.  Using the estimate from
Proposition~\ref{prop1:5dec13}, we have
$$
|\sigma+\cL W|^2 \leqs 2|\sigma|^2 +2|\cL W|^2 \leqs 2|\sigma|^2 +2r^{2\delta-2}(\ttk_1\|\phi_A\|_{\infty}^{2N}+\ttk_2),
$$
where 
\begin{align*}
&\ttk_1 = 2 C_1^2 \|\nabla \tau\|_{L^p_{\delta-2}}^2,\\
&\ttk_2 = 2C_2^2\left(\|J\|_{L^p_{\delta-2}}+\|\sigma(\nu,\nu)\|_{1-\frac{1}{p},p;\Sigma} \right.\\
&\qquad + \left. \|((n-1)\tau+|\theta_-|/2)\|_{1-\frac{1}{p},p;\Sigma}\|\psi\|^{N-1}_{\infty}\|\psi\|_{s,p;\Sigma}\right)^2. \nonumber
\end{align*}
Setting $A = Cr^{2\delta-2}$, if we can find $C>0$ so that
\begin{align}\label{eq1:18feb14}
2\|\sigma\|^2_{L^{\infty}_{\delta-1}} +2(\ttk_1\|\phi_A\|_{\infty}^{2N}+\ttk_2) &\leqs C, \\
|S(\nu,\nu)| \leqs (2(n-1)\tau+|\theta_-|)\psi^N &\leqs C\min_{x\in \Sigma}(r^{2\delta-2}), \nonumber
\end{align}
the conditions of Lemma~\ref{lem1:14feb14} will be satisfied and $\phi_A$ will be a global supersolution.  For arbitrary an $\psi \in W^{1-\frac{1}{p},p}(\Sigma)$
that is independent of $A$, we choose 
$$
C > \max\{2(\|\sigma\|^2_{L^{\infty}_{\delta-1}}+\ttk_2), \alpha \max_{x\in \Sigma} (2((n-1)\tau+|\theta_-|)\psi^N)\},
$$
where $\alpha = 1/(\min_{x\in \Sigma}(r^{2\delta-2}))$.  Taking $\ttk_1 = \|\nabla \tau\|_{L^p_{\delta-2}}^2$
to be sufficiently small we can ensure that both inequalities in \eqref{eq1:18feb14} hold.

To obtain our marginally trapped supersolution, we set $\psi = \phi_A$.  In this case we take
$$
C> 2(\|\sigma\|^2_{L^{\infty}_{\delta-1}}+\ttk_2),
$$
and then require that both $\|\nabla \tau\|_{L^p_{\delta-2}}^2$ and $\|((n-1)\tau+|\theta_-|/2)\|_{1-\frac{1}{p},p;\Sigma}$
be sufficiently small to obtain the inequalities in \eqref{eq1:18feb14}.
\end{proof}

The final two theorems of this section provide us with a method to construct global subsolutions $\phi_- \le \phi_+$, where
$\phi_+$ is any of the supersolutions
constructed in Theorems~\ref{thm1:17oct13}, \ref{thm1:24jan13}, or \ref{thm1:15feb14}.

\begin{theorem}{\bf (Global Subsolution for $g \in \cY^+$) }\label{thm2:17oct13}
Suppose that $(\cM,g)$ is asymptotically Euclidean of class $W^{2,p}_{\gamma}$, with $n < p$
and $ \gamma \in (2-n, 0)$.  Additionally assume that $\cY_g > 0$, $2-n < \delta < \gamma/2$,
$\tau \in W^{1,p}_{\delta-1}$, $\rho \in L^p_{\gamma-2}, \sigma \in W^{1,2p}_{\gamma-1}, J \in {\bf L}^p_{\delta-2}, \theta_- \in W^{1-\frac1p,p}(\Sigma)$,
and $((n-1)\tau + |\theta_-|) > 0$ on $\Sigma$. Then there exists a 
subsolution $\phi_->0$ to the Hamiltonian constraint with boundary conditions \eqref{eq5:11july13}-\eqref{eq2:20dec13} such that $\phi_- - \alpha\omega \in W^{2,p}_{\gamma}$
for $\alpha > 0$ sufficiently small .
\end{theorem}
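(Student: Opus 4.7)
The plan is to construct $\phi_-$ by a scheme parallel to the supersolution construction of Theorem \ref{thm1:17oct13}, but with the linear source term reversed in sign so that the linear part of $\cN(\phi_-,W)$ becomes strictly negative. Smallness of the scaling parameter $\alpha$ then forces the positive nonlinear contributions, which scale like $\alpha^{N-1}$ and $\alpha^{N/2}$, to be dominated by the $O(\alpha)$ negative linear part; the remaining negative-exponent nonlinear terms are automatically favorable because $\rho \geqs 0$, $|\sigma+\cL W|^2 \geqs 0$, and $S(\nu,\nu)\geqs 0$ by the hypotheses of the theorem.

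Concretely, I first fix a positive $\Lambda \in L^{p}_{\gamma-2}$ (agreeing with $r^{\gamma-2}$ outside a compact set) and a positive $\lambda \in W^{1-\frac{1}{p},p}(\Sigma)$. Using the isomorphism property of $\cP_1$ from Proposition~\ref{prop1:14oct13} (valid since $\cY_g > 0$), I solve
\begin{align*}
-\Delta u + c_n R u &= -\Lambda - c_n R \omega \quad \text{on } \cM, \\
\partial_\nu u + d_n H u &= -\lambda - d_n H \omega \quad \text{on } \Sigma,
\end{align*}
for a unique $u \in W^{2,p}_\gamma$. Since $\Delta \omega = 0$ and $\partial_\nu \omega = 0$, the candidate $\phi_- := \alpha(u+\omega)$ satisfies $-\Delta \phi_- + c_n R \phi_- = -\alpha \Lambda$ on $\cM$ and $\partial_\nu \phi_- + d_n H \phi_- = -\alpha \lambda$ on $\Sigma$. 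To verify $\phi_- > 0$, a maximum-principle argument together with Hopf's lemma at $\Sigma$ gives $\omega \geqs c > 0$ on $\cM$, and the estimate \eqref{eq3:2dec13} combined with the embedding $W^{2,p}_\gamma \hookrightarrow C^0$ (for $p>n$) lets me scale $\Lambda$ and $\lambda$ small enough that $\|u\|_\infty < c$, forcing $u+\omega>0$.

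With $\phi_-$ positive, I evaluate $\cN(\phi_-, W(\phi))$ for arbitrary $\phi \geqs \phi_-$. Dropping the non-positive nonlinear terms, the subsolution inequality reduces to
\begin{align*}
\alpha^{N-2} b_n \tau^2 (u+\omega)^{N-1} &\leqs \Lambda \quad \text{on } \cM,\\
\alpha^{N/2-1} (d_n\tau + \tfrac{d_n}{n-1}|\theta_-|)(u+\omega)^{N/2} &\leqs \lambda \quad \text{on } \Sigma.
\end{align*}
Since $N > 2$, both exponents on $\alpha$ are strictly positive, so once the ratios on the left (without the $\alpha$-factor) are uniformly bounded the argument closes by taking $\alpha$ sufficiently small.

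The main obstacle lies in the first inequality, which must hold pointwise on the noncompact manifold $\cM$. The embedding $\tau \in W^{1,p}_{\delta-1} \hookrightarrow C^0_{\delta-1}$ yields $\tau^2 \leqs C r^{2\delta - 2}$, while the choice $\Lambda \sim r^{\gamma-2}$ gives the correct decay. Because the weight-parameter hypothesis $\delta < \gamma/2$ implies $2\delta - 2 < \gamma - 2$, the ratio $\tau^2/\Lambda$ is uniformly bounded on $\cM$; this is precisely where the relation between $\delta$ and $\gamma$ enters the subsolution construction. The boundary inequality is easier since $\Sigma$ is compact and $\lambda$ bounded below. Together these yield $\cN(\phi_-, W(\phi)) \leqs 0$ for every $\phi \geqs \phi_-$, producing a global subsolution with $\phi_- - \alpha \omega = \alpha u \in W^{2,p}_\gamma$ as required.
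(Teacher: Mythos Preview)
Your approach is valid in outline but differs from the paper's, and there is a slip in your positivity argument.

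The paper does not introduce auxiliary positive sources $\Lambda,\lambda$. Instead it absorbs the troublesome zeroth-order terms directly into the linear operator: it solves
\begin{align*}
(-\Delta + c_nR + b_n\tau^2)\,u &= -\omega(c_nR+b_n\tau^2),\\
\bigl(\partial_\nu + d_nH + d_n\tau+\tfrac{d_n}{n-1}|\theta_-|\bigr)\,u &= -\omega\bigl(d_nH+d_n\tau+\tfrac{d_n}{n-1}|\theta_-|\bigr),
\end{align*}
so that $\phi_- := \alpha(u+\omega)$ satisfies the \emph{homogeneous} perturbed problem. The subsolution inequality then collapses to $\alpha^{N-1}(u+\omega)^{N-1}\leqs \alpha(u+\omega)$ on $\cM$ (and the analogous $N/2$-power comparison on $\Sigma$), which holds for small $\alpha$ simply because $u+\omega$ is bounded. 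No decay analysis of $\tau^2/\Lambda$ is needed, and positivity of $u+\omega$ follows directly from the maximum principle since the right-hand side of the auxiliary linear problem vanishes. Your route instead mirrors the supersolution construction of Theorem~\ref{thm1:17oct13}; it works too, and has the virtue of making transparent where the weight hypothesis $\delta<\gamma/2$ enters (the paper's subsolution argument does not use it explicitly).

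The gap: your claim that scaling $\Lambda,\lambda$ small forces $\|u\|_\infty < c$ is incorrect, because the right-hand side of your auxiliary problem retains the fixed terms $-c_nR\omega$ and $-d_nH\omega$, which do not vanish with $\Lambda,\lambda$. Hence estimate~\eqref{eq3:2dec13} does not give $\|u\|_\infty\to 0$. To repair this, split $u+\omega = v_1+v_2$ with $\cP_1 v_1 = 0$, $v_1\to A_i$, and $\cP_1 v_2 = (-\Lambda,-\lambda)$, $v_2\to 0$. Then $v_1\geqs c'>0$ by the maximum principle for the conformal Laplacian (this is where $\cY_g>0$ is used), while $\|v_2\|_\infty$ is controlled by $\|\Lambda\|+\|\lambda\|$ via \eqref{eq3:2dec13} and the Sobolev embedding. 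For $\Lambda,\lambda$ small you recover $u+\omega>0$, and the rest of your argument goes through.
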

\begin{proof}
Because $g \in \cY^+$, there exists $u \in W^{2,p}_{\gamma}$ which solves
\begin{align}\label{eq1:6dec13}
-&\Delta u + (c_n R+b_n\tau^2) u = -\omega (c_nR+b_n\tau^2), \\
&\partial_{\nu} u + \left(d_n H+d_n\tau + \frac{d_n}{(n-1)}|\theta_-| \right) u = -\omega\left(d_n H+d_n\tau + \frac{d_n}{(n-1)}|\theta_-|\right). \nonumber 
\end{align}
Set $\phi_- = \alpha(u+\omega)$, where $\alpha>0$ will be determined.  We observe that
\begin{align*}
-&\Delta \phi_- + (c_n R+b_n\tau^2) \phi_- = 0, \\
&\partial_{\nu} \phi_- + \left(d_n H+d_n\tau + \frac{d_n}{(n-1)}|\theta_-| \right)  \phi_- = 0, \nonumber 
\end{align*}
and by the maximum the principles \ref{wmaxprinc} and \ref{smaxprinc}, $\phi_- > l > 0$ given that
$\phi_- \to A_j > 0$ on each end. 

We claim that for $\alpha$ sufficiently small, $\phi_- = \alpha \psi$ is a global subsolution. 
Suppose that $\phi \geqs \phi_-$.  Then we have
%\mnote{A margin problem. -mjh}
%\mnote{Proposed fix. -cm}
\begin{align*}
\mathcal{N}(&\phi_-,S(\phi)) \\
&\leqs \left( \begin{array}{c}  b_n(\alpha^{N-1} (u+\omega)^{N-1}- \alpha (u+\omega))\tau^2 - c_n|\sigma+\cL W|^2\phi_-^{-N-1} - c_n\rho \phi_-^{-\frac{N}{2}} \\ \left( \alpha^{\frac{N}{2}}(u+\omega)^{\frac{N}{2}}-\alpha (u+\omega)\right) \left(d_n\tau + \frac{d_n}{(n-1)}|\theta_-| \right)  \end{array}\right),  
\end{align*}
where we have used the fact that $S(\nu,\nu) = ((n-1)\tau +|\theta_-|/2)\psi^N > 0$. We observe that if we take $\alpha$ sufficiently small, both expressions in the above array will be nonpositive.  
\end{proof}

\begin{theorem}{\bf(Global Subsolution for bounded $R$ and $H$)}\label{thm1:18feb14}
Let the assumptions of Lemma~\ref{lem1:14feb14} hold along with additional assumption that $S(\nu,\nu) \geqs 0$.  Then there exists a solution $u$ to
\begin{align}\label{eq3:15feb14}
-&\Delta u +c_nRu+b_n\tau^2u^5 = 0 \quad \text{on $\cM$}, \\
&\partial_{\nu} u + d_nHu +\left(d_n\tau+\frac{d_n}{n-1}|\theta_-|\right)u^{\frac{N}{2}} =  0 \quad \text{on $\Sigma$}, \nonumber
\end{align}
such that $u-\omega \in W^{2,p}_{\gamma}$.  Moreover, for any $\alpha \in (0,1)$ the function $\phi_- = \alpha u$
will be a global subsolution to the Hamiltonian constraint with boundary conditions \eqref{eq5:11july13}-\eqref{eq2:20dec13}.  
\end{theorem}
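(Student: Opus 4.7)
The plan is to split the argument into two parts: first, establish the existence of a positive solution $u \geqs 1$ to the semilinear boundary value problem \eqref{eq3:15feb14} with $u - \omega \in W^{2,p}_\gamma$; second, verify by direct substitution that $\phi_- = \alpha u$ satisfies the global subsolution inequalities for $\mathcal{N}$ whenever $\alpha \in (0,1)$.

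For the existence of $u$, I would use a Schauder fixed-point argument on the convex, closed, $C^0$-bounded set $X = \{v \in C^0(\cM) : 1 \leqs v \leqs \omega\}$. Note that $\omega \geqs 1$ on $\cM$: since $\omega$ is harmonic with zero Neumann data on $\Sigma$ and asymptotic limits $A_i \geqs 1$, Hopf's lemma and the weak maximum principle force $\omega \geqs \min_i A_i \geqs 1$. For $v \in X$, define $T(v) = u$ as the unique solution in $\omega + W^{2,p}_\gamma$ of the linearized problem
\begin{align*}
L_v u := -\Delta u + \bigl(c_n R + b_n \tau^2 v^{N-2}\bigr) u &= 0 \quad \text{on } \cM, \\
B_v u := \partial_\nu u + \Bigl(d_n H + \bigl(d_n \tau + \tfrac{d_n}{n-1} |\theta_-|\bigr) v^{N/2-1}\Bigr) u &= 0 \quad \text{on } \Sigma.
\end{align*}
The crucial observation is that when $v \geqs 1$, the standing hypotheses $-c_n R \leqs b_n \tau^2$ on $\{R < 0\}$ and $-H \leqs \tau + |\theta_-|/(n-1)$ on $\{H < 0\}$ force both zeroth-order coefficients to be pointwise non-negative; this coercivity should allow a modest extension of Proposition~\ref{prop1:14oct13} to show that $(L_v, B_v)$ is an isomorphism from $W^{2,p}_\delta$ onto $L^p_{\delta-2} \times W^{1-1/p,p}(\Sigma)$, so $T(v)$ is well defined and positive.

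Invariance $T(X) \subseteq X$ follows from two comparisons: $L_v \omega, B_v \omega \geqs 0$ makes $\omega$ a supersolution, so $T(v) \leqs \omega$; and $L_v(1), B_v(1) \geqs 0$ yields $L_v(T(v) - 1), B_v(T(v)-1) \leqs 0$ with non-negative asymptotic limits $A_i - 1$, giving $T(v) \geqs 1$. Continuity of $T$ is standard once the above isomorphism claim is in hand, and compactness follows from the compact embedding $\omega + W^{2,p}_\gamma \hookrightarrow C^0$ given in \eqref{embed}. Schauder's fixed-point theorem then furnishes $u = T(u) \in X$, which solves \eqref{eq3:15feb14}.

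For the second part, with $\phi_- = \alpha u$ and using \eqref{eq3:15feb14} together with $-\theta_- = |\theta_-|$, one computes
\begin{align*}
-\Delta \phi_- + c_n R \phi_- + b_n \tau^2 \phi_-^{N-1} &= b_n \tau^2 u^{N-1}\bigl(\alpha^{N-1} - \alpha\bigr) \leqs 0 \quad \text{on } \cM, \\
\partial_\nu \phi_- + d_n H \phi_- + \Bigl(d_n \tau + \tfrac{d_n}{n-1}|\theta_-|\Bigr) \phi_-^{N/2} &= \Bigl(d_n \tau + \tfrac{d_n}{n-1}|\theta_-|\Bigr) u^{N/2}\bigl(\alpha^{N/2} - \alpha\bigr) \leqs 0 \quad \text{on } \Sigma,
\end{align*}
since $\alpha \in (0,1)$ with $N > 2$, and the coefficient $d_n \tau + \frac{d_n}{n-1}|\theta_-| > 0$ is implied by the hypothesis $2(n-1)\tau + |\theta_-| > 0$. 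Subtracting the non-negative terms $c_n|\sigma + \cL W|^2 \phi_-^{-N-1} + c_n \rho \phi_-^{-N/2}$ on $\cM$ and $\frac{d_n}{n-1} S(\nu,\nu) \phi_-^{-N/2}$ on $\Sigma$ (using the added assumption $S(\nu,\nu) \geqs 0$) preserves the inequalities for every admissible $\phi \geqs \phi_-$, so $\phi_-$ is a global subsolution. The principal obstacle is the first part: extending the Fredholm/isomorphism statement of Proposition~\ref{prop1:14oct13} to Robin operators with arbitrary non-negative zero-order coefficients on the asymptotically Euclidean manifold with boundary, without invoking the positive-Yamabe assumption.
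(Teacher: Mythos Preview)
Your verification that $\phi_-=\alpha u$ is a global subsolution (the second part) is correct and matches the paper's computation. The problem is in your existence argument for $u$.

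The invariance step $T(v)\geqs 1$ does not follow from what you wrote, and in fact is generally false. You correctly observe $L_v(1)\geqs 0$, $B_v(1)\geqs 0$, hence $L_v(T(v)-1)\leqs 0$, $B_v(T(v)-1)\leqs 0$, with $T(v)-1\to A_i-1\geqs 0$ on each end. But the weak maximum principle (Lemma~\ref{wmaxprinc}) requires the \emph{opposite} sign on the operator: it says $L_v w\geqs 0$, $B_v w\geqs 0$, $w\to c_i\geqs 0$ forces $w\geqs 0$. With your signs, nothing prevents $T(v)-1$ from being negative; indeed, $L_v(1)\geqs 0$ means the constant $1$ is a \emph{supersolution} of the homogeneous linear problem, so comparison would push $T(v)$ \emph{below} $1$ (modulo the asymptotic mismatch), not above. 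Concretely, if $c_nR+b_n\tau^2 v^{N-2}$ is large on some compact set, the solution of $L_v u=0$ with $u\to A_i$ will be driven toward zero there and dip below $1$. Once $T(v)\geqs 1$ fails you lose $T(X)\subset X$, and you cannot enlarge $X$ down to $\{0\leqs v\leqs\omega\}$ without sacrificing the nonnegativity of the zeroth-order coefficients (on $\{R<0\}$ you only know $c_nR+b_n\tau^2\geqs 0$, not $c_nR+b_n\tau^2 v^{N-2}\geqs 0$ for $v<1$), which was the whole point.

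The paper avoids this entirely by appealing directly to the sub/supersolution existence result (Theorem~\ref{thm1:23jan14}) for the \emph{nonlinear} problem \eqref{eq3:15feb14}: $u_-\equiv 0$ is trivially a subsolution, and any constant $u_+\equiv\beta\geqs 1$ is a supersolution because $c_nR\beta+b_n\tau^2\beta^{N-1}=\beta(c_nR+b_n\tau^2\beta^{N-2})\geqs 0$ and similarly on $\Sigma$, using exactly the hypotheses $-c_nR\leqs b_n\tau^2$ and $-H\leqs \tau+|\theta_-|/(n-1)$. Theorem~\ref{thm1:23jan14} then produces $u$ with $0\leqs u\leqs\beta$ and $u-\omega\in W^{2,p}_\gamma$; strict positivity $u>0$ is obtained afterward by observing that both $u$ and $0$ solve the same linear equation $-\Delta v+(c_nR+b_n\tau^2 u^{N-2})v=0$ (with the analogous Robin condition) and invoking Alexandrov's unique-continuation theorem: a zero of $u$ would force $u\equiv 0$, contradicting $u\to A_i>0$. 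This route needs neither your Schauder machinery nor the extension of Proposition~\ref{prop1:14oct13} that you flagged as the principal obstacle.
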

\begin{proof}
We observe that $u_- \equiv 0$ is a subsolution of \eqref{eq3:15feb14}
and $u_+ \equiv \beta \ge 1$ is a supersolution given the assumptions on $R$ and $H$.  Let
$\omega$ have asymptotic limits $A_i \in (0,\infty)$.  By Theorem~\ref{thm1:23jan14} we can choose $\beta$ large enough so that Eq.~\eqref{eq3:15feb14} 
has a solution $u$ such that $u-\omega \in W^{2,p}_{\gamma}$.  By construction, $u \geqs 0$.  We note that if
$u(x_0) = 0$ for some $x_0 \in \cM$, then $x_0$ will be a minimum of $u$.
Both $u$ and $0$ satisfy the elliptic equation
\begin{align*}
-&\Delta v + (c_nR+b_n\tau u^{N-2}) v = 0 \quad \text{on $\cM$},\\
&\partial_{\nu}v + (d_n H+ \left(d_n\tau+\frac{d_n}{n-1}|\theta_-|\right)u^{\frac{N-2}{2}})v = 0 \quad \text{on $\Sigma$} ,
\end{align*}
 and $u$ and the zero function will coincide
up to first order at $x_0$.  Alexandrov's Theorem (cf. \cite{yCBjIjY00}) implies $u \equiv 0$, which
contradicts the fact that $u \to A_i > 0$ on each $E_i$.  So $u > 0$ on $\cM$.

Setting $\phi_- = \alpha u $ for $\alpha \in (0,1)$, we calculate $\cN(\phi_-,S(\phi))$ for $\phi \geqs \phi_-$:
\begin{align*}
\mathcal{N}(\phi_-,S(\phi))
= \left( \begin{array}{c}  - c_n|\sigma+\cL W|^2\phi_-^{-N-1} - c_n\rho \phi_-^{-\frac{N}{2}} \\ -\frac{d_n}{n-1}S(\nu,\nu)\phi_-^{-\frac{N}{2}}  \end{array}\right) \leqs 0.  
\end{align*}
Therefore $\phi_- = \alpha u $ is a global subsolution to the Hamiltonian constraint.  
\end{proof}

Given $\omega_1 \in \cH$ which tends to positive values on each end $E_i$, for arbitrarily small $\alpha >0$ we may obtain a positive subsolution $\phi_-$ such
that $\phi_-  -\alpha\omega_1 \in W^{2,p}_{\gamma} $.  Similarly, given $\omega_2 \in \cH$ which
tends to positive values on each end, for $\beta>0$ sufficiently small there exists a positive supersolution $\phi_+$ such that $\phi_+ - \beta \omega_2 \in W^{2,p}_{\delta}$. 
By choosing $\alpha \ll \beta$, we can ensure that $\alpha \omega_1$ is asymptotically bounded by $\beta\omega_2$ and that $\phi_- \leqs \phi_+$.
Now that we have constructed barriers for the Hamiltonian constraint
with the specified boundary conditions, we are ready to prove Theorem~\ref{thm1:2dec13}.

%%%%%%%%%%%%%%%%%%%%%%%%%%%%%%%%%%%%%%%%%%%%%%%%%%%%%%%%%%%%%%%%%%%%%%%%%%%%%%
\section{Non-CMC Solutions: Fixed Point Argument}
\label{sec:proof}
%%%%%%%%%%%%%%%%%%%%%%%%%%%%%%%%%%%%%%%%%%%%%%%%%%%%%%%%%%%%%%%%%%%%%%%%%%%%%%

Given a set of global barriers $\phi_- \leqs \phi_+$, which we derived in Section~\ref{sec:barriers}, Theorem~\ref{thm1:2dec13} will follow by 
using a variation of the fixed point argument first developed in \cite{HNT07b}.  The following argument closely follows the work done in \cite{DIMM13},
where the authors extended the argument in \cite{HNT07b} to AE manifolds with no boundary. 
We slightly modify this fixed point argument to include our boundary problem. 

 Before we prove our fixed point theorem, we first  
 discuss the properites of the solution map of the Hamiltonian constraint with the associated marginally trapped surface boundary conditions.
 In particular, we show that this map is well-defined up to the asymptotic limit of the solution, and then show that it is continuous.
 
 Let $W \in W^{2,p}_{\delta}$ be a given vector field with $2-n < \delta \le \gamma/2< 0$, and let $\phi_- \leqs \phi_+$ be sub-and supersolutions of $\cN(\phi,W)$.  
 For a given $k$-tuple $A_1,\cdots,A_k $ of positive, real numbers, let $\omega \in \cH$
 be the associated harmonic function.
 By Theorems~\ref{thm1:23jan14} and \ref{uniq}, for a given $W$, sub- and supersolutions $\phi_-$ and $\phi_+$, and $\omega$
 that is asymptotically bounded by $\phi_-$ and $\phi_+$,
 there exists a unique solution to $\cN(\phi,W) = 0$ such that $\phi- \omega \in W^{2,p}_{\gamma}$.  
 Therefore, for given a $W, \phi_-\le \phi_+$, and $\omega$,
 we define  $T(W) = \phi$ to be the solution map giving this unique solution. 
  
 Given that $T$ is used to construct our fixed point map for the Schauder Theorem, we require that the $T(W)$ 
 be a continuous mapping.  We note that ${\mathcal{G}(\phi) = {\it i}( T( \cS(\phi)))}$, where ${{\it i}:\cH+ W^{2,p}_{\gamma} \to C^0}$ is the compact embedding \eqref{embed}
 and $\cS$ is the continuous solution map
 of the momentum constraint.  Therefore the continuity of $T$ will imply
 the continuity of $\mathcal{G}$.  We set $\beta(W) = \sigma + \cL W$ and define
 $\mathcal{L}(\beta(W) ) = T(W)$.  Then for fixed data $(g,\tau,\rho, \theta_-)$, $\mathcal{L}(\beta)$ is the solution map of the Lichnerowicz equation
 with boundary conditions~\eqref{eq5:11july13} for a given 2-tensor $\beta$.  That is, $\cL(\beta)$ gives the solution of
 \begin{align*}
 -&\Delta \phi + c_n R \phi + b_n \tau^2\phi^{N-1}-c_n|\beta|^2\phi^{-N-1} - c_n \rho\phi^{-\frac{N}{2}} = 0 ~~~~\text{on $\cM$}, \\
&\partial_{\nu}\phi+d_nH\phi+\left(d_n \tau - \frac{d_n}{n-1}\theta_- \right)\phi^{\frac{N}{2}}-\frac{d_n}{n-1}\beta(\nu,\nu)\phi^{-\frac{N}{2}} = 0 ~~~~ \text{on $\Sigma$}. 
 \end{align*}
 To prove the continuity of $T$ it is sufficient to prove the continuity of
 $\mathcal{L}$ in $\beta$.  The proof is based on the Implicit Function Theorem argument developed
 in  \cite{dM09}.
 
 \begin{proposition}\label{prop1:18dec13}
 Suppose $(\cM,g)$ is asymptotically Euclidean of class $W^{2,p}_{\gamma}$, with $\gamma \in (2-n,0)$ and $2 > \frac{n}{p}$.  
 Additionally assume that $\tau \in W^{1,p}_{\gamma/2-1}$, $\rho \in L^p_{\gamma-2}$, $\theta_- \in W^{1-\frac{1}{p},p}(\Sigma)$,
 and $\beta \in W^{1,2p}_{\gamma/2-1}$.  If 
 $((n-1)\tau+|\theta_-|) \geqs 0$ and $ \beta_0(\nu,\nu) \geqs 0, $
 then $\mathcal{L}$ is a $C^1$ map from $W^{1,2p}_{\gamma/2-1}$ to $W^{2,p}_{\gamma}$. 
%\mnote{Double check regularity}
 \end{proposition}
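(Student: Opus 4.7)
The plan is to apply the Implicit Function Theorem in the Banach space setting, essentially following the strategy of Maxwell in \cite{dM09}. Define the map
\begin{align*}
F : W^{1,2p}_{\gamma/2-1} \times \bigl(\mathcal{H}+W^{2,p}_{\gamma}\bigr)_{+} \;\longrightarrow\; L^{p}_{\gamma-2}\times W^{1-\frac{1}{p},p}(\Sigma)
\end{align*}
by setting $F(\beta,\phi)$ equal to the pair formed by the left-hand side of the Lichnerowicz equation (with $|\sigma+\cL W|^{2}$ replaced by $|\beta|^{2}$) and the left-hand side of the boundary condition \eqref{eq5:11july13} (with $S(\nu,\nu)$ replaced by $\beta(\nu,\nu)$). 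Then $\phi=\mathcal{L}(\beta)$ is characterized by $F(\beta,\phi)=0$, and if we can show $F$ is $C^{1}$ and that $D_{\phi}F(\beta_{0},\phi_{0})$ is an isomorphism onto $L^{p}_{\gamma-2}\times W^{1-\frac{1}{p},p}(\Sigma)$, the Implicit Function Theorem gives the desired $C^{1}$ regularity of $\mathcal{L}$.

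The first step is to verify that $F$ is well-defined and $C^{1}$. The linear part $A_{L}$ lands in the correct target space by Proposition~\ref{prop1:14oct13}. For the nonlinear terms, I would use the embedding $\mathcal{H}+W^{2,p}_{\gamma}\hookrightarrow C^{0}$ together with the standing assumption $\phi>0$ bounded away from zero on compact sets, so that powers $\phi^{N-1},\phi^{-N-1},\phi^{\pm N/2}$ are smooth bounded functions of $\phi$ and their Fr\'echet derivatives map into the required weighted spaces. The multiplication $|\beta|^{2}$ is the critical one on the bulk side: since $\beta\in W^{1,2p}_{\gamma/2-1}$, the standard multiplication lemma gives $|\beta|^{2}\in W^{1,p}_{\gamma-2}\hookrightarrow L^{p}_{\gamma-2}$, which is why the weight $\gamma/2-1$ was chosen. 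On $\Sigma$, trace and multiplication on $W^{s,p}(\Sigma)$ handle $\beta(\nu,\nu)$ and the various powers of $\phi$. Smoothness in $\beta$ is immediate because $F$ is quadratic in $\beta$.

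The next step is to compute the partial derivative $D_{\phi}F(\beta_{0},\phi_{0})v$, which is the pair
\begin{align*}
\Bigl(-\Delta v + Qv,\; \partial_{\nu}v + Q_{\Sigma}v\Bigr),
\end{align*}
where
\begin{align*}
Q &= c_{n}R + (N-1)b_{n}\tau^{2}\phi_{0}^{N-2} + (N+1)c_{n}|\beta_{0}|^{2}\phi_{0}^{-N-2} + \tfrac{N}{2}c_{n}\rho\phi_{0}^{-N/2-1},\\
Q_{\Sigma} &= d_{n}H + \tfrac{N}{2}\bigl(d_{n}\tau - \tfrac{d_{n}}{n-1}\theta_{-}\bigr)\phi_{0}^{N/2-1} + \tfrac{N}{2}\tfrac{d_{n}}{n-1}\beta_{0}(\nu,\nu)\phi_{0}^{-N/2-1}.
\end{align*}
Under the sign hypotheses $((n-1)\tau+|\theta_{-}|)\geqs 0$ and $\beta_{0}(\nu,\nu)\geqs 0$, every $\phi_{0}$-dependent contribution to $Q$ and $Q_{\Sigma}$ is nonnegative. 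Proposition~\ref{prop1:14oct13} (applied to the zeroth-order perturbation of $\cP_{1}$) shows that this linearized operator is Fredholm of index zero from $W^{2,p}_{\gamma}$ to $L^{p}_{\gamma-2}\times W^{1-\frac{1}{p},p}(\Sigma)$, so invertibility reduces to a trivial kernel argument.

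The main obstacle, and where I would focus most of the work, is the triviality of the kernel of $D_{\phi}F(\beta_{0},\phi_{0})$. I would show that if $v\in W^{2,p}_{\gamma}$ solves the homogeneous linearized system, then $v\equiv 0$ via a maximum principle argument. Because $\gamma<0$ forces $v\to 0$ at each end, a positive interior maximum or a positive boundary maximum of $v$ would violate the weak/strong maximum principles (using the nonnegativity of $Q$ on $\cM$ and of $Q_{\Sigma}$ on $\Sigma$ given the sign conditions), and similarly for negative extrema; this forces $v\equiv 0$. With the kernel trivial, Fredholm index zero yields surjectivity, so $D_{\phi}F(\beta_{0},\phi_{0})$ is an isomorphism, and the Implicit Function Theorem delivers the $C^{1}$-regularity of $\mathcal{L}:W^{1,2p}_{\gamma/2-1}\to W^{2,p}_{\gamma}$. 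The care needed to invoke the right version of the maximum principle in the asymptotically Euclidean setting with Robin boundary (and to handle the possibly sign-indefinite $c_{n}R$ and $d_{n}H$ that are not controlled by the new positive terms unless one absorbs them using that $\phi_{0}$ is bounded below) is where the technical weight of the argument lies.
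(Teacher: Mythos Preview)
Your overall strategy (Implicit Function Theorem, linearize, show invertibility) matches the paper's, but you are missing the one idea that makes the invertibility go through. You correctly identify the problem in your last paragraph: the potentials $Q$ and $Q_\Sigma$ contain the sign-indefinite pieces $c_nR$ and $d_nH$, and nothing in the hypotheses of the proposition controls them. Your suggestion to ``absorb them using that $\phi_0$ is bounded below'' does not work, because $R$ can be arbitrarily negative on a set of positive measure and no lower bound on $\phi_0$ compensates for that; the proposition does not assume $\cY_g>0$.

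The paper handles this with the conformal-covariance trick from \cite{dM09}: replace $g$ by $\hat g=\phi_0^{N-2}g$ so that, in the new data, the solution is $\hat\phi_0\equiv 1$. Now the crucial point is that the \emph{equation itself} at $\phi_0=1$ reads $c_nR+b_n\tau^2-c_n|\beta_0|^2-c_n\rho=0$ and $d_nH+(d_n\tau+\tfrac{d_n}{n-1}|\theta_-|)-\tfrac{d_n}{n-1}\beta_0(\nu,\nu)=0$. Substituting these into your $Q$ and $Q_\Sigma$ eliminates $c_nR$ and $d_nH$ entirely and leaves
\[
Q=(N-2)b_n\tau^2+(N+2)c_n|\beta_0|^2+\tfrac{N+2}{2}c_n\rho\geqs 0,
\qquad
Q_\Sigma=\tfrac{N-2}{2}\Bigl(d_n\tau+\tfrac{d_n}{n-1}|\theta_-|\Bigr)+\tfrac{N+2}{2}\tfrac{d_n}{n-1}\beta_0(\nu,\nu)\geqs 0,
\]
which is exactly where the sign hypotheses $((n-1)\tau+|\theta_-|)\geqs 0$ and $\beta_0(\nu,\nu)\geqs 0$ enter. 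With genuinely nonnegative potentials the linearization is an isomorphism on $W^{2,p}_\gamma$, and the Implicit Function Theorem finishes. Without this reduction your maximum-principle step has a real gap.
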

 \begin{proof}
 As in \cite{dM09}, we exploit the conformal covariance of the Lichnerowicz equation. Let $\hat{g} = \phi^{N-2}g$ and $\hat{\mathcal{L}}$
 be the solution map associated with $\hat{g}$.  By the conformal covariance of the boundary problem demonstrated in \cite{HoTs10a}, we have that
 $$
 \hat{\mathcal{L}}(\hat{\beta}) = \phi^{-1} \mathcal{L}(\beta) = 1, \quad \text{where $\hat{\beta} = \phi^{-2N}\beta$.}
$$
Therefore it suffices to demonstrate the continuity of $\mathcal{L}$ in a neighborhood of $\beta_0$ such that $\mathcal{L}(\beta_0) = 1$,
and we may drop the hat notation.

Define
\begin{align}\label{eq3:17dec13}
\mathcal{F}(\phi,\beta) = \left[ \begin{array}{c}    -\Delta \phi + c_n R \phi + b_n \tau^2\phi^{N-1}-c_n|\beta|^2\phi^{-N-1} - c_n \rho\phi^{-\frac{N}{2}} \\
							 \partial_{\nu}\phi+d_nH\phi +\left(d_n\tau +\frac{d_n}{(n-1)}|\theta_-|\right)\phi^{\frac{N}2} - \frac{d_n}{(n-1)}\beta(\nu,\nu)\phi^{-\frac{N}{2}} \\
                                 \end{array}\right].
\end{align}
It is clear that $\mathcal{F}(\mathcal{L}(\beta),\beta) = 0$, and a standard computation shows that the Gateaux derivative is given by
\begin{align}
\mathcal{F}'_{\phi,\beta}(h,K) = \left( \begin{array}{c}   -\Delta h + \alpha_1(\phi,\beta) h  - 2c_n\phi^{-N-1}\beta\cdot K \\
									   \partial_{\nu} h  + \alpha_2(\phi,\beta) h  - \frac{d_n}{(n-1)}\phi^{-\frac{N}2}K(\nu,\nu)\\
              \end{array} \right),
\end{align}
where 
$$
\alpha_1(\phi,\beta) = c_n R + (N-1)b_n\tau^2\phi^{N-2} + (N+1)c_n|\beta|^2\phi^{-N-2}+\frac{c_nN}{2}\rho\phi^{-\frac{N}{2}-1},
$$
and
$$
\alpha_2(\phi,\beta) = d_n H +\frac{N}{2}\phi^{\frac{N}{2}-1}\left(d_n\tau+\frac{d_n}{(n-1)}|\theta_-|\right)+\frac{Nd_n}{2(n-1)}\phi^{-\frac{N}{2}-1}\beta(\nu,\nu).
$$
The multiplication properties of weighted Sobolev spaces imply that the operator $\mathcal{F}'$ is continuous in $\phi$ and $\beta$.
We have
\begin{align*}
\mathcal{F}'_{1,\beta_0}(h,0) = \left( \begin{array}{c}   -\Delta h +  \left(c_n R + (N-1)b_n\tau^2 + (N+1)c_n|\beta_0|^2+\frac{c_nN}{2}\rho\right) h  \\
									   \partial_{\nu} h  +\left( d_n H +\frac{N}{2}\left(d_n\tau+\frac{d_n}{(n-1)}|\theta_-|\right)+\frac{Nd_n}{2(n-1)}\beta_0(\nu,\nu)  \right) h \\
              \end{array} \right),
\end{align*}
and given that $\mathcal{F}(1,\beta_0) ={\bf 0}$, 
\begin{align*}
   &c_n R+ b_n \tau^2-c_n|\beta_0|^2 - c_n \rho =0, \nonumber\\
&d_nH+\left(d_n\tau +\frac{d_n}{(n-1)}|\theta_-|\right) - \frac{d_n}{(n-1)}\beta_0(\nu,\nu)=0. \nonumber
\end{align*}
This implies that
\begin{align}
\mathcal{F}'_{1,\beta_0}(h,0) = \left( \begin{array}{c}   -\Delta h +  \left( (N-2)b_n\tau^2 + (N+2)c_n|\beta_0|^2+\frac{N+2}{2}c_n\rho\right) h  \\
									   \partial_{\nu} h  + \left(\frac{N-2}{2}\left(d_n\tau+\frac{d_n}{(n-1)}|\theta_-|\right)+\frac{(N+2)}{2}\frac{d_n}{(n-1)}\beta_0(\nu,\nu)  \right) h \\
              \end{array} \right).
\end{align}
The assumptions $\beta_0(\nu,\nu) \geqs 0$ and $((n-1)\tau + |\theta_-|) \geqs 0$ imply that ${\mathcal{F}'_{1,\beta_0}:W^{2,p}_{\gamma} \to L^p_{\gamma-2}}$ is 
an isomorphism, and the Implicit Function Theorem implies that $\mathcal{L}$ is continuous in a neighborhood of $\beta_0$.
  \end{proof}
 
 Now that we have established existence of global barriers and showed that $\mathcal{G}$ is continuous, we
 are ready to prove Theorems~\ref{thm1:2dec13} and \ref{thm2:2dec13}.
 
 \bigskip
 
{\bf Proof of Theorem~\ref{thm1:2dec13}}
Let $\mathcal{C}^0_+$ denote the set of strictly positive bounded functions on $\cM$. If $\phi \in \mathcal{C}^0_+$, then
by Proposition \ref{prop1:14oct13}, the vector field $W = \mathcal{S}(\phi) \in W^{2,p}_{\delta}$ given
by the solution map of the momentum constraint with boundary conditions~\eqref{eq5:11july13} is well-defined.  
By the remarks preceding Proposition~\ref{prop1:18dec13} and Theorems~\ref{thm1:23jan14} and \ref{uniq}, given $W \in W^{2,p}_{\delta}$,
sub-and super-solutions $\phi_-\leqs \phi_+$, and a harmonic function $\omega$ as in Proposition~\ref{prop1:23dec13} that is asymptotically bounded by $\phi_- $ and $\phi_+$, 
the solution map $T(W) = \varphi$ is well-defined and continuous.

Let $\omega_1,\omega_2 \in \cH$ tend to positive real numbers on each end and suppose that $\omega_2 $ is asymptotically bounded above
by $\omega_1 $.
Let $\phi_+$ be the global supersolution obtained from either Theorem~\ref{thm1:17oct13} or \ref{thm1:24jan13}, where $\phi_+-\beta\omega_1 \in W^{2,p}_{\gamma}$.  
Note that we use the supersolution from Theorem~\ref{thm1:17oct13} if we wish to solve the coupled constraints
with boundary conditions~\eqref{eq5:11july13}-\eqref{eq2:20dec13} with arbitrary $\psi$.  If we wish to obtain a solution $\phi$ satisfying the marginally trapped surface condition
$\phi \leqs \psi$, then we use the supersolution from Theorem~\ref{thm1:24jan13}.
Let $\phi_- \leqs \phi_+$ be the global subsolution
obtained from Theorem~\ref{thm2:17oct13}, where $\phi_- - \alpha\omega_2 \in W^{2,p}_{\gamma}$.  
Let $\omega \in \cH$ be asymptotically bounded by $\phi_- \leqs \phi_+$.  
For this choice of sub-and supersolutions and $\omega$ we may apply Theorem~\ref{thm1:23jan14} to 
obtain $\varphi = T(W)$.  Following the proof of Theorem~\ref{thm1:23jan14}, $\varphi = \omega + \hat{\varphi} \in \cH+W^{2,p}_{\gamma}$.
Let $\it{i}$ denote the compact inclusion $\mathcal{H} + W^{2,p}_\delta \hookrightarrow \mathcal{C}^0$.
A solution $(\phi, W)$ to \eqref{E:ham-abstract} then corresponds to a fixed point of the mapping $\mathcal{G}(\phi) = \it{i}(T(\mathcal{S}(\phi)))$,
which is a continuous, compact mapping.

Define the bounded convex set $\mathcal{A}: = \{\phi \in \mathcal{C}^0_+ : \phi_-\leqs \phi \leqs \phi_+\}$. By construction,
$\mathcal{G}$ maps $\mathcal{A}$ to itself.  Moreover, $\mathcal{A}$ is closed, bounded, and convex. Therefore the Schauder fixed point theorem
implies that $\mathcal{A}$ contains a fixed point $\phi$ of $\mathcal{G}$. Standard estimates imply that $\phi$ 
and $W(\phi)$ both have the desired regularity.  \qed

\bigskip

{\bf Proof of Theorem~\ref{thm1:12feb14}}
The proof if the same as the proof of Theorem~\ref{thm1:2dec13} except for the barriers used.  Given $A_i \in [1,\infty)$, let
$\omega \in \mathcal{H}$ be the associated harmonic function.  Choose $\omega_1, \omega_2  \in \mathcal{H}$ such that
$\omega_1 \leqs \omega_2$ and $\omega$ is asymptotically bounded by $\omega_1$ and $\omega_2$.  By Theorem~\ref{thm1:15feb14}
there exists a global supersolution $\phi_+ = \phi_A$ to the Hamiltonian constraint with boundary conditions~\eqref{eq5:11july13}-\eqref{eq2:20dec13}
such that $\phi_+ - \omega_2 \in W^{2,p}_{\gamma}$.  By setting $\psi = \phi_+$ we can impose the marginally trapped surface condition $\phi \leqs \phi_+$.
By Theorem~\ref{thm1:18feb14} we obtain a global subsolution $\phi_- \leqs \phi_+$ such that $\phi_- - \alpha\omega_1\in W^{2,p}_{\gamma}$
for $\alpha \in (0,1)$.  By construction, the function $\omega$ is asymptotically bounded by $\phi_- \leqs \phi_+$, and as in the proof
of Theorem~\ref{thm1:2dec13} we apply Theorem~\ref{thm1:23jan14} to obtain a solution $\varphi$ such that $\varphi- \omega \in W^{2,p}_{\gamma}$.
Therefore, for given asymptotic limits $A_i \in [1,\infty)$, the solution map $\varphi= T(W(\phi))$ is well-defined for $\phi_- \leqs \phi \leqs \phi_+$.
The rest of the proof follows from the arguments made in the proof of Theorem~\ref{thm1:2dec13}.

%%%%%%%%%%%%%%%%%%%%%%%%%%%%%%%%%%%%%%%%%%%%%%%%%%%%%%%%%%%%%%%%%%%%%%%%%%%%%%
\section{Near-CMC Solutions: An Implicit Function Theorem Argument}
\label{sec:implicit}
%%%%%%%%%%%%%%%%%%%%%%%%%%%%%%%%%%%%%%%%%%%%%%%%%%%%%%%%%%%%%%%%%%%%%%%%%%%%%%

In this section, we provide an alternative approach to obtain solutions to the conformal equations satisfying the
marginally trapped surface boundary conditions.  This approach is based on the Implicit Function Theorem 
argument given in \cite{yCBjIjY00}, and therefore requires that $\|\tau\|_{W^{1,p}_{\delta-1}}$ be sufficiently small.  

We first recall the Implicit Function Theorem.  Suppose that $U$ and $V$ are open subsets of Banach spaces $X$ and $Y$
and $\mathcal{F}$ is a $C^1$ mapping from $U\times V$ into a Banach space $Z$:
$$
\mathcal{F}: X\times Y \to Z.
$$
The Implicit function theorem states that if $\mathcal{F}_y(x_0,y_0)$ is invertible at some solution of $\mathcal{F}(x_0,y_0) = 0$,
then there exists a neighborhood $U'\times V' \subset U\times V$ of $(x_0,y_0)$ such that for each $x \in U'$, there exists a 
unique $y \in V'$ such that $\mathcal{F}(x,y)=0$.  That is, there exists an invertible function $\rho: V' \to U'$ such that
all solutions to $\mathcal{F}(x,y) = 0$ in $U'\times V'$ are of the form $F(\rho(y),y) =0$.  Moreover, if $\mathcal{F}$ is $C^1$ then
$\rho$ is $C^1$.

Given a $k$-tuple $A_1,\cdots,A_k $ of positive numbers in $\mathbb{R}$, let $\omega \in \cH$ be the associated harmonic function.
Suppose that $2-n < \delta \leqs \gamma/2 < 0$.  As in \cite{yCBjIjY00}, define the variables
\begin{align}
x &= (\tau, J) \in X = W^{1,p}_{\delta-1} \times L^p_{\delta-2}(T\cM),   \\
y &= ( \phi-\omega,W) \in Y =(  W^{2,p}_{\gamma}\times W^{2,p}_{\delta}(T\cM) ) \cap \{\phi > 0\}, \nonumber\\
Z & = L^{p}_{\gamma-2}\times W^{1-\frac1p,p}(\Sigma) \times L^p_{\delta-2}(T\cM)\times W^{1-\frac1p,p}(T\Sigma). \nonumber
\end{align}
Let
\begin{align}\label{eq1:17dec13}
\mathcal{F}(x,y) = \left[ \begin{array}{c}    -\Delta \phi + c_n R \phi + b_n \tau^2\phi^{N-1}-c_n|\sigma + \cL W|^2\phi^{-N-1} - c_n \rho\phi^{-\frac{N}{2}} \\
							 \partial_{\nu}\phi+d_nH\phi \\
                                                                        \Delta_{\mathbb{L}} W  + \frac{n-1}{n} \nabla \tau \phi^N + J \\
                                                                        \cL W(\nu,\cdot) - \bV(\phi,\tau)
                                \end{array}\right],
\end{align}
where $\bV(\phi,\tau)(\nu) = (n-1)\tau\phi^{N}-\sigma(\nu,\nu)$, which implies that $S(\nu,\nu) = (n-1)\tau\phi^N$.  We observe
that solutions to $\mathcal{F}(x,y) = 0$ represent solutions to the coupled system \eqref{eq3:27nov13}-\eqref{eq4:27nov13} 
with boundary conditions \eqref{eq5:11july13}-\eqref{eq2:20dec13}
when $|\theta_-| = 0$ and $\psi = \phi$.  These solutions will satisfy the marginally trapped surface conditions if $\tau \geqs 0$ on $\Sigma$.  

Fix $\sigma \in W^{1,2p}_{\gamma/2-1}$ and $\rho \in L^p_{\gamma-2}$. 
%\mnote{check this.}
In order to apply the Implicit Function Theorem to \eqref{eq1:17dec13}, 
we require that $\bV = \bV(\phi,\tau)$ be a $C^1$ vector field in $(\phi, \tau)$. 
We construct such a $\bV$ in the following proposition.

\begin{proposition}\label{prop1:20dec13}
Suppose that $(\cM,g)$ is an $n$-dimensional asymptotically Euclidean manifold of class $W^{2,p}_{\gamma}$ with $\gamma\in (2-n,0)$ 
and compact boundary $\Sigma$.  If $2-n < \delta <\gamma/2 $, $\tau\in  W^{1,p}_{\delta-1}$, and $\phi \in C^0$, then 
there exists a vector field $ \bV(\phi,\tau) \in \bW^{1,p}$ that is $C^1$ in $\phi$ and $\tau$.  Moreover, $\bV$ satisfies
\begin{align}
\bV(\nu) = (n-1)\tau\phi^N-\sigma(\nu,\nu) .
\end{align}
\end{proposition}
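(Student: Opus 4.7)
The plan is to build $\bV$ by a localized, explicit formula near $\Sigma$ and then read off both the $\bW^{1,p}$ regularity and the $C^1$ dependence from that formula. Since the Proposition only constrains $\bV$ through its normal component at $\Sigma$, the right strategy is to put all of the relevant data into a thin collar of $\Sigma$ via a cutoff, and propagate it into the interior using a smooth extension of $\nu$.

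\emph{Step 1 (localization).} Since $\Sigma$ is compact and embedded in $\cM$, the inward normal exponential map produces a tubular neighborhood $U \cong \Sigma \times [0,\epsilon_0)$ for some $\epsilon_0 > 0$. I would pick a cutoff $\chi \in C^\infty_c(U)$ with $\chi \equiv 1$ in a smaller neighborhood of $\Sigma$, and a smooth extension $\tilde{\nu}$ of the outward unit normal $\nu$ into $U$, radial in the tubular coordinates.

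\emph{Step 2 (explicit formula).} Define
$$
\bV(\phi,\tau) \;:=\; \chi \cdot \bigl[(n-1)\,\tau\,\phi^{N} \,-\, \sigma(\tilde{\nu},\tilde{\nu})\bigr] \cdot \tilde{\nu},
$$
extended by $\mathbf{0}$ outside $U$. Because $\chi|_\Sigma = 1$ and $\tilde{\nu}|_\Sigma = \nu$, the prescribed boundary identity $\bV(\nu)|_\Sigma = (n-1)\tau\phi^N - \sigma(\nu,\nu)$ is immediate; no trace extension theorem is needed.

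\emph{Step 3 (regularity).} Since $\mathrm{supp}\,\bV \subset \mathrm{supp}\,\chi$ is compact, unweighted and weighted Sobolev norms are equivalent on this set, and it suffices to show $\bV \in \bW^{1,p}(\mathrm{supp}\,\chi)$. The piece $\chi\,\sigma(\tilde{\nu},\tilde{\nu})\,\tilde{\nu}$ lies in $\bW^{1,2p} \hookrightarrow \bW^{1,p}$ directly from $\sigma \in W^{1,2p}_{\gamma-1}$ together with the smoothness of $\chi$ and $\tilde\nu$. For the nonlinear piece $\chi(n-1)\tau\phi^{N}\tilde{\nu}$, I would invoke $\tau \in W^{1,p}_{\delta-1}$ and the Sobolev-Morrey embedding $W^{1,p}_{\delta-1}\hookrightarrow C^0_{\delta-1}$ (valid for $p>n$), together with multiplication properties of Sobolev spaces on bounded domains, to put the product in $\bW^{1,p}$.

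\emph{Step 4 ($C^1$ dependence).} The map $(\phi,\tau) \mapsto \bV$ is affine in the $\sigma$-term and, for the remaining term, polynomial of degree $N$ in $\phi$ and linear in $\tau$, composed with the bounded linear operators of multiplication by the smooth, compactly supported factors $\chi$ and $\tilde{\nu}$. Continuity of the underlying multilinear products between the relevant function spaces implies that this composition is $C^\infty$, and in particular $C^1$, with explicit derivatives obtained by differentiating monomials in $\phi$ and by linearity in $\tau$.

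The main obstacle is reconciling $\phi \in C^0$ with the desired $\bW^{1,p}$ regularity of the product $\tau\phi^N$: a naive product rule would require a weak gradient of $\phi$ that is not available. The resolution exploits that $p>n$ and the compactness of $\mathrm{supp}\,\chi$, so that $\phi$ enters as a bounded continuous multiplier on a bounded region while all weak derivatives fall on $\tau$ (and on the smooth $\chi, \tilde\nu$); checking the multiplication estimates carefully on $\mathrm{supp}\,\chi$ is the most delicate part of the argument, but is the sort of routine bookkeeping one finds in \cite{HNT07b}.
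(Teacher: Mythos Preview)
Your construction is essentially the paper's: both build a smooth compactly supported extension of the unit normal $\nu$ into a collar of $\Sigma$ and set $\bV = \bigl((n-1)\tau\phi^N - \sigma(\nu,\nu)\bigr)\bX$, extended by zero. The only difference is cosmetic: you use a single tubular neighborhood and one cutoff, whereas the paper covers $\Sigma$ by boundary coordinate charts, pulls back the constant vector $(0,\dots,0,-1)$ in each chart to get local fields $\bX_i$, and glues with a partition of unity $\{\chi_i\}$ to obtain $\bX = \sum_i \chi_i \bX_i$. Both produce the same formula on $\Sigma$ and both localize to a compact set.

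There is one point where your argument is not right, and to be fair the paper is equally loose here. Your claim in Step~3 that ``all weak derivatives fall on $\tau$'' is false: to put $\tau\phi^N$ into $W^{1,p}$ one needs a weak gradient of $\phi^N$, and $\phi\in C^0$ alone does not supply one; a continuous multiplier does not preserve $W^{1,p}$. The paper simply asserts $\bV\in\bW^{1,p}$ ``given the regularity assumptions on $\tau,\sigma$ and $\phi$'' without further comment. In the actual application (the Implicit Function Theorem argument of Section~\ref{sec:implicit}) one has $\phi-\omega\in W^{2,p}_\gamma$ with $\omega$ smooth, so $\phi\in W^{2,p}_{\mathrm{loc}}$ and the product is genuinely in $W^{1,p}$ on $\mathrm{supp}\,\chi$; that is the honest resolution, not a multiplier trick.
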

\begin{proof}
For every point $p \in \Sigma$ there is a neighborhood $U$
and a coordinate map $\Psi$ such that $\Psi(p) \in \{{\bf x}\in\mathbb{R}^n~|~x_n = 0\}$ and $V = \Psi(U) \subseteq \{{\bf x} \in\mathbb{R}^n~|~x_n \geqs 0\}$.
There exists a radius $R> 0$ such that $B_R(\Psi(p))\cap \{\bx: x_n \geqs 0\}  \subset V$.  Let $$A = \Psi^{-1}(B_R(\Psi(p))\cap \{\bx:x_n \geqs 0\} ).$$
On $V$, we define the constant vector field $\bW({\bf x}) = (0,\cdots,0,-1)$, and then consider the pullback $\bX= \Psi^*(\bW)$ on $A$.  By construction, $\bX= \nu$ on $\Sigma \cap A$.  
The compactness of $\Sigma$ implies that there exists some collection of $p_i \in \Sigma$ 
such that the associated sets $A_i$ as above determine a finite covering of $\Sigma$ for $1 \leqs i \leqs M$.  
Let $\bX_i$ be the associated local vector fields defined on $A_i$, and let $A_0$ be an open set
such that $\cup_{i=0}^M A_i = \cM$.  Finally, let $\chi_i$ be a partition of unity subordinate to
the covering $\{A_i\}$.  Setting 
$$\bV = \sum_{i=0}^M \chi_i ((n-1)\tau\phi^N-\sigma(\nu,\nu))\bX_i =  ((n-1)\tau\phi^N-\sigma(\nu,\nu))\bX,$$ 
where $\bX_0 = {\bf 0}$ on $A_0$,
it is clear that $\bV =  ((n-1)\tau\phi^N-\sigma(\nu,\nu))\nu$ on $\Sigma$ given that $\bX = \nu$ on $\Sigma$.  
By construction, $\bV\in \bW^{1,p}$ given the regularity assumptions on $\tau,\sigma$ and $\phi$ and 
the fact that $\bV$ vanishes outside of a neighborhood of $\Sigma$.  Clearly $\bV$ will be $C^1$ in $\phi$ and $ \tau$.  
\end{proof}

The properties of the vector $\bV$ constructed in Proposition~\ref{prop1:20dec13} and the multiplication properties 
of weighted Sobolev spaces imply that $\mathcal{F}(x,y):X\times Y \to Z$ will be $C^1$ as long as
$p > n$ and $\delta \in (2-n,0)$ (cf. \cite{Ba86,yCBjIjY00}).
Letting $\bV = ((n-1)\tau\phi^N-\sigma(\nu,\nu))\bX$ as in the
proof of Proposition~\ref{prop1:20dec13}, the partial derivative for a given $(x,y)$ 
\begin{align*}
\mathcal{F}'_y(x,y):Y &\to Z,\\
(h,\beta): &\to \mathcal{F}'_y(x,y)(h,\beta),
\end{align*}
is given by
\begin{align}
\mathcal{F}'_y(x,y)(h,\beta) = \left( \begin{array}{c}   -\Delta h + \alpha(\phi,W) h  - 2c_n\phi^{-N-1}(\sigma+ \cL W)\cdot\cL \beta \\
									   \partial_{\nu}h+ d_n H h  \\
									    \Delta_{\mathbb{L}} \beta + N\frac{(n-1)}{n}\nabla\tau \phi^{N-1} h \\
									    \cL\beta(\nu,\cdot) - N(n-1)\tau\phi^{N-1}\bX h
              \end{array} \right),
\end{align}
where 
$$
\alpha(\phi,W) = c_n R + (N-1)b_n\tau^2\phi^{N-2} + (N+1)c_n|\sigma+\cL W|^2\phi^{-N-2}+\frac{c_nN}{2}\rho\phi^{-\frac{N}{2}-1}
$$
and $\bX =\nu$ is a smooth vector field on $\cM$ vanishing in a neighborhood of $\Sigma$.

%\mnote{Check regularity assumption of following theorem. Can probably weaken.}
\begin{theorem}\label{thm1:17dec13}
Suppose that $(\cM,g)$ is an n-dimensional asymptotically Euclidean manifold of class $W^{2,p}_{\gamma}$, where $\gamma \in (2-n, 0)$ and $p > n$.  
Assume that $\sigma \in W^{1,2p}_{\gamma/2-1}$ and $\rho \in L^{p}_{\gamma-2}$ are given.  Suppose that $\mathcal{F}(x,y) =0$ has a solution when $y_0 = (\tau_0,{\bf J}_0) = (0,{\bf 0})$,
and denote this solution by $x_0 = (\phi_0,W_0)$.  If $\alpha(\phi_0,W_0) \geqs 0$ and $H\geqs 0$, then there exists a neighborhood $U$ of $(\tau_0,{\bf J}_0) $ in $X$ such 
that the coupled constraints with boundary conditions ~\eqref{eq5:11july13}-\eqref{eq4:8aug13} have a unique solution $(\phi,W)$, $\phi > 0$, $(\phi-\omega,W) \in Y$.
\end{theorem}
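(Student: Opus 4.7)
\textbf{Proof plan for Theorem~\ref{thm1:17dec13}.} The plan is to apply the Implicit Function Theorem to the map $\mathcal{F}:X\times Y\to Z$ defined in~\eqref{eq1:17dec13}, centered at the point $(x_0,y_0)=((\phi_0,W_0),(0,\mathbf{0}))$ which, by hypothesis, satisfies $\mathcal{F}(x_0,y_0)=0$. First I would verify that $\mathcal{F}$ is a $C^1$ map between the stated Banach spaces. This is routine but relies on three ingredients that are already in place: the multiplication properties of weighted Sobolev spaces on AE manifolds under the conditions $p>n$ and $\gamma\in(2-n,0)$ (citing Bartnik); the fact that $\phi_0\in\omega+W^{2,p}_\gamma$ embeds continuously into $C^0$ and is strictly positive, so that negative powers of $\phi$ are $C^1$ in a neighborhood of $\phi_0$; and the $C^1$ dependence of $\bV(\phi,\tau)$ on $(\phi,\tau)$ provided by Proposition~\ref{prop1:20dec13}.

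The heart of the argument is to show that $\mathcal{F}'_y(x_0,y_0):Y\to Z$ is a topological isomorphism. Evaluating the linearization displayed before the theorem at $\tau_0=0$ kills the two cross-coupling terms $N\tfrac{n-1}{n}\nabla\tau\,\phi^{N-1}h$ and $N(n-1)\tau\,\phi^{N-1}\bX h$, so the operator acquires a lower-triangular structure: given a target $(f_1,g_1,f_2,g_2)\in Z$, one can first solve the momentum block
\[
\Delta_{\mathbb{L}}\beta = f_2 \ \text{on $\cM$}, \qquad \cL\beta(\nu,\cdot)=g_2\ \text{on $\Sigma$},
\]
uniquely for $\beta\in W^{2,p}_\delta$, and then substitute $\cL\beta$ into the Hamiltonian block to obtain a scalar elliptic Robin problem for $h\in W^{2,p}_\gamma$.

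The momentum block is an isomorphism by the $\cP_2$ part of Proposition~\ref{prop1:14oct13}, using the assumption $p>n$ to eliminate conformal Killing obstructions. For the Hamiltonian block I must show that the operator $\cP_\alpha:=(-\Delta+\alpha(\phi_0,W_0),\,\partial_\nu+d_nH)$ is an isomorphism from $W^{2,p}_\gamma$ to $L^p_{\gamma-2}\times W^{1-\frac1p,p}(\Sigma)$. Fredholmness with index zero follows from a perturbation of the $\cP_1$ result in Proposition~\ref{prop1:14oct13}, since $\alpha(\phi_0,W_0)-c_nR\in L^p_{\gamma-2}$ is a compact lower-order perturbation of $\cP_1$. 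Injectivity is the step where the hypotheses $\alpha(\phi_0,W_0)\geqs 0$ and $H\geqs 0$ are used: if $\cP_\alpha h=0$ with $h\in W^{2,p}_\gamma$, then $h\to 0$ on each end, so pairing against $h$ and integrating by parts gives
\[
\int_\cM |\nabla h|^2\,dV + \int_\cM \alpha(\phi_0,W_0)\,h^2\,dV + \int_\Sigma d_nH\,h^2\,dA = 0,
\]
forcing $\nabla h=0$ and therefore $h\equiv 0$ by the decay. Combined with Fredholm index zero, this yields the required isomorphism.

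With $\mathcal{F}'_y(x_0,y_0)$ invertible, the Implicit Function Theorem produces neighborhoods $U\ni(0,\mathbf{0})$ in $X$ and $V\ni(\phi_0-\omega,W_0)$ in the full space $W^{2,p}_\gamma\times W^{2,p}_\delta(T\cM)$, together with a $C^1$ map assigning to each $(\tau,J)\in U$ the unique $(\phi-\omega,W)\in V$ solving $\mathcal{F}=0$. Shrinking $U$ if necessary, continuity of this map and the strict positivity of $\phi_0$ (bounded below by a positive constant on $\cM$, since $\phi_0\to A_i>0$ on each end) guarantees $\phi>0$, so the solution lies in $Y$. The main obstacle in this program is precisely the invertibility of $\cP_\alpha$: the positivity of $\alpha(\phi_0,W_0)$ and $H$ is exactly what is needed to run the energy/maximum-principle argument for injectivity, but it must be verified that these hypotheses are compatible with what the Fredholm theory in the AE setting of Maxwell requires, and that the perturbation from $\cP_1$ to $\cP_\alpha$ preserves the index-zero property — once this is in place, the rest of the proof is a direct application of the Implicit Function Theorem.
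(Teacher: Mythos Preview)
Your proposal is correct and follows essentially the same route as the paper: compute $\mathcal{F}'_y$ at $(x_0,y_0)$, observe that setting $\tau_0=0$ decouples the momentum block so the linearization is lower-triangular, and then invoke the isomorphism results for the vector Laplacian and the scalar Robin operator to apply the Implicit Function Theorem. The paper's proof is terser---it simply writes out the triangular linearization and cites Proposition~\ref{prop1:14oct13} for invertibility---whereas you spell out why the scalar block $(-\Delta+\alpha(\phi_0,W_0),\,\partial_\nu+d_nH)$ is an isomorphism (Fredholm index zero by lower-order perturbation of $\cP_1$, injectivity from the sign hypotheses $\alpha\geqs0$, $H\geqs0$); this is exactly the content that the paper's citation leaves implicit, and your justification via the maximum principle~\ref{wmaxprinc} is cleaner here than the energy pairing, since the latter requires integrability of $h^2$ and $|\nabla h|^2$ that is not guaranteed across the full range $\gamma\in(2-n,0)$.
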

\begin{proof}
We calculate
 \begin{equation}\label{eq3:20dec13}
 \begin{aligned}
 \mathcal{F}'_y(x_0,y_0)(h,\beta) = \left( \begin{array}{c}   -\Delta h + \alpha(\phi_0,W_0) h  - 2c_n\phi_0^{-N-1}(\sigma+\cL W_0)\cdot\cL \beta \\
									   \partial_{\nu} h +d_nH h  \\
									    \Delta_{\mathbb{L}} \beta  \\
									    \cL\beta(\nu,\cdot) 
              \end{array} \right).
              \end{aligned}
 \end{equation}
By Proposition~\ref{prop1:14oct13} the operator $\mathcal{F}'_y(x_0,y_0):Y \to Z$ is invertible.  Therefore the Implicit Function Theorem implies the result.
\end{proof}

\begin{corollary}\label{cor1:19feb14}
Suppose that $(\cM,g)$ is an n-dimensional asymptotically Euclidean manifold of class $W^{2,p}_{\gamma}$, where $\gamma \in (2-n, 0)$ and $p > n$.  
Assume that $\sigma \in W^{1,2p}_{\gamma/2-1}$ and $\rho \in L^{p}_{\gamma-2}$ are given.  
Let $x_0 = (\phi_0,W_0)$ denote the solution to $\mathcal{F}(x,y) =0$ when $y_0 = (\tau_0,{\bf J}_0) = (0,{\bf 0})$.
Then there exists a neighborhood
of $(\phi_0,W_0)$ in which solutions to $\mathcal{F}(x,y) = 0$ exist and are unique.  In particular, there exist
unique solutions $(\phi,W)$ in this neighborhood where $(\phi-\omega,W) \in Y$ satisfies the marginally trapped surface conditions.
\end{corollary}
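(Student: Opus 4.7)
The plan is to read the corollary as a direct unpacking of Theorem~\ref{thm1:17dec13}, augmented only with a verification that the solutions produced by the Implicit Function Theorem automatically satisfy the full marginally trapped surface conditions. Theorem~\ref{thm1:17dec13} supplies a $C^1$ solution map, defined on an open neighborhood of $(\tau_0,J_0) = (0,\mathbf{0})$ in $X$, sending data $(\tau,J)$ to the unique nearby conformal pair $(\phi-\omega,W) \in Y$ with $\mathcal{F}=0$. In particular, the image lies in a neighborhood of $(\phi_0-\omega,W_0)$, and each pair in this image is uniquely determined by its data. This is precisely the existence and uniqueness content of the corollary, so the only remaining work is to check that every such $(\phi,W)$ satisfies conditions \eqref{eq5:11july13}--\eqref{eq2:20dec13} together with $\phi \leqs \psi$, not merely the elliptic system that defines $\mathcal{F}$.

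For this verification I return to the definition \eqref{eq1:17dec13} of $\mathcal{F}$. The boundary vector field is prescribed there as $\bV(\phi,\tau)(\nu) = (n-1)\tau\phi^N - \sigma(\nu,\nu)$, and the boundary equation $\cL W(\nu,\cdot) = \bV$ therefore forces $S(\nu,\nu) = \bV(\nu) + \sigma(\nu,\nu) = (n-1)\tau\phi^N$. Under the near-CMC hypothesis $\tau \geqs 0$ on $\Sigma$ of Theorem~\ref{thm2:2dec13}---which is the setting in which this corollary is invoked---this immediately gives $S(\nu,\nu) \geqs 0$, which is \eqref{eq2:20dec13} in the case $|\theta_-| = 0$ with $\psi = \phi$. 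Choosing $\psi = \phi$ trivially realizes $\phi \leqs \psi$, and the first two lines of $\mathcal{F}$ reproduce exactly the boundary Lichnerowicz equation \eqref{eq5:11july13} with $\theta_- = 0$ and the vector boundary condition \eqref{eq4:8aug13}, completing the verification of the marginally trapped surface conditions.

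The only genuine analytic obstacle is invertibility of the Gateaux derivative $\mathcal{F}'_y$ at the base point, and this was already handled inside the proof of Theorem~\ref{thm1:17dec13}. At $\tau_0 = 0$, $J_0 = \mathbf{0}$ the linearization decouples, because the cross-coupling terms $N\tfrac{n-1}{n}\nabla\tau\,\phi^{N-1}h$ and $N(n-1)\tau\phi^{N-1}\bX h$ vanish identically, reducing the question to the diagonal blocks of \eqref{eq3:20dec13}. Proposition~\ref{prop1:14oct13} handles the momentum block since $p > n$, while the linearized Lichnerowicz block is Fredholm of index zero and has trivial kernel under the sign conditions $\alpha(\phi_0,W_0) \geqs 0$ and $H \geqs 0$ through a standard maximum-principle argument. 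Once $\mathcal{F}'_y(x_0,y_0)$ is an isomorphism, the Implicit Function Theorem combined with the algebraic identification of $S(\nu,\nu)$ above yields the corollary.
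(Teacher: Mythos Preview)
Your overall structure and your verification of the marginally trapped surface conditions are sound, and in fact more explicit than the paper's one-line remark that ``those solutions \ldots which correspond to $\tau \geqs 0$ will satisfy the marginally trapped surface conditions.'' However, there is a substantive mismatch in how you obtain invertibility of $\mathcal{F}'_y(x_0,y_0)$.

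You invoke Theorem~\ref{thm1:17dec13} directly, importing its sign hypotheses $\alpha(\phi_0,W_0)\geqs 0$ and $H\geqs 0$. The corollary does not assume these, and in the intended application (Theorem~\ref{thm2:2dec13}) they need not hold pointwise: the hypothesis there is $g\in\cY^+$, which does not force $R\geqs 0$ (hence not $\alpha\geqs 0$) nor $H\geqs 0$ on $\Sigma$. So routing the argument through Theorem~\ref{thm1:17dec13} leaves a gap unless you separately justify those pointwise sign conditions, which in general you cannot.

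The paper proceeds differently. It first cites \cite{yCBjIjY00} for the existence and uniqueness of the base solution $(\phi_0,W_0)$ at $(\tau_0,J_0)=(0,{\bf 0})$, a point you pass over. It then uses $g\in\cY^+$ together with Proposition~\ref{prop1:14oct13} to argue that $\mathcal{F}'_y(x_0,y_0)$ in \eqref{eq3:20dec13} is invertible: since $\alpha(\phi_0,W_0)=c_nR+(\text{nonnegative terms})$, the linearized Lichnerowicz block is $\cP_1$ perturbed by multiplication by a nonnegative potential, and the Yamabe-positive condition makes $\cP_1$ (and hence this perturbation) an isomorphism without any pointwise sign assumption on $R$ or $H$. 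The Implicit Function Theorem then gives the local parametrization, and restricting to $\tau\geqs 0$ on $\Sigma$ yields the marginally trapped surface solutions, just as you describe.
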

\begin{proof}
The existence and uniqueness of $ (\phi_0,W_0)$ such that $\phi_0 - \omega \in W^{2,p}_{\gamma}$ follows from Section 8 in \cite{yCBjIjY00}.
By Proposition~\ref{prop1:14oct13} the assumption that $g \in \cY^+$ also implies that $ \mathcal{F}'_y(x_0,y_0)$ in \eqref{eq3:20dec13}
is invertible.  Therefore, we may apply the Implicit Function Theorem to uniquely parametrize $(\phi,W) \in X$ in terms
of $(\tau,J)$ in a neighborhood of $(\phi_0,W_0)$.  Those solutions in a neighborhood of $(\phi_0,W_0)$ which correspond to $\tau \geqs 0$
will satisfy the marginally trapped surface conditions.
\end{proof}

\appendix

%%%%%%%%%%%%%%%%%%%%%%%%%%%%%%%%%%%%%%%%%%%%%%%%%%%%%%%%%%%%%%%
\section{Solutions to Semilinear, Boundary Value problems}\label{sec:app}
%%%%%%%%%%%%%%%%%%%%%%%%%%%%%%%%%%%%%%%%%%%%%%%%%%%%%%%%%%%%%%%

%\mnote{Haven't reviewed this in a bit.  Check over carefully.  -cm}

Suppose that $(\cM,g)$ is an $n$-dimensional,
asymptotically Euclidean manifold with boundary $\Sigma$ of class $W^{2,p}_{\gamma}$, with $\gamma \in (2-n,0)$ and $p>n$.
Denote the ends of $\cM$ by $E_i$ for $1 \le i \le m$.
Here we investigate the existence of solutions to the semilinear, Robin problem
\begin{align}\label{eq1:22dec13}
-&\Delta u = f_1(x,u) \quad \text{on $\cM$},\\
&\partial_{\nu} u = f_2(x,u) \quad \text{on $\partial\cM$.} \nonumber
\end{align} 
The functions $f_i(x,y):\cM\times I_i \to \mathbb{R}$ for $i \in\{1,2\}$ are of the form
$$
f_i(x,y) = \sum_{j=1}^{N_i} a_{ij}(x)b_{ij}(y),
$$
where each $b_{ij}(y)$ is a smooth function on $I_i\subset \mathbb{R}$,
and $a_{ij}(x) \in L^{p}_{\gamma-2}$.

In order to develop an iterative method which solves \eqref{eq1:22dec13}, we will require the following version of the
weak maximum principle.

\begin{lemma}\label{wmaxprinc}
Suppose that $(\cM,g)$ satisfies the assumptions above, and that $V(x) \in L^p_{\gamma-2}$ and
$\mu(x) \in W^{1-\frac1p,p}(x)$ are nonnegative.  If
\begin{align*}
-&\Delta u + V(x) u \ge 0 \quad \text{on $\cM$},\\
&\partial_{\nu} u + \mu(x) u \ge 0 \quad \text{on $\partial\cM$}, 
\end{align*}
and $u \to A_i \ge 0$ on each end $E_i$, then $u \ge 0$.
\end{lemma}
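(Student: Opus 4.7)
The plan is to establish the weak maximum principle by testing the differential inequality against the negative part $u_- := \max(-u, 0)$ and exploiting the non-negativity of $V$, $\mu$, and $u_-$. The hypothesis $u \to A_i \geq 0$ on each end (together with the implicit asymptotic regularity $u \in \cH + W^{2,p}_\gamma$ consistent with the rest of the paper) guarantees that $u_-$ is bounded, has the local $W^{1,2}$ regularity required for the chain rule $\nabla u_- = -\chi_{\{u<0\}} \nabla u$, and decays to zero at infinity on each end.

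First I would introduce a family of smooth cutoff functions $\eta_R \in C^\infty_c(\cM)$ with $\eta_R \equiv 1$ on $\{r \leq R\}$, $\eta_R \equiv 0$ on $\{r \geq 2R\}$, and $|\nabla \eta_R| \leq C/R$. Multiplying $-\Delta u + Vu \geq 0$ by the compactly supported admissible test function $\eta_R^2 u_- \geq 0$ and integrating by parts yields
\begin{align*}
\int_\cM \nabla u \cdot \nabla(\eta_R^2 u_-)\, dV - \int_\Sigma (\partial_\nu u)\, \eta_R^2 u_- \, dA + \int_\cM V u\, \eta_R^2 u_- \, dV \geq 0.
\end{align*}
Using the chain rule to convert $\nabla u \cdot \nabla u_- = -|\nabla u_-|^2$ on $\{u<0\}$ and $u u_- = -u_-^2$, together with the pointwise boundary inequality $(\partial_\nu u) u_- \geq \mu u_-^2$ (obtained by multiplying $\partial_\nu u + \mu u \geq 0$ by $u_- \geq 0$ and substituting $u = -u_-$ where $u<0$), this rearranges to
\begin{align*}
\int_\cM \eta_R^2 |\nabla u_-|^2\, dV + \int_\cM V u_-^2 \eta_R^2\, dV + \int_\Sigma \mu u_-^2 \eta_R^2\, dA \leq 2 \int_\cM \eta_R u_- |\nabla u_-|\, |\nabla \eta_R|\, dV.
\end{align*}

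Applying Cauchy--Schwarz with parameter $\tfrac{1}{2}$ to the right-hand side bounds it by $\tfrac12 \int \eta_R^2 |\nabla u_-|^2 + 2 \int |\nabla \eta_R|^2 u_-^2$; absorbing the first term into the left gives
\begin{align*}
\tfrac12 \int_\cM \eta_R^2 |\nabla u_-|^2 + \int_\cM V u_-^2 \eta_R^2 + \int_\Sigma \mu u_-^2 \eta_R^2 \leq 2 \int_\cM |\nabla \eta_R|^2 u_-^2.
\end{align*}
I would then pass to the limit $R \to \infty$: since $|\nabla \eta_R|$ is supported on the annular region $\{R \leq r \leq 2R\}$ with $|\nabla \eta_R|^2 \leq C/R^2$, and since $u_-$ decays at infinity with a rate inherited from $u - \omega \in W^{2,p}_\gamma$, the integral on the right tends to zero. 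This forces $\nabla u_- \equiv 0$ almost everywhere and $\mu u_- \equiv 0$ on $\Sigma$; hence $u_-$ is locally constant, and the asymptotic condition $u_- \to 0$ on each end yields $u_- \equiv 0$, i.e., $u \geq 0$.

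The main technical obstacle I anticipate is ensuring $\int |\nabla \eta_R|^2 u_-^2 \to 0$ uniformly for all admissible decay rates $\gamma \in (2-n, 0)$. A naive bound using only $\|u_-\|_\infty$ and the Euclidean-type volume growth $R^n$ of the annulus would fail for $\gamma$ near $0$; I would handle this either by (i) using the stronger weighted decay $u_- \leq C r^\gamma$ on any end where $A_i = 0$ (so the integrand is controlled by $R^{n-2+2\gamma}$, requiring a refined cutoff or Hardy-type estimate when $\gamma$ is close to $0$), or (ii) replacing $\eta_R$ by a Gaffney-style cutoff whose gradient decays logarithmically, which reduces the necessary decay hypothesis on $u_-$. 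A parallel issue is justifying the integration by parts itself in the weighted setting, which is handled by a standard density/approximation of $\eta_R^2 u_-$ by smooth compactly supported functions.
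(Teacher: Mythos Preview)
Your approach is correct in outline but takes a different route from the paper's, and the technical obstacle you flag is exactly what the paper's argument is designed to sidestep. The paper does not use cutoffs at all: instead of testing against $u_-$, it sets $w=-u$ and tests against $v=(w-\epsilon)^+$ for fixed $\epsilon>0$. Because $u\to A_i\ge 0$ on each end, one has $w-\epsilon\to -A_i-\epsilon<0$ at infinity, so $v$ is automatically compactly supported. A single integration by parts then gives
\[
\|\nabla v\|_{L^2}^2 = \int_{\cM}\nabla w\cdot\nabla v
= -\int_{\cM}(\Delta w)\,v + \int_{\Sigma}(\partial_\nu w)\,v
\le -\int_{\cM}V\,wv - \int_{\Sigma}\mu\,wv \le 0,
\]
since $wv\ge 0$ on the support of $v$ and $V,\mu\ge 0$. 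Hence $v\equiv 0$, i.e.\ $u\ge -\epsilon$, and letting $\epsilon\to 0$ finishes. The $\epsilon$-shift is the key simplification: it trades the unbounded domain for a compactly supported test function, so no tail estimate is ever needed, and only the qualitative asymptotic $u\to A_i\ge 0$ is used.

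By contrast, your cutoff approach needs $\int|\nabla\eta_R|^2 u_-^2\to 0$, and as you note the naive bound $\lesssim R^{\,n-2+2\gamma}$ fails to decay for $\gamma\in[(2-n)/2,0)$. Your proposed remedies (Hardy-type estimates, logarithmic/Gaffney cutoffs) can likely be made to work, but none is immediate for the full range $\gamma\in(2-n,0)$, and each would reintroduce a quantitative dependence on the decay rate of $u_-$ that the paper's argument avoids entirely.
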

\begin{proof}
Let $w = -u$.  Given that $u \to A_i \ge 0$ on each end, the function $v =(w-\e)^+ $ has compact support.
By Sobolev embedding $v \in W^{1,2}$, and $wv \ge 0$. We have
\begin{align*}
\|\nabla v\|^2_{L^2(\cM )} &= \int_{\mathcal{M} } \nabla w \cdot \nabla v ~dV = -\int_{\mathcal{M} }(\Delta w) v~dV + \int_{\partial\cM }(\partial_{\nu}w) v~dA \\
             & \le -\int_{\mathcal{M} } V(x)wv~dV - \int_{\partial\cM } \mu(x)wv~dA \le 0. 
\end{align*} 
Therefore $v \equiv 0$ and $u \ge -\e$ on $\cM$.  Letting $\e \to 0$ we have that $u \ge 0$.  
\end{proof}

We also require a version of the strong maximum taken from \cite{dM05a}.  For completeness, we
state it here without proof.
\begin{lemma}\label{smaxprinc}
Suppose that $(\cM,g)$ satisfies the assumptions above and $V(x) \in L^p_{\gamma-2}$ and
$\mu(x) \in W^{1-\frac1p,p}(x)$.  Suppose $u(x) \in W^{2,p}_{\gamma}$ is nonnegative and
\begin{align*}
-&\Delta u + V(x) u \ge 0 \quad \text{on $\cM$},\\
&\partial_{\nu} u + \mu(x) u \ge 0 \quad \text{on $\partial\cM$}. 
\end{align*}
If $u(x) = 0$ for some $x \in \cM$, then $u$ vanishes identically.
\end{lemma}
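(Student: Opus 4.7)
The plan is to argue by connectedness, showing that the vanishing set $Z = \{x \in \cM : u(x) = 0\}$ is simultaneously nonempty, closed, and open in $\cM$, so that connectedness of $\cM$ forces $Z = \cM$. Nonemptiness is given by hypothesis. Closedness follows immediately from continuity of $u$, which is available since $p > n$ gives the embedding $W^{2,p}_{\gamma} \hookrightarrow C^0_{\gamma} \hookrightarrow C^0$. The entire content of the proof is therefore to establish that $Z$ is open in $\cM$, and this splits into an interior case and a boundary case depending on whether a putative zero $x_0 \in Z$ lies in $\mathrm{int}\,\cM$ or on $\Sigma$.

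For the interior case, I would fix $x_0 \in Z \cap \mathrm{int}\,\cM$, choose a geodesic ball $B_r(x_0) \Subset \mathrm{int}\,\cM$, and rewrite the differential inequality as $-\Delta u + V^+ u \geqs V^- u \geqs 0$, using $u \geqs 0$ and the splitting $V = V^+ - V^-$. Since $V^+ \in L^p_{\mathrm{loc}}$ with $p > n > n/2$, this puts us in the regime of the classical strong maximum principle for weak solutions with sub-critical potentials (the Trudinger/Gilbarg-Trudinger version valid for $L^q$ coefficients, $q > n/2$). That version gives $u \equiv 0$ on $B_r(x_0)$, so $B_r(x_0) \subset Z$.

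The boundary case is the main obstacle and requires the Hopf boundary point lemma in the Robin setting. Let $x_0 \in Z \cap \Sigma$ and suppose, for contradiction, that $u$ does not vanish identically on any neighborhood of $x_0$ in $\cM$. Then there exists a point $y$ arbitrarily close to $x_0$ with $u(y) > 0$; applying the interior strong maximum principle just established to the connected component of $\{u > 0\}$ in $\mathrm{int}\,\cM$ containing $y$, one can find an interior half-ball $B^+ \subset \cM$ touching $\Sigma$ precisely at $x_0$ on which $u > 0$ except at $x_0$ itself, and on which $-\Delta u + V^+ u \geqs 0$ holds. The Hopf boundary point lemma, applied in a boundary chart straightening $\Sigma$ to a piece of hyperplane and reducing to the half-space setting (again valid for $L^p$ coefficients with $p > n$), then yields the strict inequality $\partial_\nu u(x_0) < 0$, where $\nu$ is the outward unit normal. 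This flatly contradicts the Robin hypothesis $\partial_\nu u(x_0) + \mu(x_0)\,u(x_0) = \partial_\nu u(x_0) \geqs 0$. Hence no such $y$ exists, and $u \equiv 0$ on an entire $\cM$-neighborhood of $x_0$, proving $Z$ is open also at boundary points.

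The only delicate point, beyond the routine connectedness wrap-up, is verifying that the Hopf boundary point lemma continues to hold under the present low regularity: the potential $V$ is merely $L^p$, the boundary coefficient $\mu$ lives in $W^{1-1/p,p}(\Sigma)$, and $\Sigma$ is only of class $W^{2,p}_{\gamma}$. The assumption $p > n$ is exactly what is needed so that, after straightening the boundary by a $W^{2,p}$ chart (which preserves the class of admissible potentials and produces bounded metric coefficients near $x_0$), one can apply the standard barrier construction with an exponential comparison function $v = e^{-\alpha r^2} - e^{-\alpha R^2}$ on an interior tangent ball, as in Gilbarg-Trudinger. One merely has to check that $-\Delta v + V^+ v$ has the right sign on the shell, which is where the choice of $\alpha$ large compared to $\|V^+\|_{L^p}$ on the shell enters. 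This is standard but technical, and is the step where one really sees why the hypothesis $p > n$ is used throughout.
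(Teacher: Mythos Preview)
The paper does not prove this lemma; it explicitly states it without proof and refers to Maxwell~\cite{dM05a} for the argument. So there is no proof in the paper to compare against, and your proposal supplies one where the paper gives none.

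Your connectedness strategy via the interior strong maximum principle and the Hopf boundary point lemma is the standard route and is correct in outline. One small streamlining: in the boundary case you do not need to hunt for an interior half-ball touching $\Sigma$ at $x_0$ on which $u>0$ except at $x_0$. As soon as $u(y)>0$ at \emph{any} point $y$ (boundary or interior), continuity gives $u>0$ on a full neighborhood of $y$, hence at some interior point; the interior strong maximum principle then forces $u>0$ on the entire connected interior of $\cM$. At that stage the boundary zero $x_0$ automatically has $u>0$ on the interior side of any small tangent ball, and Hopf applies directly. This avoids the slightly awkward ``find a half-ball touching precisely at $x_0$'' step.

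Your remarks about the regularity threshold are on point: $p>n$ gives $C^{1,\alpha}$ boundary regularity (hence the interior ball condition) and puts the potential $V^+$ comfortably above the $L^{n/2}$ threshold needed for the Gilbarg--Trudinger strong maximum principle, and above $L^n$ for the Hopf barrier construction. That is indeed where the hypothesis is used.
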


In the following Lemma we construct an auxiliary, harmonic function
which allows us to freely specify the asymptotic limit $A_j$ on each end $E_j$ of
the solution to \eqref{eq1:22dec13}.
This argument is a modification of an argument given in \cite{DIMM13}.

\begin{proposition}\label{prop1:23dec13}
Suppose $\cM$ has ends $E_1,\cdots,E_k$, and let $A_j \in (-\infty,\infty)$ for
$1\le j \le k$.  Then there exists a unique function $\omega$ solving
\begin{align*}
-&\Delta \omega = 0 \quad \text{on $\cM$},\\
 &\partial_{\nu} \omega = 0 \quad \text{on $\Sigma$},
\end{align*}
which tends to $A_j$ on each end $E_j$.  Moreover,
$\min A_j \le \omega \le \max A_j$.
\end{proposition}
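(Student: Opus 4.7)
The plan is to construct $\omega$ by first building an auxiliary function with the prescribed asymptotic data, and then correcting it by a decaying harmonic function obtained via the isomorphism property of the Neumann Laplacian on weighted Sobolev spaces.

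First, I would choose smooth cutoff functions $\chi_j$ supported in the end $E_j$, each equal to $1$ outside a large compact subset of $E_j$ and vanishing on a neighborhood of $\Sigma$ (as well as on the other ends). Define
\[
\tilde\omega(x) = \sum_{j=1}^{k} A_j\,\chi_j(x).
\]
Then $\tilde\omega$ is smooth and tends to $A_j$ on $E_j$, the mismatch $\Delta\tilde\omega$ is compactly supported (hence lies in $L^p_{\delta-2}$ for any admissible $\delta$), and $\partial_\nu \tilde\omega \equiv 0$ on $\Sigma$ by construction.

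Next, I would apply Proposition~\ref{prop1:14oct13} to the operator $\cP_1$ with $R\equiv 0$, $H\equiv 0$, i.e.\ to $(-\Delta,\partial_\nu):W^{2,p}_{\delta}\to L^p_{\delta-2}\times W^{1-\frac1p,p}(\Sigma)$ with $2-n<\delta<0$. Since the constant-zero background metric trivially satisfies $\cY>0$, and more importantly the kernel is trivial on these weighted spaces (any $v\in W^{2,p}_{\delta}$ with $-\Delta v=0$, $\partial_\nu v=0$ embeds into $C^0_\delta$ and hence decays at infinity; integration by parts on a suitable exhaustion gives $\int|\nabla v|^2=0$, forcing $v\equiv 0$), Proposition~\ref{prop1:14oct13} yields that $(-\Delta,\partial_\nu)$ is an isomorphism. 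Consequently there is a unique $v\in W^{2,p}_{\delta}$ satisfying
\[
-\Delta v = \Delta\tilde\omega \quad \text{on }\cM, \qquad \partial_\nu v = 0 \quad \text{on }\Sigma.
\]
Set $\omega = \tilde\omega + v$. By construction $-\Delta\omega=0$, $\partial_\nu\omega=0$, and since the embedding $W^{2,p}_\delta\hookrightarrow C^0_\delta$ forces $v\to 0$ on each end, $\omega\to A_j$ on $E_j$.

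For uniqueness and the two-sided bound, I would use the weak maximum principle (Lemma~\ref{wmaxprinc}) with $V\equiv 0$ and $\mu\equiv 0$. If $\omega_1,\omega_2$ are two solutions, then $w=\omega_1-\omega_2$ is harmonic with vanishing Neumann data and $w\to 0$ on every end; applying Lemma~\ref{wmaxprinc} to both $w$ and $-w$ forces $w\equiv 0$. For the bound, let $M=\max_j A_j$ and apply Lemma~\ref{wmaxprinc} to $M-\omega$, whose asymptotic limits $M-A_j\geqs 0$ are nonnegative; this gives $\omega\leqs M$, and the symmetric argument with $\omega-\min_j A_j$ gives the lower bound.

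The main obstacle is justifying that the pure Neumann Laplacian is an isomorphism on the weighted spaces; this is not literally the statement of Proposition~\ref{prop1:14oct13}, since $\cY_g>0$ there refers to the metric $g$ on $\cM$ (which we are not assuming here), so one must argue independently that the kernel is trivial via decay plus a Green's-identity argument on an exhausting sequence of domains with controlled boundary terms coming from $E_i\cap\{r=R\}$ as $R\to\infty$. The Fredholmness portion of Proposition~\ref{prop1:14oct13} is independent of the sign condition on $\cY_g$, so combined with trivial kernel it delivers the surjectivity needed above.
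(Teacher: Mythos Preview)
Your proposal is correct and follows essentially the same approach as the paper: build an ansatz $\tilde\omega=\sum_j A_j\chi_j$ with cutoffs on the ends, correct it by a decaying $v\in W^{2,p}_\delta$ solving the inhomogeneous Neumann problem, and deduce uniqueness and the two-sided bound from the maximum principle. Two minor differences worth noting: the paper does not bother making the cutoffs vanish near $\Sigma$, so it allows nonzero Neumann data for the correction step (your choice is cleaner); and for the bounds the paper argues directly with test functions $(\omega-\epsilon)^{\pm}$ and integration by parts rather than invoking Lemma~\ref{wmaxprinc}, though the content is identical. You are actually more careful than the paper on one point: the paper simply asserts the existence of the correcting function $\omega_2$ without comment, whereas you correctly flag that Proposition~\ref{prop1:14oct13} gives the isomorphism for $\cP_1=(-\Delta+c_nR,\partial_\nu+d_nH)$ only under $\cY_g>0$, and that for the bare $(-\Delta,\partial_\nu)$ one should instead combine the Fredholm-index-zero statement with a direct kernel computation via decay plus Green's identity. (Your aside about ``the constant-zero background metric'' is a non sequitur, but you immediately supply the right fix.)
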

\begin{proof}
Let $\omega_1 = \sum \chi_j A_j$, where each $\chi_j$ is a cutoff
function which equals $1$ on the end $E_j$ and is zero outside a neighborhood of this end. 
Then $\Delta \omega_1 \in L^p_{\delta-2}$ and $\partial_{\nu}\omega_1 \in W^{1-\frac1p,p}$, so there
exists a function $\omega_2 \in W^{2,p}_\delta$ such that $\Delta \omega_2 = \Delta \omega_1$ and $\partial_{\nu} \omega_2 = \partial_{\nu}\omega_1$.
Therefore $\omega = \omega_1- \omega_2$ has the desired properties.   The fact that $\omega$ is unique follows 
by assuming that two such functions exist and showing that the difference must be identically zero. 

To show that $\min A_j \le \omega \le \max A_j$, we first pick $ \e < \min A_j$ and define $v = (\omega-\e)^- \in W^{1,2}$.
Given the asymptotic behavior of $\omega$, the function $v$ has compact support.  Therefore,
\begin{align*}
\|\nabla v\|^2_{L^2} = \int_{\cM} \nabla \omega \nabla v = 0,
\end{align*}  
which implies that $v \equiv 0$. Therefore $\omega \ge \e$, and by letting $\e \to \min A_j$ we have
$\omega \ge \min A_j$.  To show that $\omega \le \max A_j$, we make a similar argument using
$v = (\omega - \e)^+$ for $\e > \max A_j$ and let $\e \to \max A_j$.
\end{proof}

With these results in hand, we are now ready to address the existence of solutions to
\eqref{eq1:22dec13}.
The proof of the following theorem provides an iterative method to construct solutions
to this problem given sub-and supersolutions, where we recall that a sub-solution $\phi_-$
for \eqref{eq1:22dec13} satisfies
\begin{align*}
-&\Delta \phi_- \le f_1(x,\phi_-), \\
&\partial_{\nu}\phi_- \le f_2(x,\phi_-) ,
\end{align*}
and a supersolutions $\phi_+$ satisfies
\begin{align*}
-&\Delta \phi_+ \ge f(x,\phi_+), \\
&\partial_{\nu}\phi_+ \ge f_2(x,\phi_+). 
\end{align*}
Our argument is based on
the construction in \cite{yCBjIjY00}.

\begin{theorem}\label{thm1:23jan14}
Suppose that \eqref{eq1:22dec13} admits a subsolution and supersolution $\phi_-, \phi_+ \in W^{2,p}_{\gamma}$, and assume that
$$
\ell \le \phi_- \le \phi_+ \le m, \quad [l,m] \subset I_1 \cap I_2,
$$
and 
$$
\lim_{|x| \to \infty} \phi_- = \alpha, \quad \lim_{|x|\to \infty}\phi_+ = \beta.
$$
Let $\omega$ be as in Proposition~\ref{prop1:23dec13}, where each
$A_j$ satisfies $\alpha \le A_j \le \beta$.
Then Eq.\eqref{eq1:22dec13} admits a solution $\phi$ such that
$$
\phi_- \le \phi \le \phi_+, \quad \phi - \omega \in W^{2,p}_{\gamma}.
$$
\end{theorem}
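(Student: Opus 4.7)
The plan is to prove Theorem~\ref{thm1:23jan14} via the classical monotone iteration method (the method of sub- and supersolutions), adapted to the asymptotically Euclidean setting by using $\omega$ to encode the prescribed asymptotic limits. Because each $f_i(x,y)=\sum_j a_{ij}(x)b_{ij}(y)$ with $b_{ij}$ smooth and $[\ell,m]\subset I_1\cap I_2$ compact, the partial derivative $\partial_y f_i(x,y)$ is pointwise bounded on $[\ell,m]$ by $M\sum_j|a_{ij}(x)|$ for some constant $M$. I would therefore choose a nonnegative shift $K(x)=M\sum_{i,j}|a_{ij}(x)|$, which lies in $L^p_{\gamma-2}(\cM)$ with trace in $W^{1-\frac1p,p}(\Sigma)$, large enough that
\[
y \mapsto K(x)y+f_i(x,y) \quad\text{is nondecreasing in $y$ on $[\ell,m]$ for a.e.\ $x$.}
\]

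With this shift in hand, I would set $\phi_0=\phi_+$ and define $\phi_{k+1}$ as the unique element of $\omega+W^{2,p}_\gamma$ solving the linear Robin problem
\begin{align*}
-\Delta\phi_{k+1}+K\phi_{k+1} &= K\phi_k+f_1(x,\phi_k) \quad\text{on $\cM$,}\\
\partial_\nu\phi_{k+1}+K\phi_{k+1} &= K\phi_k+f_2(x,\phi_k) \quad\text{on $\Sigma$.}
\end{align*}
Writing $\phi_{k+1}=\omega+v_{k+1}$ and using $\Delta\omega=0$, $\partial_\nu\omega=0$, this reduces to a linear problem for $v_{k+1}\in W^{2,p}_\gamma$ whose right-hand sides sit in $L^p_{\gamma-2}$ and $W^{1-\frac1p,p}(\Sigma)$ (since $K\in L^p_{\gamma-2}$, $\phi_k$ is bounded, and the $f_i$-structure yields $L^p_{\gamma-2}$ output). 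Invertibility of the shifted operator $(-\Delta+K,\partial_\nu+K)$ on $W^{2,p}_\gamma$ follows from Proposition~\ref{prop1:14oct13} together with the fact that $K\ge 0$ and Lemma~\ref{wmaxprinc} force a trivial kernel, so the Fredholm index-zero operator is an isomorphism. Monotonicity of the sequence is then established inductively by the maximum principle: subtracting the supersolution inequality for $\phi_+$ from the equation for $\phi_1$ and invoking monotonicity of $Ky+f_i(x,y)$ gives
\[
-\Delta(\phi_+-\phi_1)+K(\phi_+-\phi_1)\geqs 0,\qquad \partial_\nu(\phi_+-\phi_1)+K(\phi_+-\phi_1)\geqs 0,
\]
with asymptotic values $\beta-A_j\geqs 0$ on each $E_j$; Lemma~\ref{wmaxprinc} yields $\phi_1\leqs\phi_+$. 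The same argument with $\phi_-$ in place of $\phi_+$ and $A_j-\alpha\geqs 0$ on each end gives $\phi_1\geqs \phi_-$, and then for $k\geqs 1$ both differences $\phi_{k+1}-\phi_-$ and $\phi_k-\phi_{k+1}$ have zero asymptotic value and nonnegative shifted equation, producing the chain $\phi_-\leqs\phi_{k+1}\leqs\phi_k\leqs\phi_+$.

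Finally, the monotone bounded sequence $\{\phi_k\}$ converges pointwise to a limit $\phi$ with $\phi_-\leqs\phi\leqs\phi_+$, and the linear estimate~\eqref{eq3:2dec13} applied to each $v_k=\phi_k-\omega$ yields a $k$-uniform bound in $W^{2,p}_\gamma$ because the right-hand sides are uniformly bounded in $L^p_{\gamma-2}\times W^{1-\frac1p,p}(\Sigma)$ (all $\phi_k$ lie in $[\ell,m]$). Extracting a $W^{2,p}_\gamma$-weakly convergent subsequence, hence $C^0_{\delta}$-strong by the compact embedding, the monotone convergence of $\phi_k\to\phi$ combined with dominated convergence lets me pass to the limit in both $f_i(x,\phi_k)$ and the linear terms to conclude that $\phi-\omega\in W^{2,p}_\gamma$ solves~\eqref{eq1:22dec13}. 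The main obstacle I anticipate is the transitional comparison $\phi_+\geqs\phi_1$: here the asymptotic gap $\beta-A_j$ (rather than $0$) must be shown to be nonnegative at infinity before applying Lemma~\ref{wmaxprinc}, which is exactly why the hypothesis $\alpha\leqs A_j\leqs\beta$ is built into the statement; secondarily, verifying that the potential $K\in L^p_{\gamma-2}$ (not a constant) still gives an isomorphism on $W^{2,p}_\gamma$ requires care, but follows from Proposition~\ref{prop1:14oct13} as noted above.
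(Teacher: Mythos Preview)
Your proposal is correct and follows essentially the same monotone iteration scheme as the paper's proof: choose a pointwise shift so that $y\mapsto K(x)y+f_i(x,y)$ is nondecreasing on $[\ell,m]$, solve a shifted linear Robin problem at each step with the $\omega$-ansatz to encode the asymptotics, and use Lemma~\ref{wmaxprinc} together with the hypothesis $\alpha\le A_j\le\beta$ to get a monotone sequence trapped between $\phi_-$ and $\phi_+$. The only cosmetic differences are that the paper starts the iteration at $\phi_-$ (obtaining an increasing sequence) rather than at $\phi_+$, and uses two separate shifts $k_1\in L^p_{\gamma-2}$ and $k_2\in W^{1-1/p,p}(\Sigma)$ rather than a single $K$; your single-$K$ formulation is slightly imprecise since the interior coefficients $a_{1j}$ and boundary coefficients $a_{2j}$ live in different spaces, but this is easily repaired by taking separate shifts as the paper does.
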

\begin{proof}
As in \cite{yCBjIjY00}, the proof is by induction starting with $\phi_-$.  
Let $k_1 \in L^p_{\gamma-2}$ and $k_2 \in W^{1-\frac1p,p}$ be positive functions such that 
$$
k_1(x) \ge \sup_{l \le y \le m} \frac{\partial}{\partial y}f_1(x,y) , \quad \text{and} \quad k_2(x) \ge \sup_{l \le y \le m}  \frac{\partial}{\partial y}f_2(x,y). 
$$

We recall that by Proposition~\ref{prop1:23dec13}, $\min A_j \le \omega \le \max A_j$.
Setting $\phi_1 = \omega + u_1$, where $u_1$ satisfies
\begin{align}
-&\Delta u_1 + k_1u_1 = f_1(x,\phi_-) + k_1(\phi_- - \omega) \\
&\partial_{\nu}u_1 + k_2u_1 = f_2(x,\phi_-)+k_2(\phi_- -\omega) \nonumber, 
\end{align}
we conclude that
\begin{align*}
-&\Delta (\phi_1-\phi_-) + k_1(\phi_1-\phi_-)  \ge 0 \\
&\partial_{\nu}(\phi_1-\phi_-) + k_2(\phi_1-\phi_-)  \ge 0 \nonumber. 
\end{align*}
By assumption, $\phi_1 - \phi_- $ tends to $ A_i-\alpha \ge 0$ on each end $E_i$ and Lemma~\ref{wmaxprinc} implies
that
$$\phi_1 \ge \phi_- \quad \text{on $\cM$}.$$
Similarly,
\begin{align}\label{eq2:22dec13}
-&\Delta (\phi_+ -\phi_1) + k_1(\phi_+ - \phi_1) \ge f_1(x,\phi_+)-f_1(x,\phi_-)+k_1(\phi_+-\phi_-), \\
 &\partial_{\nu}(\phi_+-\phi_1)+k_2(\phi_+-\phi_1) \ge f_2(x,\phi_+)-f_2(x,\phi_-) + k_2(\phi_+-\phi_-). \nonumber
\end{align}
By our choice of $k_i(x)$, the function
\begin{align}\label{eq3:22dec13}
g_i(x,y) = f_i(x,y) + k_i(x)y
\end{align}
is monotonic increasing in the variable $y$.
Therefore
both equations in \eqref{eq2:22dec13} are nonnegative. Because $\phi_+ - \phi_1$ tends to
$\beta - A_i \ge 0$ on each end $E_i$, we may apply the
maximum principle~\ref{wmaxprinc} again to conclude that $\phi_1 \le \phi_+$.

We now define $\phi_n = \omega+u_n$ inductively by letting
\begin{align*}
-&\Delta u_n + k_1u_n = f_1(x,\phi_{n-1}) + k_1u_{n-1} \\
&\partial_{\nu}u_n + k_2u_n = f_2(x,\phi_{n-1})+k_2u_{n-1}  \nonumber. 
\end{align*} 
Standard elliptic theory implies that $u_n \in W^{2,p}_{\gamma}$ for each $n$ and $u_n \to \omega$ on each end.
Assume that for all $1 \le i\le k \le n-1$, $\phi_i$ and $\phi_k$ are defined as above and satisfy $\phi_- \le \phi_i \le \phi_k \le \phi_+$.
Then we have
\begin{align*}
-&\Delta (\phi_n -\phi_{n-1}) + k_1(\phi_n - \phi_{n-1}) = f_1(x,\phi_{n-1})-f_1(x,\phi_{n-2})+k_1(\phi_{n-1}-\phi_{n-2}) \ge 0, \\
 &\partial_{\nu}(\phi_n-\phi_{n-1})+k_2(\phi_n-\phi_{n-1}) = f_2(x,\phi_{n-1})-f_2(x,\phi_{n-2}) + k_2(\phi_{n-1}-\phi_{n-2}) \ge 0, \nonumber
\end{align*}
where the above inequalities follow from the inductive hypothesis and the fact that \eqref{eq3:22dec13} is monotonic increasing.
As $\phi_n - \phi_{n-1} \to 0$ on each end $E_i$, the maximum principle implies that $\phi_n \ge \phi_{n-1}$.  Finally, an application of the maximum principle to
\begin{align*}
-&\Delta (\phi_+ -\phi_{n}) + k_1(\phi_+ - \phi_{n}) \ge f_1(x,\phi_+)-f_1(x,\phi_{n-1})+k_1(\phi_+-\phi_{n-1}) \ge 0, \\
 &\partial_{\nu}(\phi_+-\phi_{n})+k_2(\phi_+-\phi_{n}) \ge f_2(x,\phi_{+})-f_2(x,\phi_{n-1}) + k_2(\phi_{+}-\phi_{n-1}) \ge 0, \nonumber
\end{align*}
implies that $\phi_n \le \phi_+$.  

Therefore, the sequence of functions $\phi_n \in W^{2,p}_{\gamma}$ is monotonic increasing and bounded above by $\phi_+$.  Thus the sequence converges to a function
$\phi(x) = \omega + u(x)$, with $\phi_- \le \phi \le \phi_+$.  A standard bootstrapping argument as in \cite{yCBjIjY00} then implies that $\phi(x)$ has the desired regularity. 
%\mnote{Check this. Also, add result concerning uniqueness of solutions up to asymptotic limits.}
\end{proof}

\begin{theorem}\label{uniq}
Let $(\cM,g)$ be an $n$-dimensional,
asymptotically Euclidean manifold with boundary $\Sigma$ of class $W^{2,p}_{\gamma}$, where $\gamma \in (2-n,0)$ and $p > n$.
Let $\tau \in W^{1,p}_{\gamma-1}, {\sigma\in L^{2p}_{\gamma-1}}, \rho \in L^p_{\gamma-2}$, $\theta_- \in W^{1-\frac1p,p}(\Sigma)$ be fixed data such that $((n-1)\tau+|\theta_-|) \ge 0$ on $\Sigma$, and suppose that $W \in W^{2,p}_{\delta}$ is
given, where $2-n < \delta \leqs \gamma/2$ and $S(\nu,\nu) \geqs 0$.  Additionally assume that $A_i \in (0,\infty)$ and that $\omega$ is the associated smooth, harmonic function such that $\omega \to A_i$ on each end $E_i$.
Finally, assume that the Lichnerowicz equation \eqref{eq3:27nov13} with boundary conditions \eqref{eq5:11july13}
has a sub- and supersolution $\phi_-$ and $\phi_+$ which asymptotically bound $\omega$.  Then there exists a unique solution $\phi>0$ 
to the Lichnerowicz equation \eqref{eq3:27nov13} with boundary conditions \eqref{eq5:11july13} such that $\phi-\omega \in W^{2,p}_{\gamma}$.
\end{theorem}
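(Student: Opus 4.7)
The plan splits into existence, carried out by reducing the problem to the abstract semilinear Robin problem of Theorem~\ref{thm1:23jan14}, and uniqueness, carried out by a ratio--based maximum-principle argument exploiting the monotone structure of the Lichnerowicz nonlinearity.

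First I would establish existence by recasting the Lichnerowicz equation together with the boundary condition \eqref{eq5:11july13} in the form \eqref{eq1:22dec13} with
\begin{align*}
f_1(x,y) &= -c_n R\, y - b_n\tau^2 y^{N-1} + c_n|\sigma+\cL W|^2 y^{-N-1} + c_n\rho\, y^{-N/2},\\
f_2(x,y) &= -d_n H\, y - \Bigl(d_n\tau - \tfrac{d_n}{n-1}\theta_-\Bigr) y^{N/2} + \tfrac{d_n}{n-1} S(\nu,\nu)\, y^{-N/2}.
\end{align*}
Each $f_i$ admits the required decomposition $\sum_j a_{ij}(x)b_{ij}(y)$ with $b_{ij}$ smooth on $I_1=I_2=(0,\infty)$. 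The regularity hypotheses on $(g,\tau,\sigma,W,\rho,\theta_-,H)$, together with the multiplication and trace properties of weighted Sobolev spaces for $p>n$, place the coefficients $a_{ij}$ in $L^p_{\gamma-2}(\cM)$ on the interior and $W^{1-1/p,p}(\Sigma)$ on the boundary. With the given sub- and supersolutions $\phi_-\le\phi_+$ asymptotically bounding $\omega$, Theorem~\ref{thm1:23jan14} then produces a positive solution $\phi$ satisfying $\phi_-\le\phi\le\phi_+$ and $\phi-\omega\in W^{2,p}_\gamma$.

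For uniqueness, suppose $\phi_1,\phi_2>0$ both solve the Lichnerowicz problem with $\phi_i-\omega\in W^{2,p}_\gamma$. The embedding $\mathcal{H}+W^{2,p}_\gamma\hookrightarrow C^0$ together with $\phi_i\to A_j>0$ on each end $E_j$ gives that both $\phi_i$ are bounded above and bounded below by positive constants on $\cM$. Set $v:=\phi_1/\phi_2$; then $v$ is strictly positive and bounded with $v\to 1$ on every end. I would show $v\equiv 1$. A direct computation using $\nabla v = (\nabla\phi_1 - v\nabla\phi_2)/\phi_2$ and both Lichnerowicz equations yields the identity
\[
-\Delta v - \tfrac{2}{\phi_2}\nabla\phi_2\cdot\nabla v = \tfrac{1}{\phi_2}\Bigl[-b_n\tau^2\phi_2^{N-1}(v^{N-1}-v) + c_n|\sigma+\cL W|^2\phi_2^{-N-1}(v^{-N-1}-v) + c_n\rho\phi_2^{-N/2}(v^{-N/2}-v)\Bigr],
\]
the $c_n R$ contributions cancelling identically because $v\phi_2\equiv\phi_1$. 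Since $v^q-v$ has the same sign as $(q-1)(v-1)$ and $N>2$, the right-hand side is non-positive whenever $v\ge 1$. Setting $w:=v-1$ this reads $Lw\le 0$ on $\{w\ge 0\}$, where $Lw := -\Delta w - \tfrac{2}{\phi_2}\nabla\phi_2\cdot\nabla w$ is a second-order elliptic operator with no zeroth-order term. If $\sup v > 1$, then $w$ attains a positive supremum at some $x_0\in\cM\cup\Sigma$, attainment being secured by $v\to 1$ on each end. At an interior $x_0$, the strong maximum principle for $L$ forces $w$ constant on the component of $x_0$, contradicting $w\to 0$ at infinity. At a boundary $x_0\in\Sigma$, an analogous computation from the Robin condition gives
\[
\phi_2\,\partial_\nu v(x_0) = -\Bigl(d_n\tau - \tfrac{d_n}{n-1}\theta_-\Bigr)\phi_2^{N/2}(v^{N/2}-v) + \tfrac{d_n}{n-1}S(\nu,\nu)\phi_2^{-N/2}(v^{-N/2}-v) \le 0,
\]
invoking the hypotheses $(n-1)\tau+|\theta_-|\ge 0$ and $S(\nu,\nu)\ge 0$; this contradicts the strict Hopf inequality $\partial_\nu v(x_0)>0$. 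Hence $\sup v\le 1$, and repeating the argument with $v$ replaced by $v^{-1}$ gives $v\equiv 1$, i.e., $\phi_1\equiv\phi_2$.

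The principal obstacle is the possibly indefinite sign of $c_nR$ and $d_nH$, which defeats a naive linearization of $F(\phi_1)-F(\phi_2)$. Passing to the ratio $v=\phi_1/\phi_2$ is essential because the curvature contributions cancel identically in the derived equation, leaving a second-order elliptic operator with no zeroth-order term together with monotone nonlinearities whose signs are controlled by the stated hypotheses. A secondary technical point is the attainment of $\sup v$, which relies on the decay $\phi_i-\omega\to 0$ on each end afforded by $\gamma<0$ and the embedding \eqref{embed}; this same decay is what rules out the degenerate constant-$v$ alternative in the strong maximum principle step.
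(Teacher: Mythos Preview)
Your proof is correct, and the underlying mechanism is the same as the paper's: both exploit the ratio of two solutions so that the indefinite curvature terms $c_nR$ and $d_nH$ drop out, leaving only the sign-definite nonlinear pieces. The packaging, however, differs in an instructive way.

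The paper sets $u=\phi_1^{-1}\phi_2$ and invokes the conformal covariance of the Lichnerowicz problem: working in the metric $\phi_1^{N-2}g$, the transformed scalar and boundary mean curvatures are read off directly from the Lichnerowicz equation satisfied by $\phi_1$, which is exactly what produces the cancellation you obtain by hand. The resulting equation is then algebraically rearranged by factoring $(u-1)$, yielding a \emph{linear} Robin problem $(-\Delta_{\phi_1^{N-2}g}+V)(u-1)=0$, $(\partial_\nu+\mu)(u-1)=0$ with $V,\mu\ge 0$. At that point the weak maximum principle of Lemma~\ref{wmaxprinc} alone suffices, with no appeal to Hopf or to a strong maximum principle with drift.

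Your route keeps the original metric and carries the drift term $-\tfrac{2}{\phi_2}\nabla\phi_2\cdot\nabla$ explicitly; this operator is, up to the conformal factor $\phi_2^{N-2}$, precisely $-\Delta_{\phi_2^{N-2}g}$, so the two computations are the same identity viewed from different angles. The cost of not factoring out $(v-1)$ is that you must invoke the strong maximum principle and the Hopf boundary lemma for operators with first-order terms. These are standard (and the needed regularity is available since $g\in W^{2,p}_\gamma$ with $p>n$ gives $C^{1,\alpha}$ coefficients and boundary), but they are not among the tools developed in the appendix, whereas the paper's argument stays entirely within Lemma~\ref{wmaxprinc}. In short: your argument is valid and conceptually equivalent; the paper's version trades the Hopf step for an algebraic factorization and thereby remains self-contained.
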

\begin{proof}
The fact that a solution exists follows from Theorem~\ref{thm1:23jan14}.  To see that this solution is unique, suppose that $\phi_1$ and $\phi_2$ are both solutions.
That is, suppose that for $i \in\{1,2\}$
\begin{align*}
-\Delta \phi_i + c_n R \phi_i + b_n \tau^2\phi_i^{N-1}-c_n|\sigma + \cL W|^2\phi_i^{-N-1} - &c_n \rho\phi_i^{-\frac{N}{2}} = 0 \quad \text{on $\cM$}, \\
\partial_{\nu}\phi_i+d_nH\phi_i+\left(d_n \tau - \frac{d_n}{n-1}\theta_- \right)\phi_i^{\frac{N}{2}}-\frac{d_n}{n-1}S(&\nu,\nu)\phi_i^{-\frac{N}{2}} = 0 \quad \text{on $\Sigma$}. 
\end{align*}
The conformal transformation properties of the scalar curvature and boundary, mean extrinsic curvature then imply that
\begin{align*}
c_nR(\phi_i^{N-2}g) \phi_i^{N-1} = -b_n \tau^2\phi_i^{N-1}+c_n|\sigma + \cL W|^2\phi_i^{-N-1} + &c_n \rho\phi_i^{-\frac{N}{2}}, \\
d_n H(\phi_i^{N-2}g) \phi_i^{\frac{N}{2}} = -\left(d_n \tau - \frac{d_n}{n-1}\theta_- \right)\phi_i^{\frac{N}{2}}+\frac{d_n}{n-1}S(&\nu,\nu)\phi_i^{-\frac{N}{2}},
\end{align*}
where $R(\phi_i^{N-2}g)$ and $H(\phi_i^{N-2}g)$ denote the scalar curvature and boundary mean curvature with
respect to the metric $\phi_i^{N-2}g$. 
Setting $u = \phi_1^{-1}\phi_2$, we clearly have that $u -1 \in W^{2,p}_{\gamma}$.  Moreover, the above equation implies that
\begin{align}\label{eq1:26jan14}
-&\Delta_{\phi_1^{N-2}g}u + c_n(-b_n \tau^2+c_n|\sigma + \cL W|^2\phi_1^{-2N} + c_n \rho\phi_1^{-\frac{3N}{2}+1})u = \\
&\quad \quad \quad \quad \quad \quad \quad \quad \quad \quad\quad \quad \quad c_n( -b_n \tau^2+c_n|\sigma + \cL W|^2\phi_2^{-2N} + c_n \rho\phi_2^{-\frac{3N}{2}+1} )u^{N-1} ,\nonumber \\
&\partial_{\nu}u+ d_n\left(- \left(d_n \tau - \frac{d_n}{n-1}\theta_- \right)+\frac{d_n}{n-1}S(\nu,\nu)\phi_1^{-N}   \right)u = \nonumber \\
&\quad \quad \quad \quad \quad \quad \quad \quad \quad \quad\quad \quad \quad d_n\left(- \left(d_n \tau - \frac{d_n}{n-1}\theta_- \right)+\frac{d_n}{n-1}S(\nu,\nu)\phi_2^{-N}  \right) u^{\frac{N}{2}}, \nonumber
\end{align}
where $\partial_{\nu}$ is with respect to $\phi_1^{N-2}g$.  We note that the above equations have the form
\begin{align*}
-\Delta_{\phi_1^{N-2}g}u +(a + b \phi_1^{-2N} + c \phi_1^{-\frac{3N}{2}+1})u&= (a + b \phi_2^{-2N} + c \phi_2^{-\frac{3N}{2}+1})u^{N-1} \quad \text{on $\cM$},\\
\partial_{\nu} u + (e + f\phi_1^{-N})u &= (e + f \phi_2^{-N})u^{\frac{N}{2}} \quad \text{on $\Sigma$},
\end{align*}
where $a \le 0, b\ge 0, c\ge 0$ and $e \le 0, f \ge 0$.  Rearranging these two equations, we obtain
%\mnote{We need to fix this margin problem. -mjh}
%\mnote{Proposed fix. -cm}
\begin{align*}
-\Delta_{\phi_1^{N-2}g}u &+ \frac{1}{(u-1)}\Big(a(1-u^{N-2})\big.\\
&+\left.b\phi_1^{-2N}(1-u^{-N-2})+c\phi_1^{-\frac{3N}{2}+1}(1-u^{-\frac{N}{2}-1})\right)u(u-1) = 0 , \\
\partial_{\nu} u &+ \frac{1}{(u-1)}\left(e(1-u^{\frac{N}{2}-1})+f\phi_1^{-N}(1-u^{-\frac{N}{2}-1})\right)u(u-1) = 0.
\end{align*}
We observe that for $m > 0$, $\frac{u^m-1}{(u-1)} > 0$ given that $u > 0$.  This also implies that $\frac{(1-u^{-m})}{(u-1)} = u^{-m}\frac{u^m-1}{(u-1)} > 0$.
Given the assumptions on $a,b,c, e$ and $f$, we conclude that
\begin{align*}
\frac{1}{(u-1)}\left(a(1-u^{N-2})+b\phi_1^{-2N}(1-u^{-N-2})+c\phi_1^{-\frac{3N}{2}+1}(1-u^{-\frac{N}{2}-1})\right)u \geqs 0 ~~~\text{on $\cM$}, \\
\frac{1}{(u-1)}\left(e(1-u^{\frac{N}{2}-1})+f\phi_1^{-N}(1-u^{-\frac{N}{2}-1})\right)u \geqs 0 ~~~\text{on $\Sigma$}.
\end{align*}
The fact that $u -1 \in W^{2,p}_{\gamma}$ and Lemma~\ref{wmaxprinc} imply that $u-1 \ge 0$ on $\cM$.  Thus, $\phi_2 \ge \phi_1$.  We may obtain the inequality $\phi_1 \ge \phi_2$
by reversing the roles of $\phi_1$ and $\phi_2$ in the above argument.  Thus, $\phi_1 = \phi_2$ and the solution is unique.
\end{proof}

\bibliographystyle{abbrv}
\bibliography{Caleb,Caleb4,Caleb5-NEW,mjh,ref-gn,books,papers}

%\vspace*{0.5cm}

\end{document}